\DeclareMathOperator*{\E}{\mathbb{E}}
\let\Pr\relax
\DeclareMathOperator*{\Pr}{\mathbb{P}}
\DeclareMathOperator*{\Var}{Var}
\DeclareMathOperator*{\poly}{poly}
\DeclareMathOperator*{\Unif}{Unif}
\DeclareMathOperator*{\lsb}{lsb}
\DeclareMathOperator*{\median}{median}
\DeclareMathOperator*{\EQ}{EQ}
\DeclareMathOperator*{\polylog}{polylog}
\newcommand\Oh{\mathcal{O}}
\renewcommand{\log}{\lg}
\renewcommand{\d}{\,\mathrm{d}}
\newcommand\bR{\mathbb{R}}
\newcommand\bN{\mathbb{N}}
\newcommand\bZ{\mathbb{Z}}
\renewcommand\P{\Pr}
\newcommand{\Gsamp}{\Xi}
\newtheorem{theorem}{Theorem}
\newtheorem{lemma}[theorem]{Lemma}
\newtheorem{corollary}[theorem]{Corollary}
\newtheorem{Definition}[theorem]{Definition}
\newtheorem{Remark}[theorem]{Remark}
\newtheorem{fact}[theorem]{Fact}
\newcommand{\EquationName}[1]{\label{eq:#1}}
\newcommand{\LemmaName}[1]{\label{lem:#1}}
\newcommand{\CorollaryName}[1]{\label{cor:#1}}
\newcommand{\SectionName}[1]{\label{sec:#1}}
\newcommand{\TheoremName}[1]{\label{thm:#1}}
\newcommand{\RemarkName}[1]{\label{rem:#1}}
\newcommand{\FactName}[1]{\label{fac:#1}}
\newcommand{\Equation}[1]{Eq.\:\eqref{eq:#1}}
\newcommand{\LemmaR}[1]{Lemma~\ref{lem:#1}}
\newcommand{\Corollary}[1]{Corollary~\ref{cor:#1}}
\newcommand{\Section}[1]{Section~\ref{sec:#1}}
\newcommand{\TheoremR}[1]{Theorem~\ref{thm:#1}}
\newcommand{\RemarkR}[1]{Remark~\ref{rem:#1}}
\newcommand{\Fact}[1]{Fact~\ref{fac:#1}}
\title{Optimal streaming and tracking distinct elements \\ with high probability}
\author{
	Jaros\l aw B\l asiok\thanks{ John A. Paulson School of Engineering and Applied Sciences, Harvard University, 33 Oxford Street,
Cambridge, MA 02138, USA. Email: {\tt jblasiok@g.harvard.edu.} Supported by ONR grant N00014-15-1-2388.
}}
\begin{document} 

\maketitle

\begin{abstract}
The distinct elements problem is one of the fundamental problems in streaming algorithms --- given a stream of integers in the range $\{1,\ldots,n\}$, we wish to provide a $(1+\varepsilon)$ approximation to the number of distinct elements in the input. After a long line of research an optimal solution for this problem with constant probability of success, using $\mathcal{O}(\frac{1}{\varepsilon^2}+\log n)$ bits of space, was given by Kane, Nelson and Woodruff in 2010.

The standard approach used in order to achieve low failure probability $\delta$ is to take the median of $\log \delta^{-1}$ parallel repetitions of the original algorithm. We show that such a multiplicative space blow-up is unnecessary: we provide an optimal algorithm using $\mathcal{O}(\frac{\log \delta^{-1}}{\varepsilon^2} + \log n)$ bits of space --- matching known lower bounds for this problem. That is, the $\log\delta^{-1}$ factor does not multiply the $\log n$ term. This settles completely the space complexity of the distinct elements problem with respect to all standard parameters.

We consider also the \emph{strong tracking} (or \emph{continuous monitoring}) variant of the distinct elements problem, where we want an algorithm which provides an approximation of the number of distinct elements seen so far, at all times of the stream. We show that this variant can be solved using $\mathcal{O}(\frac{\log \log n + \log \delta^{-1}}{\varepsilon^2} + \log n)$ bits of space, which we show to be optimal. 
\end{abstract}

\section{Introduction}

Estimating the number of distinct elements in the data stream is one of the first, and one of the most fundamental problems in streaming algorithms. In this problem, we observe a \emph{data stream}, i.e. a sequence of elements  $x^{(1)}, x^{(2)}, \ldots, x^{(T)} \in \{1, \ldots n\}$, and we wish to provide a $(1 + \varepsilon)$-approximation for the number of distinct elements in this sequence, using small space $S$. This can be trivially achieved with $\Oh(\min(n, T \log n))$ bits of memory by either storing all elements encountered in the stream, or  by storing a bitmask, keeping a single bit for every possible element of the universe. We wish to provide a probabilistic algorithm using significantly smaller space (allowing for small failure probability).

This problem was first studied by Flajolet and Martin in their seminal paper \cite{DBLP:conf/focs/FlajoletM83} in FOCS 1983, which started a long line of research on subsequently improved algorithms \cite{DBLP:conf/stoc/AlonMS96, DBLP:conf/random/Bar-YossefJKST02, DBLP:conf/spaa/GibbonsT01, Durand2003,DBLP:journals/ton/EstanVF06,flajolet2007hyperloglog, DBLP:conf/vldb/Gibbons01}.  

Kane, Nelson and Woodruff in 2010 \cite{DBLP:conf/pods/KaneNW10} proposed an optimal algorithm for counting the number of distinct elements in the stream with failure probability $\frac{1}{3}$ --- their algorithm provided an $(1+\varepsilon)$ approximation to the a number of distinct elements using $\Oh(\frac{1}{\varepsilon^2} + \log n)$ bits --- the matching lower bound has been shown prior to this \cite{Woodruff:2004:OSL:982792.982817,DBLP:conf/stoc/AlonMS96,DBLP:journals/eccc/BrodyC09}. The standard black-box method of reducing the failure probability of estimation algorithm of this kind is to repeat it independently $\Oh(\log \delta^{-1})$ times in parallel, and use the median of reported answers as the final estimation. This method, applied to the algorithm mentioned above, uses $\Oh(\log \delta^{-1} (\frac{1}{\varepsilon^2} + \log n))$ bits of space.

On the other hand, Jayram and Woodruff in \cite{DBLP:journals/talg/JayramW13} developed a technique for proving lower bounds for streaming problems in the high success probability regime. Their technique allowed them to show that for number of natural streaming problems the naive repetition method is optimal --- for example this is the case for estimation of the $\ell_p$ pseudonorm  (with $p \in [0, 2]$) of frequency vector in the so called \emph{strict turnstile} streaming model. In the same paper they proved a lower bound for the distinct elements problem of form $\Omega(\frac{\log \delta^{-1}}{\varepsilon^2})$. For constant $\varepsilon$, this left a gap between an upper bound $\Oh(\log \delta^{-1} \log n)$ and lower bound $\Omega(\log n + \log \delta^{-1})$. 

It was known that one should \emph{not} expect a lower bound $\Omega(\log \delta^{-1} \log n)$ for this problem. Already \cite{DBLP:conf/pods/KaneNW10} showed that for some constant $\varepsilon$, one can achieve failure probability $\delta = \frac{1}{\poly \log n}$ using only $\Oh(\log n)$ bits, and in \cite{DBLP:journals/talg/JayramW13} it was observed that for every constant $\varepsilon$ there is an algorithm using $\Oh(\log n)$ bits with failure probability $\delta = \frac{\log \log n}{\log n}$. In this paper we completely resolve the question about space complexity of the distinct elements problem in the high success probability regime, showing that the Jayram, Woodruff lower bound was optimal.
\paragraph{Continuous monitoring}
Recently, the space complexity of \emph{tracking} problems in data streams has been considered --- namely we say that streaming algorithm provides \emph{strong tracking} of a statistic $f$ of the input stream, if  after every update it reports quantity $\tilde{f}$ such that
\begin{equation*}
    \P(\forall t, (1-\varepsilon) f^{(t)} \leq \tilde{f}^{(t)} \leq (1+\varepsilon) f^{(t)}) > 1 - \delta.
\end{equation*}

The first result of this form that we are aware of, was proven in \cite{DBLP:conf/pods/KaneNW10} as a subroutine for non-tracking estimation of distinct elements. They showed that one can achieve tracking of $F_0$ with some constant approximation factor, using $\Oh(\log n)$ bits of space. The question whether one can achieve strong tracking without the naive union bound over all positions of the stream was explicitly asked later in \cite{HuangTY14}, where they also proposed an algorithm for estimation of the $\ell_p$-pseudonorm of the frequency vector, for $p \in (0,2]$. Their algorithm yields improvement over the baseline approach for very long input streams $\log T = \omega(\log n)$. The strong tracking of $\ell_2$ has been later improved in \cite{BravermanCWY16, BravermanCINWW17}, where interesting results are achieved even in the more standard regime of parameters, with $n$ and $T$ that are polynomially related. They showed that one can solve strong tracking of $\ell_2$ using $\Oh(\frac{\log n \log \log T}{\varepsilon^2})$ bits, as compared to naive bound of form $\Oh(\frac{\log n \log T}{\varepsilon^2})$. The improved algorithm for strong tracking of $\ell_p$ with $0 < p < 2$ was provided in \cite{DBLP:journals/corr/DingBN17}.

\paragraph{Our contribution.}
We provide an optimal streaming algorithm for the distinct elements problem in the high probability regime, using $\Oh(\log n + \frac{\log \delta^{-1}}{\varepsilon^2})$ bits of space. This result completely settles the space complexity of this problem with respect to all standard parameters.

We also show a strong tracking algorithm for the distinct elements with space $\Oh(\log n + \frac{\log \delta^{-1} + \log \log n}{\varepsilon^2})$, together with a matching lower bound --- we prove that $\Omega(\frac{\log \log n}{\varepsilon^2})$ term is necessary. The $\Omega(\frac{\log \delta^{-1}}{\varepsilon^2} + \log n)$ lower bound was already known even for the easier non-tracking version of the distinct elements problem.

This is a first matching lower bound for any strong tracking problem, where the non-trivial algorithm is achievable. This shows a separation between the traditional estimation problem and strong tracking variation when $\varepsilon = o(\sqrt{\frac{\log \log n}{\log n}})$. On the other hand, in the regime $\varepsilon = \Omega(\sqrt{\frac{\log \log n}{\log n}})$ the strong tracking problem is not harder than one-shot estimation (up to constant factors).

The update time of our algorithm is $\poly(\log n, \log \delta^{-1})$. The only bottleneck is the pseudorandom construction described in \Section{sampler-lemma}. In particular, by substituting this construction with a random walk over an expander graph of super-constant degree, it is possible to achieve update time $\Oh(\log \delta^{-1} + \log \log n)$, with slightly worse space complexity $\Oh(\frac{\log \delta^{-1}}{\varepsilon^2} + \log\delta^{-1} \log \log n + \log n)$.

\section{Notation}
For a natural number $m$, by $[m]$ we denote set $\{1,\ldots m\}$.
For a finite set $A$, by $\# A \in \bZ$ or $|A|$ we denote the cardinality of $A$.
For $X \in [2^n]$, we will write $\bar{X} \in \{0, 1\}^n$ to be a bit representation of $X$. For a bitvector $y \in \{0, 1\}^n$ we denote $|y| = \#\{i : y_i = 1\}$. For two bitvectors $x, y$, we take $x \lor y \in \{0,1\}^n$ to be the bitvector with $(x \lor y)_j = 1$ if and only if $x_i = 1$ or $y_i = 1$.

In the paper $n$ will be used to denote the size of the universe from which the elements in the input stream are chosen, $T$ --- the length of the stream, and $x^{(1)}, x^{(2)}, \ldots x^{(t)} \in [n]$ are those elements. Set $S^{(t)} := \{ x^{(1)}, \ldots x^{(t)}\}$ to be the set of all distinct elements seen up to a time step $t$, and $F_0^{(t)} := \# S^{(t)}$. 

Throughout the paper we use notation $A \lesssim B$, to denote the existence of an absolute constant $C$ such that $A \leq C B$, where $A$ and $B$ themselves may depend on a number of parameters. We write $ A \simeq_{1 + \varepsilon} B$ to denote $(1 - \varepsilon)B\leq A \leq (1+\varepsilon) B$. 

\section{Overview of our approach}
\subsection{Constant factor approximation with high probability}
The main goal of \Section{constant-approx} is to show a streaming algorithm that provides an $\Oh(1)$-approximation to the number of distinct elements at all times in the stream (i.e. $\Oh(1)$-strong tracking), with probability $1 - \delta$ using optimal space $\Oh(\log n + \log \delta^{-1})$ bits. That is, we want to provide estimate $\tilde{F}_0(t)$, such that
\begin{equation*}
    \P\left(\forall t,\ \frac{1}{C} F_0^{(t)} \leq \tilde{F}_0^{(t)} \leq C F_0^{(t)}\right) > 1 - \delta
\end{equation*}
where $F_0^{(t)}$ is a number of distinct elements on the input among $x^{(1)}, \ldots x^{(t)}$.

Note that in this regime of parameters, if one has an algorithm estimating number of distinct elements using space complexity $\Oh(\log n + \log \delta^{-1})$, one can set $\delta = \delta'/n$, and apply a union bound over all insertions to the stream, to get a strong tracking algorithm for the same problem with failure probability $\delta'$ and space complexity $\Oh(\log n + \log \delta^{-1})$. As such, we can without loss of generality focus on the strong tracking version, and this stronger guarantee is going to be useful in order to ensure that the algorithm can be implemented using small space.

To discuss the main idea behind our approach, for the sake of presentation we will first consider a random oracle model --- here we assume that the algorithm is augmented with the access to a uniformly random function $R : \{0,1\}^* \to \{0, 1\}$ (all the values of $R$ are uniform and independent); in particular the space to store such a function does not count towards the space complexity of the algorithm, and the failure probability is understood over a selection of the oracle. For space complexity of such an algorithm, we will count only the amount of information passed between observations of elements from the input stream; we are allowed to use larger space to process an element from the input. This allows us to talk in a meaningful way about space complexity $o(\log n)$, even though any single element in the stream already take $\Theta(\log n)$ bits to store.

Let us start with discussion on how to design an algorithm using $\Oh(\log \log n + \log \delta^{-1})$ bits of space in the random oracle 
model. It is well-known that given access to a random hash function $h : [n] \to \{0, 1\}^{\log n}$, if we fix some set $S \subset [n]$, then $\hat{X} := \max_{s \in S} \lsb(h(s))$ is such that $2^{\hat{X}}$ is a constant factor approximation to $|S|$ with probability $2/3$, where $\lsb$ is the least-significant-bit function \cite{DBLP:conf/focs/FlajoletM83}. Indeed, to argue that this is true, we can consider subsets $S_k \subset S$ given by $S_k := \{ s \in S : \lsb(h(s)) \geq k \}$ --- every such subset corresponds to sub-sampling $S$ by a factor $2^k$, and we should expect that the last non-empty set $S_k$ is the one corresponding to sub-sampling by a factor roughly $\frac{1}{|S|} = 2^{-\log |S|}$.

We can repeat an estimator $\hat{X}$ constructed above $\Oh(\log \delta^{-1})$ times independently in parallel and take median, in order to achieve achieve $\Oh(\log \delta^{-1} \log \log n)$ bit complexity for failure probability $\delta$ under the random oracle model. To improve this construction to $\Oh(\log \log n + \log \delta^{-1})$ bits, let us take $\Oh(\log \log n + \log \delta^{-1})$ independent estimators as above. Instead of storing all those estimators $\hat{X}_k$ independently, we can store the median (which takes $\Oh(\log \log n)$ bits), and deviations $\hat{X}_i - \median(\{\hat{X}_1, \ldots \hat{X}_w\})$. One can show that with high probability at all times throughout the stream the median is a good estimator of the number of distinct elements seen so far, and moreover --- because the deviations $\hat{X}_i - \log |S|$ are random variables that are extremely well concentrated around zero --- on average over all the counters we will use constant number of bits per counter to store all deviations from median, at all time steps.

Getting rid of the random oracle assumption is much more technical --- without access to the random oracle, it is known (\cite{DBLP:conf/stoc/AlonMS96}) that one can use a pairwise independent hash function $h : [n] \to \{0, 1\}^{\log n}$ to get a constant success probability --- and a seed to such a hash function can be stored using $\Oh(\log n)$ bits. This, together with median over parallel repetitions of the estimator, yields simple $\Oh(\log n \log \delta^{-1})$ space algorithm with failure probability $\delta$. 

To improve upon that, we can observe that in this setting it is not necessary for all the $\Oh(\log \log n + \log \delta^{-1})$ estimators to use independent seeds for the underlying pairwise-independent hash functions $h_i$. Instead, we can consider a fully explicit constant degree expander graph, with the set of vertices $[N]$ corresponding to the set of seeds for pairwise independent hash functions. We would choose the first seed for $h_1$ uniformly at random, but subsequent seeds are chosen by a random walk over this expander graph. In such a way, we can succinctly store all the seeds using $\Oh(\log \delta^{-1} + \log n)$-bits of space, and the standard Chernoff-bounds for expander walks \cite[Theorem 4.22]{Vadhan12} imply that median of estimators generated in such a way is still constant factor approximation for the number of distinct elements, except with small failure probability $\delta$. This yields an algorithm with space complexity $\Oh(\log n + \log \log n \log \delta^{-1})$, if we store $\hat{X}_i$ explicitly --- still falling short of our goal of $\Oh(\log n + \log \delta^{-1})$ bits of space.

Unfortunately, we \emph{cannot} argue, as before in the random oracle model, that we can succinctly store all counters $\hat{X}_i$ generated via such an expander walk by considering the median and deviations from the median separately --- sufficiently strong concentration bounds are \emph{not true} for a constant degree expander walk.

Instead, inspired by the construction of a sampler in \cite{DBLP:conf/soda/Meka17}, we show that by composing a number of pseudorandom objects (i.e. pairwise independent hash functions, short walks over super-constant expander graphs, averaging samplers obtained from the celebrated construction of strong extractors \cite{DBLP:journals/rsa/Zuckerman97}, and standard sub-sampling methods), we can generate $w = \Theta(\log n + \log \delta^{-1})$ estimators in total, divided into groups of estimators.  More concretely, we produce $\frac{w}{\log w}$ groups of estimators, such that each group has about $\log w$ estimators, and with the probability $1 - \exp(-\Omega(w)) = 1 - \delta$ for at least half of the groups the median yields a good estimation of $F_0$ at all times, while simultaneously the ``good'' groups take at all times $\Oh(1)$-bits on average per estimator to store, if we store estimators within a group by storing separately the median and  deviations from the median, as discussed above. 

It is essential for this argument that size of each group $w_1 = \Oh(\log w)$ is greater than $C \log \log n$ --- intuitively, if we consider a random group of such a size, the  probability that we need too many bits to store compactly such a group at any fixed step $t$ is bounded by $\exp(-\Omega(w_1)) = \frac{1}{\polylog(n)}$, and therefore we can union bound over all $\Oh(\log n)$ positions where $F_0$ grows by factor of two, without affecting the failure probability too much.

The details of this pseudorandom construction are presented in \Section{sampler-lemma}. This is the main technical difficulty in proving the following theorem.

\begin{theorem}
	\TheoremName{constant-approx}
    There is a streaming algorithm with space complexity $\Oh(\log n + \log \delta^{-1})$ bits, that with probability $1-\delta$ reports a constant factor approximation to number of distinct elements in the stream after every update.
\end{theorem}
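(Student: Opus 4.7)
The plan is to instantiate the blueprint laid out in the overview: build $w=\Theta(\log n+\log\delta^{-1})$ Flajolet--Martin style estimators $\hat{X}_i=\max_{s\in S^{(t)}}\lsb(h_i(s))$ from pairwise-independent hash functions $h_i\colon[n]\to\{0,1\}^{\log n}$, and then argue that the $w$ seeds can be produced pseudorandomly, stored in $\Oh(\log n+\log\delta^{-1})$ bits, and that the resulting $\hat X_i$'s themselves can be stored using on average $\Oh(1)$ bits each if grouped and compressed around their medians.

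First I would fix the basic one-shot estimator and recall (via the standard second moment / Chebyshev argument from \cite{DBLP:conf/stoc/AlonMS96}) that for any fixed $S^{(t)}$ a single pairwise-independent $\hat X_i$ satisfies $|2^{\hat X_i}-F_0^{(t)}|\le C\cdot F_0^{(t)}$ with probability, say, $2/3$; in particular $\hat X_i-\log F_0^{(t)}$ is a sub-exponential random variable around $0$. Then I would invoke the pseudorandom construction of \Section{sampler-lemma}: it produces, from an $\Oh(\log n+\log\delta^{-1})$-bit seed, $w$ hash seeds partitioned into $w/\log w$ groups of size $w_1=\Theta(\log w)=\Theta(\log\log n+\log\log\delta^{-1})$ such that, with probability $1-\exp(-\Omega(w))=1-\delta$, for at least half of the groups the median of the $w_1$ corresponding estimators is a good approximation of $F_0^{(t)}$ \emph{at every time $t$}, while simultaneously on every ``good'' group the deviations $\hat X_i-\median\hat X_i$ sum in absolute value to $\Oh(w_1)$. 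Taking the median of group-medians then yields a constant-factor approximation to $F_0^{(t)}$ at all times.

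For the space accounting, for each group I would store (i) the group median $m_g$ in $\Oh(\log\log n)$ bits, and (ii) the deviations $\hat X_i-m_g$ using a prefix-free encoding of length $\Oh(1+|\hat X_i-m_g|)$ bits per counter. The ``good group'' guarantee above bounds the total encoding length of any good group by $\Oh(w_1)$; bad groups contribute at most $\Oh(w_1\log\log n)$ bits each, but because there are only $w/w_1$ groups and the cost is amortized, the overall cost is $\Oh(w)=\Oh(\log n+\log\delta^{-1})$ bits in expectation --- and by increasing the failure probability in the sampler lemma by a constant factor, also with probability $1-\delta$ simultaneously at all $t$. The final output is $2^{\median_g m_g}$ (or any monotone rounding thereof); updates are handled by recomputing the affected $\hat X_i$'s and re-encoding the at most $w$ counters whose deviations change.

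The main obstacle is the step where one must argue that the pseudorandomly generated group of $w_1$ estimators is simultaneously (a) accurate in its median and (b) concentrated tightly enough around that median to admit the $\Oh(1)$-bits-per-counter encoding, and that this holds at every time $t$ in the stream. The ``at every time $t$'' quantifier is what forces $w_1\gtrsim\log\log n$: we want the per-group bad event at a fixed doubling of $F_0$ to have probability $1/\polylog(n)$, so that a union bound over the $\Oh(\log n)$ distinct values of $\lfloor\log F_0^{(t)}\rfloor$ is absorbed. Establishing this simultaneous concentration for a pseudorandom --- rather than fully independent --- family is precisely what \Section{sampler-lemma} is designed to provide, and the remainder of the proof is a careful but routine bookkeeping on top of that lemma.
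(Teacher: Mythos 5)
Your proposal follows the same overall architecture as the paper: grouping $w=\Theta(\log n+\log\delta^{-1})$ pairwise-independent Flajolet--Martin estimators into $w/w_1$ groups of $w_1=\Theta(\log w)$, invoking the pseudorandom sampler of \Section{sampler-lemma} to certify that most groups are simultaneously accurate and tightly concentrated at all doubling times of $F_0$, encoding each group by its median plus deviations, and outputting the median of group-medians. The reason $w_1\gtrsim\log\log n$ (to absorb the union bound over $\Oh(\log n)$ doublings) and the sub-exponential tails of $\hat X_i-\log F_0^{(t)}$ are also correctly identified.

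There is, however, a genuine gap in your space accounting. You write that bad groups contribute at most $\Oh(w_1\log\log n)$ bits each, and that the total is $\Oh(w)$ ``in expectation'' and, after a constant-factor tweak of the sampler's failure probability, with probability $1-\delta$. This does not work. The sampler lemma only guarantees that \emph{strictly more than half} of the groups are $C$-small; it does not and cannot push the fraction of bad groups to $o(1)$, because each group is independently bad with probability roughly $\exp(-\Omega(w_2))=1/\poly(w)$, which is not negligible. Consequently, up to $\Theta(w/w_1)$ groups can each consume $\Theta(w_1\log\log n)$ bits, for a total of $\Theta(w\log\log n)$ --- an extra $\log\log n$ factor over the target. ``Expected space'' is also not what the theorem promises, and it cannot be converted to a high-probability worst-case bound by just decreasing $\delta$ in the sampler. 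The paper closes this gap by an explicit \emph{capping} step: whenever the encoded size of a group first exceeds $C_2 w_2$ bits, the algorithm marks that group as permanently broken and stops updating it, yielding a deterministic space bound $\Oh(s + C_2 w_1 w_2) = \Oh(\log n + \log\delta^{-1})$. Correctness then hinges on showing that every $C$-small group \emph{never} triggers the cap (so that strictly more than half the groups remain alive and accurate, and their median is correct). This requires the monotonicity argument that bounds the encoding length between consecutive doubling times via $|Y_i^{(t)} - M^{(t)}| \leq |Y_i^{(t_{k+1})}-M^{(t_{k+1})}| + |M^{(t_{k+1})}-M^{(t_k)}|$ and the triangle inequality, which your proposal does not discuss but which is the step that makes the capping safe.
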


This space complexity is optimal \cite{DBLP:conf/stoc/AlonMS96,DBLP:journals/talg/JayramW13}. 

The algorithm could be significantly simplified, and would mimic exactly the algorithm in the random oracle scenario, if we had an explicit sampler satisfying the following guarantee, with seed length $s = \Oh(w + \log n)$. In the following definition $\text{Unif}(S)$ denotes the uniform probability distribution over a set $S$.

\begin{Definition}[Sub-gaussian sampler]
		Function $\Gamma : \{0, 1\}^s \times [w] \to [n]$ is called a sub-gaussian sampler if and only if for every $f : [n] \to \bR$ satisfying $\P_{X\sim \text{Unif}([n])}(|f(X)| > \lambda) < \exp(-\lambda^2)$, we have
		\begin{equation*}
				\P_{S \sim \Unif(\{0,1\}^s)}(\sum_{i \leq w} f(\Gamma(S, i)) > C w) < \exp(-\Omega(w)).
		\end{equation*}
\end{Definition}

We consider existence of explicit samplers like this with seed length $\Oh(w + \log n)$ to be an interesting question on its own in the area of pseudorandomness, that will likely have many other applications
\footnote{Non-constructive existence of samplers like that can be proven using probabilistic method, and reduction to $\gamma$-Strong samplers similar to \LemmaR{strong-implies-unbounded-tails}.}. In fact for black-box derandomization of the random oracle algorithm described in this section it is enough to have sampler for functions with stronger tail probabilities --- it is enough for it to apply to functions with doubly exponential tails $\P(|f(X)| > \lambda) < \exp(-e^\lambda)$.

\begin{Remark}
    The update time of this algorithm is $\Oh(\poly(\log \delta^{-1}, \log n))$. The only bottleneck is the pseudorandom construction we are using. If we give up on succinctly storing estimates $\hat{X}_i^{(t)}$, and store them explicitly, we can replace this pseudorandom construction with a single random walk over constant degree expander graph. There are expander graphs that allow evaluation of the neighbour function in constant time \cite{gabber1981explicit}. Such a modification would give an algorithm using slightly worse space $\Oh(\log n + \log \delta^{-1} \log \log n)$, but $\Oh(\log \delta^{-1} + \log \log n)$ update time for strong tracking $F_0$ with constant factor approximation.

    It is possible to carry this construction over to our subsequent result, achieving $\Oh(\log \delta^{-1} (\frac{1}{\varepsilon^2} + \log \log n) + \log n)$ bits of space for high accuracy regime, and $\Oh(\log \delta^{-1} \frac{\log \log n}{\varepsilon^2} + \log n)$ bits of space for tracking, all with update time $\Oh(\log \log n + \log \delta^{-1})$.
\end{Remark}

\subsection{High accuracy regime}
In \Section{high-accuracy} we discuss how to use the previous construction to achieve a high accuracy estimation of the number of distinct elements, with probability $1-\delta$. We prove the following theorem. 
\begin{theorem}
    \TheoremName{high-accuracy}
    For every $\varepsilon, \delta$ there is an algorithm using $\Oh(\frac{\log \delta^{-1}}{\varepsilon^2} + \log n)$ bits, which, at the end of the stream reports $1 + \varepsilon$ approximation to the number of distinct elements, with probability $1-\delta$.
\end{theorem}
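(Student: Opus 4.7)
My plan is to reuse \emph{one} copy of the constant-factor strong tracker of \TheoremR{constant-approx} as a backbone, and layer on top of it $w = \Theta(\log \delta^{-1})$ parallel ``fine'' sub-sample-and-count estimators in the spirit of Bar-Yossef--Jayram--Kumar--Sivakumar / Kane--Nelson--Woodruff, each of size $\Oh(1/\varepsilon^2)$ bits. Since only a single copy of the backbone is needed (costing $\Oh(\log n + \log \delta^{-1})$ bits, with failure probability $\delta/2$), the $\log n$ term will not get multiplied by $\log \delta^{-1}$ --- this is the key reason to start from \TheoremR{constant-approx} rather than from a black-box $(1+\varepsilon)$-approximation.

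For each $i \in [w]$ I take a pair of hash functions: $h_i : [n] \to \bN$ with $\Pr(h_i(x) \geq \ell) = 2^{-\ell}$ (the ``sub-sampler'') and $g_i : [n] \to [K]$ with $K = \Theta(1/\varepsilon^2)$ (hash into bitmap), and maintain a bitmap $B_i \in \{0,1\}^K$. Let $\tilde F^{(t)}$ denote the current backbone estimate and set $\ell_i^{(t)} := \lfloor \log \tilde F^{(t)} \rfloor - c_0$ for a suitable constant $c_0$. On update $x^{(t)}$ with $h_i(x^{(t)}) \geq \ell_i^{(t)}$, flip bit $g_i(x^{(t)})$ of $B_i$ to $1$. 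Since $\tilde F^{(t)}$ is monotone up to $\Oh(1)$ slack, $\ell_i^{(t)}$ only ever increases (after rounding), and I maintain the invariant by re-sub-sampling the existing bits of $B_i$ in place whenever the level rises --- no information is lost that would have been recorded at the new level, and $|B_i|$ stays bounded by $K$ throughout. At the end of the stream the per-instance estimate is the standard $\hat F_i := -K \cdot 2^{\ell_i^{(T)}} \cdot \ln(1 - |B_i|/K)$, which is a $(1 \pm \varepsilon)$-approximation of $F_0^{(T)}$ with constant probability by the classical sub-sample-and-count analysis, and the algorithm reports $\median(\hat F_1, \ldots, \hat F_w)$.

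Space accounting: the bitmaps cost $\Oh(w/\varepsilon^2) = \Oh(\log\delta^{-1}/\varepsilon^2)$, and the backbone costs $\Oh(\log n + \log \delta^{-1})$. The hash-function seeds for $\{(h_i, g_i)\}_{i \leq w}$ must be stored in $\Oh(\log n + \log \delta^{-1})$ bits total rather than the naive $\Oh(w \log n)$; I generate them \emph{jointly} using the pseudorandom construction of \Section{sampler-lemma} (or, accepting the trade-off in the remark after \TheoremR{constant-approx}, via a single walk on a constant-degree expander). The three contributions sum to the claimed $\Oh(\log n + \log \delta^{-1}/\varepsilon^2)$ bits. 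The main obstacle is that the $(h_i, g_i)$ are only \emph{pseudo-}independent across $i$, so the textbook ``median of independent repetitions'' amplification is not literally available: I would replace it with the Chernoff-style concentration the sampler of \Section{sampler-lemma} is built to provide, applied to the $\{0,1\}$-indicator of the event ``$\hat F_i \simeq_{1+\varepsilon} F_0^{(T)}$.'' Since this is precisely the analytic machinery already developed for \TheoremR{constant-approx}, it should carry over with only notational changes, and the only genuinely new ingredient is the bitmap refinement layered on top.
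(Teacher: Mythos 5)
The central gap is in the sentence ``I maintain the invariant by re-sub-sampling the existing bits of $B_i$ in place whenever the level rises --- no information is lost that would have been recorded at the new level, and $|B_i|$ stays bounded by $K$ throughout.'' A plain bitmap cannot be re-sub-sampled in place. When the backbone level rises from $\ell$ to $\ell+1$, a set bit $B_i[j]=1$ could have been set either by an element $x$ with $h_i(x)=\ell$ (which should now be evicted) or by an element $y$ with $h_i(y)\geq \ell+1$ that happened to land in the same bucket $g_i(y)=j$ (which should remain). From the single bit alone these are indistinguishable, so the updated bitmap would be wrong. To make level-raising possible you must instead store per bucket $j$ the statistic $\max\{\lsb(h_i(x)) : x \in S^{(t)},\, g_i(x) = j\}$, which is precisely what \cite{DBLP:conf/pods/KaneNW10} does. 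But storing that max explicitly costs $\Theta(\log\log n)$ bits per bucket, giving $\Theta(\log\log n / \varepsilon^2)$ per estimator and $\Theta(\log\delta^{-1}\log\log n / \varepsilon^2)$ overall --- the \emph{tracking} bound, not the one-shot bound you are trying to prove. Compressing this to $\Oh(1/\varepsilon^2)$ bits on average requires storing the offset $\max(\lsb) - \ell_i^{(t)}$ in a variable-length code and proving that the total code length is small with good probability, which is the content of \TheoremR{knw-reduction} (Eqs.~\ref{eq:small-eps} and~\ref{eq:large-eps}). Your proposal implicitly assumes this away by treating the per-instance space as deterministically $K$ bits, but once the data structure is corrected, the per-instance space is a random variable and controlling it is most of the technical work.

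Once you admit that the per-instance space is random, two further issues surface that the paper handles by splitting into two parameter regimes. When $\varepsilon < (\log n)^{-1/4}$, the per-position space blowup probability $\varepsilon^4$ is below $1/\log n$, so you can union bound over the $\Oh(\log n)$ doubling times of $F_0$ and get that a fixed instance stays small throughout with probability $5/6$; then the plain averaging sampler of \TheoremR{expander-ave-sampler} on the $\{0,1\}$-indicator of joint correctness-and-smallness suffices (your invocation of \LemmaR{sampler-lemma} for this is a misattribution --- that lemma is built for doubly-exponential-tailed functions, not $[0,1]$-valued indicators). When $\varepsilon > (\log n)^{-1/4}$, the union bound over positions fails and you genuinely need \LemmaR{sampler-lemma}, using the doubly-exponential tail bound \Eqsub{large-eps} to certify $C$-smallness; but \LemmaR{sampler-lemma} requires $w \geq (\log M)^{\Omega(1)}$, which for $w = \Theta(\log\delta^{-1})$ and constant $\delta$ is false --- the paper recovers it by arguing that in this regime one may WLOG take $\log\delta^{-1}\geq C\sqrt{\log n}$. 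None of this case analysis appears in your proposal, and the correctness indicator argument alone does not bound the space. The high-level architecture (single constant-factor backbone plus $\Theta(\log\delta^{-1})$ fine estimators sharing a pseudorandomly generated seed pool) is the same as the paper's and is the right idea; what is missing is the data-structure correction and the space-concentration analysis that it forces.
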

\begin{Remark}{\cite{DBLP:conf/stoc/AlonMS96,DBLP:journals/talg/JayramW13}}
    This space complexity is optimal --- every algorithm that estimates number of distinct elements up to a $(1+\varepsilon)$ factor with probability $1-\delta$ needs to use space at least $\Omega(\frac{\log\delta^{-1}}{\varepsilon^2} + \log n)$ bits.
\end{Remark}

% In fact work of Kane \emph{et al.} \cite{DBLP:conf/pods/KaneNW10} basically reduces this problem to the constant factor approximation. Namely, their result can be interpreted as follows: there is an algorithm using $\Oh(\frac{1}{\varepsilon^2})$-bits of memory, that yields $1+\varepsilon$ approximation to a number of distinct elements with probability $5/6$, assuming oracle access to constant factor approximation of $F_0$ at all times of the stream, and oracle access to two pairwise independent hash functions $h_1, h_2 : [n] \to [n]$. We repeat their construction in parallel $\Oh(\log \delta^{-1})$ times and take the median of reported values. To supply the oracle for constant factor approximation of $F_0$ we use our tracking algorithm from the \Section{constant-approx}; instead of using independent instantiations of hash functions between instances of the KNW algorithm, we sample them via random walk over an expander graph on the space of all such hash functions \cite{Gillman:1998:CBR:284943.284979}.

Given ideas in the work of Kane \emph{et al.} \cite{DBLP:conf/pods/KaneNW10} and results obtained in the previous section, getting correct dependence on the error parameter $\varepsilon$ is routine, although somewhat tedious.

We consider separately two ranges of parameters: if $\varepsilon < \left(\frac{1}{\log n}\right)^{1/4}$, the KNW algorithm (given access as a black-box to the strong tracking with some constant approximation), using space $\Oh(\frac{1}{\varepsilon^2})$ and $\Oh(\log n)$ random bits has probability $\frac{2}{3}$ of providing $(1+\varepsilon)$ approximation to the number of distinct elements --- since $\varepsilon$ is small here, the space budget $\Oh(\frac{1}{\varepsilon^2})$ is large enough for the analysis to work. We can instantiate $\Oh(\log \delta^{-1})$ parallel copies of this algorithm, providing them access to the same strong-tracker with failure probability $\delta/2$. Naively, we would have to store $\Oh(\log n \log \delta^{-1})$ random bits in order to do this, each instance of KNW algorithm is using $\Oh(\log n)$ random bits --- to reduce the amount of randomness necessary, we pick them using a walk over a constant-degree expander graph. That is, random bits for first instance of a KNW algorithm are completely uniform, but bits for subsequent runs of KNW are chosen by following a random edge of an expander graph. We can use standard Chernoff-bounds for expander walks, as in \cite{Gillman:1998:CBR:284943.284979}, to show that failure probability of such an algorithm is still at most $\delta$.

On the other hand, if $\varepsilon > \left(\frac{1}{\log n}\right)^{1/4}$, we can assume without loss of generality that $\log \delta^{-1} > \Omega(\sqrt{\log n})$, because $\frac{\sqrt{\log n}}{\varepsilon^2} = o(\log n)$ anyway, and our target is space complexity of form $\Oh(\frac{\log \delta^{-1}}{\varepsilon^2} + \log n)$. In this case, we can instantiate $\Theta(\sqrt{\log n} + \log \delta^{-1})$ parallel copies of the KNW algorithm, using the pseudorandom construction as described in \Section{sampler-lemma}. Here the number of instantiations of this algorithm is large enough, therefore we hope that the space consumption at every time step, on average over all the instatiations will be small. Identical analysis as for a constant approximation factor in \Section{constant-approx} can be used to deduce correctness of such an approach.

In fact the space guarantees of the KNW algorithm, as it was originally analyzed, applied only when $\varepsilon  \leq \sqrt{\log n}$ --- as this could be assumed without loss of generality in the original setting. We provide a more delicate analysis of the space consumption of this algorithm in Appendix~\ref{sec:appendix} (specifically Theorem~\ref{thm:knw-reduction}), that is sufficient for our purposes.

\subsection{Strong tracking of distinct elements}
In \Section{strong-tracking} we discuss how to achieve the $(1+\varepsilon)$-strong tracking guarantee for $F_0$ estimation. 

First, let us observe that an algorithm estimating $F_0$ with small failure probability already translates into some upper bound on the space complexity for the tracking problem. Given that number of distinct elements in the stream is increasing, and our estimators proposed in \Section{high-accuracy} are monotone as well, it is enough to look at a sequence of positions $t_1, t_2, \ldots t_s$ such that $F_0^{(t_i)} = (1 + \varepsilon)^{i}$. If the estimate is within $(1+\varepsilon)$ from the actual number of distinct elements at all points $t_i$, we can deduce a strong tracking with accuracy $1 + \Oh(\varepsilon)$: for $t_i \leq t \leq t_{i+1}$ we have $\hat{F}_0^{(t)} \leq \hat{F}_{0}^{(t_{i+1})} \leq (1 + \varepsilon) F_0^{(t_{i+1})} \leq (1 + \varepsilon)^2 F_0^{(t)}$, and similarly for the lower bound. There are at most $\log_{1+\varepsilon} n = \Oh(\frac{\log n}{\varepsilon})$ such positions $t_i$, so by setting failure probability $\delta := \delta'\left(\frac{\log n}{\varepsilon}\right)^{-1}$ in \TheoremR{high-accuracy}, we can deduce that there is an algorithm satisfying $1+\varepsilon$ strong tracking of $F_0$ with probability $1-\delta$, using $\Oh(\frac{\log \log n + \log \delta^{-1}}{\varepsilon^2} + \log n + \frac{\log \varepsilon^{-1}}{\varepsilon^2})$ bits of space.

We show that, by opening up the \cite{DBLP:conf/pods/KaneNW10} construction and more detailed analysis, it is possible to remove the additive $\frac{\log \varepsilon^{-1}}{\varepsilon^2}$ term, and obtain an optimal algorithm for $F_0$ tracking.

\begin{theorem} \TheoremName{strong-tracking}
    There is an algorithm for $1+\varepsilon$ strong tracking of the number of distinct elements in the stream, using $\Oh(\frac{\log \log n + \log \delta^{-1}}{\varepsilon^2} + \log n)$ bits of space.
\end{theorem}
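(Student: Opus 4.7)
The plan is to refine the naive reduction sketched just above, which union-bounds over the $\Oh(\log n/\varepsilon)$ positions $t_i$ where $F_0^{(t_i)}=\lceil(1+\varepsilon)^i\rceil$ and thereby incurs the spurious $\log\varepsilon^{-1}/\varepsilon^2$ term. The key idea is to exploit the martingale structure of the KNW estimator \emph{between} successive doubling thresholds of $F_0$, so that the final union bound needs to touch only the $\Oh(\log n)$ doubling positions.

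First I would open up the KNW pipeline as provided by \TheoremR{knw-reduction} from \Section{appendix}: given black-box access to the constant-factor strong tracker of \TheoremR{constant-approx}, the $(1+\varepsilon)$-estimator at time $t$ is a simple deterministic function of a subsampling level $\hat{k}(t)$ (itself a slowly varying function of the constant-factor estimate) and of a bucket count of the form $B_{\hat{k}(t)}^{(t)} = \sum_{x\in S^{(t)}} \mathbf{1}[\lsb(h(x))\geq \hat{k}(t)]$ for a pseudorandom hash $h$. I partition the timeline into the $\Oh(\log n)$ doubling intervals $I_i := \{t : F_0^{(t)} \in [2^i, 2^{i+1})\}$; on each $I_i$ the constant-factor tracker pins $\hat{k}(t)$ to one of $\Oh(1)$ values $k_i$, so up to a constant factor in the final union bound it suffices to bound $\sup_{t\in I_i}\bigl|B_{k_i}^{(t)}-2^{-k_i}F_0^{(t)}\bigr|$ for a fixed $k_i$.

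The central observation is then that the process
\begin{equation*}
Y_t \;:=\; B_{k_i}^{(t)} - 2^{-k_i}F_0^{(t)} \;=\; \sum_{x\in S^{(t)}}\bigl(\mathbf{1}[\lsb(h(x))\geq k_i] - 2^{-k_i}\bigr)
\end{equation*}
is a martingale (or, under the KNW pseudorandomness, is fooled by one) in the filtration that reveals the distinct elements in stream order, with bounded increments and total variance $\Oh(2^{-k_i}\cdot 2^{i+1}) = \Oh(1/\varepsilon^2)$ across $I_i$. A Kolmogorov/Doob-type maximal inequality then gives $\Pr[\sup_{t\in I_i}|Y_t|>\lambda]\lesssim \Pr[|Y_{t_{i+1}}|>\lambda/2]$, so the supremum inherits the single-point tail already established for the KNW estimator at the endpoint of the interval. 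Plugging in the KNW tail with failure parameter $\delta/\Oh(\log n)$ and union-bounding over the $\Oh(\log n)$ intervals yields total failure probability $\delta$ with per-instance space $\Oh\!\left(\frac{\log\log n + \log \delta^{-1}}{\varepsilon^2}+\log n\right)$, matching the claim.

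The main obstacle is that the KNW construction does not supply full independence of the indicators $\mathbf{1}[\lsb(h(x))\geq k_i]$, so the maximal inequality must be justified under the precise limited-independence / PRG guarantees of \TheoremR{knw-reduction}: in particular one has to show that the pseudorandomness fools not merely the marginal of $B_{k_i}^{(t_{i+1})}$ but also the supremum of its partial-sum trajectory along $I_i$. Recasting this supremum as a low-space streaming statistic that the PRG can fool is, I expect, where the ``more detailed analysis'' alluded to in the overview does its real work.
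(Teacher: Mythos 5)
Your high-level plan — bound $\sup_t |Y_t|$ over doubling intervals with a Kolmogorov/Doob maximal inequality and then union-bound over only $\Oh(\log n)$ intervals — is the right one, and you correctly flag that the maximal inequality has to survive bounded independence; the paper's Lemma~\ref{lem:4-wise-indep-random-walk} is exactly that (a Kolmogorov inequality for $4$-wise independent increments via a chaining/dyadic argument, since the increments' fourth moments are uncontrolled). But there is a genuine gap in your description of what the KNW estimator actually observes, and it hides the bulk of the work.

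You write that the estimator at time $t$ is a function of $\hat k(t)$ and of $B_{\hat k(t)}^{(t)}=\sum_{x\in S^{(t)}}\mathbf{1}[\lsb(h(x))\ge \hat k(t)] = |S_{\hat k(t)}^{(t)}|$. The KNW algorithm does \emph{not} have access to this quantity — maintaining $|S_k^{(t)}|$ exactly under repeated insertions would require remembering which subsampled elements have already appeared, which is precisely what one cannot afford in $\Oh(1/\varepsilon^2)$ bits. What the algorithm actually stores is a further hash $h_2$ of the subsampled set into $P=\Theta(1/\varepsilon^2)$ bins and only sees the number of \emph{non-empty bins} $|h_2(S_k^{(t)})|$, which it inverts through the balls-and-bins map $\Phi_P$. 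Consequently there are \emph{two} stochastic processes whose uniform-in-$t$ deviation must be controlled: (i) $|S_k^{(t)}|2^k - |S^{(t)}|$, which your $Y_t$ captures and Lemma~\ref{lem:4-wise-indep-random-walk} handles; and (ii) $|h_2(S_k^{(t)})| - \Phi(|S_k^{(t)}|)$, the deviation of the occupied-bin count from its mean as balls arrive one by one. In the random-oracle setting (ii) is again a Doob martingale, but under the limited independence of $h_2$ (a composition of a pairwise-independent $h_3$ and a $\polylog(P)$-wise independent $h_4$) the natural process $S_t = \E_{X'}\phi(X_1,\dots,X_t,X'_{t+1},\dots,X'_R)$ is no longer a martingale and the maximal inequality does not follow. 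The paper's Lemma~\ref{lem:balls-and-bins}, together with the polynomial-approximation machinery of Lemmas~\ref{lem:polynomial-approximation}--\ref{lem:bounded-increments} (replace $\phi$ by a degree-$\Oh(\log^2 P)$ polynomial $\hat\phi$ that agrees with $\phi$ with high probability under the relevant distributions, then observe that moments of the $\hat\phi$-process are low-degree expectations and hence fooled by $\poly(\log P)$-wise independence, so Lemma~\ref{lem:doobs} applies), is exactly what fills this hole, and it is the main technical contribution of the tracking section. Your proposal does not address process (ii) at all, so as written the argument would only establish that a hypothetical algorithm storing $|S_k^{(t)}|$ exactly would track well — but that algorithm does not run in the claimed space.
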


To describe the overview of our contribution, let us first discuss the high-level idea behind the \cite{DBLP:conf/pods/KaneNW10} algorithm. Let us focus first on the random oracle model. Consider a fixed set $S \subset [n]$ (the set of distinct elements seen at the end of the stream), a random hash function $h : [n] \to \{0, 1\}^{\log n}$, and sets $S_k := \{s \in S : \lsb(s) \geq k\}$ --- those sets correspond roughly to subsampling $S$ by a factor of $2^k$. If we already have access to a constant factor approximation of $|S|$, we can zoom in onto set $S_k$ for which we expect $|S_k| = \Theta(\frac{1}{\varepsilon^2})$. Clearly $2^k |S_k|$ is an unbiased estimator of $|S|$, and moreover the standard deviation of $2^k |S_k|$ is of order $\Oh(\varepsilon |S|)$. This implies that if we had a way to estimate size of $|S_k|$ up multiplicative factor $(1+\varepsilon)$ that would be enough to get an $(1 + \Oh(\varepsilon))$ approximation for $|S|$.

In order to do this, we can check a hash function $h_2: [n] \to [P]$ for $P \approx \frac{100}{\varepsilon^2}$. We wish to recover $|S_k|$ from $|h_2(S_k)|$. This is reminiscent of a famous balls-and-bins thought experiment: we are throwing $|S_k|$ balls randomly into $P$ bins, and we try to estimate number of balls, given number of non-empty bins. Let us define $\Phi_P(t)$ to be the expected number of non-empty bins, after throwing $t$ balls at random into $P$ bins (we will drop the subscript $P$ in further discussion), we have $\E_{h_2} |h_2(S_k)| = \Phi_P(|S_k|)$. We claim that, as long as $|S_k| \leq \frac{P}{20}$, we will have with good probability $|S_k| \approx_{1 + \varepsilon} \Phi^{-1}(|h_2(S_k)|)$.  This is because $\sqrt{\Var(|h_2(S_k)|)} = \Theta(\sqrt{|S_k|}) = \Theta(\varepsilon |S_k|)$, so $|h_2(S_k)| = \Phi(|S_k|) \pm \Oh(\varepsilon |S_k|)$, but $\Phi_P$ is bi-Lipschitz in the regime $t \leq P/20$  --- i.e. for any $a \leq b \leq P/20$ we have $0.9(b - a) \leq \Phi(b - a) \leq b - a$. We can put those two facts together to get $|S| \approx_{1+\varepsilon} 2^k |S_k| \approx_{1+\varepsilon} 2^k \Phi^{-1}(|h_2(S_k)|)$.

Using $\Oh(\frac{\log \log n}{\varepsilon^2})$ bits in total, we can have access to $|h_2(S_k)|$ for all $k$ throughout the stream --- it is enough for each $p \in [P]$ to store $\max \{ \lsb(s) : s \in S, h_2(s) = p\}$. In \cite{DBLP:conf/pods/KaneNW10} it is discussed, among other things, how to reduce the space complexity of storing the information about $h_2(S_k)$ for all $k$ to $\Oh(1/\varepsilon^2)$-bits in total, and how to remove the random oracle assumption, by using compositions of bounded-wise independent hash functions. We describe this algorithm in Appendix~\ref{sec:appendix}, together with more detailed analysis of the distribution of space complexity of this algorithm.

In order to achieve smaller space of the tracking algorithm, let us focus on a specific $k$ and consider evolution of $|S_k^{(t)}|$ over the updates to the stream, where $S^{(t)} := \{ x^{(1)}, \ldots x^{(t)} \}$, and $S_k^{(t)} := \{x \in S^{(t)} : \lsb(s) \geq k\}$. More specifically, let us take $K := 2^k \varepsilon^{-2}$, and let us look at the stream, given the promise that $|S^{(T)}| < K/100$. We wish to say that with probability $2/3$ simultaneously all times $|S_k^{(t)}| 2^k$ gives us an approximation of $|S^{(t)}|$ up to additive term $\pm \varepsilon K$. Moreover, we want to say that $\Phi^{-1}(|h_2(S_k^{(t)})|)$ yields at all time approximation to $|S_k^{(t)}|$, again with additive error $\pm \varepsilon K 2^{-k}$. If we are able to show this, we can later amplify this success probability to $1 - \Theta(\frac{1}{\log n}))$ using $\Oh(\log \log n)$ repetitions of the whole algorithm, and union bound over all possible settings of $k$ in order to achieve strong tracking. Note that there are only $\Oh(\log n)$ values of $k$ to union bound over, as opposed to $\Oh(\varepsilon^{-1} \log n)$ distinct positions in the stream where $F_0^{(t)}$ grows by a multiplicative factor of $(1+\varepsilon)$. 

In the random oracle model both facts --- the fact that for all $t$ we have $|S_k^{(t)}| 2^k \simeq |S^{(t)}| \pm \varepsilon K$, as well as the fact that for all $t$, we have $\Phi^{-1}(|h_2(S_k^{(t)})|) = |S_k^{(t)}| \pm 2^{-k} \varepsilon K$ can be proven by the Doob's martingale inequality. In particular, the fact that $2^k |S_k^{(t)}|$ is an approximation to $|S^{(t)}|$ at all times $t$, follows directly (after shifting and rescaling) from the fact that for a random walk $Y_t := \sum_{i \leq t} X_i$, where $X_i$ are arbitrary random variables satisfying $\E X_i = 0$ and $\E X_i^2 = 1$, we have $\sup_{t \leq T} |Y_t| = \Oh(\sqrt{T})$ with good probability. The main technical difficulty in the strong tracking part of this paper lies in dropping the random oracle assumption, and showing some variation on Doob's martingale inequality under bounded independent hash functions. In particular, we show the following lemma about the deviations of random walk that might be of independent interest
\begin{lemma}
    \LemmaName{4-wise-indep-random-walk}
    Let $X_1, \ldots X_T$ be collection of $4$-wise independent random variables, with $\E X_i = 0$, and $\E X_i^2 = 1$, and let $S_t := \sum_{i \leq t} X_i$, then
    \begin{equation*}
        \P\left(\sup_{t \leq T} |S_t| > \lambda\right) \lesssim \frac{T}{\lambda^2}.
    \end{equation*}
\end{lemma}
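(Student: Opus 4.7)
My plan is to reduce the maximal inequality to a fourth-moment bound via truncation, and then apply a M\'oricz-style maximal inequality for fourth moments, exploiting the key consequence of $4$-wise independence: that squared partial sums over disjoint dyadic intervals are pairwise uncorrelated.

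First I would reduce to the regime $\lambda \ge \sqrt{T}$, as otherwise $T/\lambda^2 \ge 1$ and the statement is trivial. Then truncate at scale $\lambda$: set $\tilde X_i := X_i \mathbf{1}_{|X_i| \le \lambda}$. By Chebyshev and a union bound, the event $\mathcal{E} := \{\exists i \le T : |X_i| > \lambda\}$ has probability $\P(\mathcal{E}) \le T/\lambda^2$, and on $\mathcal{E}^c$ we have $S_t = \sum_{i \le t} \tilde X_i$ for every $t$. After centering, the variables $\hat X_i := \tilde X_i - \E \tilde X_i$ are still $4$-wise independent and satisfy $\E \hat X_i = 0$, $\E \hat X_i^2 \lesssim 1$, $\E \hat X_i^4 \lesssim \lambda^2$. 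The total mean contribution $|\sum_{i \le t} \E \tilde X_i| \le T/\lambda \le \lambda/2$ under $\lambda \ge \sqrt{2T}$, so it suffices to bound the supremum of $\hat S_t := \sum_{i \le t} \hat X_i$ by $\lambda/2$.

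By $4$-wise independence and $\E \hat X_i = 0$, for every interval $I \subseteq [T]$ all cross-terms with an isolated mean-zero factor vanish, yielding
\begin{equation*}
\E \Bigl( \sum_{i \in I} \hat X_i \Bigr)^4 = 3 \sum_{i \ne j \in I} \E \hat X_i^2 \, \E \hat X_j^2 + \sum_{i \in I} \E \hat X_i^4 \lesssim |I|^2 + |I|\lambda^2.
\end{equation*}
I would then apply a M\'oricz-type fourth-moment maximal inequality to upgrade this to
\begin{equation*}
\E \sup_{t \le T} \hat S_t^4 \lesssim T^2 + T\lambda^2.
\end{equation*}
Markov's inequality then gives $\P(\sup_t |\hat S_t| > \lambda/2) \lesssim T^2/\lambda^4 + T/\lambda^2 \lesssim T/\lambda^2$, where the last step uses $\lambda \ge \sqrt{T}$. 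Combining with $\P(\mathcal{E}) \lesssim T/\lambda^2$ completes the proof.

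The main obstacle will be establishing the maximal inequality $\E \sup_{t \le T} \hat S_t^4 \lesssim T^2 + T\lambda^2$ without incurring polylogarithmic factors: a naive dyadic chaining based on Cauchy--Schwarz loses a $\polylog(T)$ factor of Rademacher--Menshov type. To avoid this, one should exploit the strictly stronger fact (granted by $4$-wise independence) that $\{Y_I^2\}_{I}$ are pairwise uncorrelated across disjoint dyadic intervals $I$ at any common level, not merely that the $Y_I$'s are orthogonal. A natural route is to split the analysis into dyadic scales $|I| \gtrsim \lambda^2$, where the $|I|^2$ term dominates and a M\'oricz-style inequality with exponent $\beta = 2$ gives no logarithmic loss, and scales $|I| \lesssim \lambda^2$, where the $|I|\lambda^2$ term dominates but the number of relevant levels is bounded so any logarithmic loss is swallowed into the overall constant.
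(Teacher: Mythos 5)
Your strategy (truncate at scale $\lambda$, center, control fourth moments of increments, then upgrade to a maximal fourth-moment bound) is genuinely different from the paper's. The paper makes no truncation and argues directly with probabilities: it performs a dyadic chaining where, at a node $t$ of level $k$ with left/right neighbours $t\pm\delta_k$ at level $k-1$, the event that \emph{both} differences $|S_t - S_{t-\delta_k}|$ and $|S_{t+\delta_k} - S_t|$ are large is controlled by
\begin{equation*}
\E (S_t - S_{t-\delta_k})^2 (S_{t+\delta_k} - S_t)^2 \leq \delta_k^2,
\end{equation*}
a product of squares over disjoint blocks, where $4$-wise independence enters. Taking the \emph{minimum} of the two one-sided jumps gives a $\gamma^{-4}$ tail at every dyadic node, the union bound over $2^{k-1}$ nodes at level $k$ and over $k$ is summable, and the chaining then yields $\sup_t |S_t| \leq |S_T| + O(\gamma\sqrt T)$ outside an event of probability $O(\gamma^{-4})$. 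The lone $\gamma^{-2}$ comes from the single endpoint $S_T$ via Chebyshev. Your ``uncorrelated squares across disjoint dyadic intervals'' is exactly the same algebraic fact, but you funnel it through a fourth-moment maximal inequality rather than through this min-of-two-neighbours trick.

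The gap in your argument is in the final paragraph. After the (legitimate) reduction to $\lambda \gtrsim \sqrt T$, you have $\lambda^2 \gtrsim T$, so \emph{every} interval $I \subseteq [T]$ satisfies $|I| \leq T \lesssim \lambda^2$. Your first regime $|I| \gtrsim \lambda^2$ is therefore empty, and \emph{all} $\Theta(\log T)$ dyadic levels fall into the second regime; the claim that ``the number of relevant levels is bounded'' is false. Concretely, in that regime the increment bound collapses to $\E|\hat S_b - \hat S_a|^4 \lesssim (b-a)\lambda^2$, which in M\'oricz's framework has exponent $\alpha = 1$ (the candidate superadditive $g$ one would need, $g(a,b) = (b-a)^{1/2}\lambda$, is \emph{sub}additive), so what you get is Rademacher--Menshov with a $\mathrm{polylog}(T)$ loss, giving $\P \lesssim (T/\lambda^2)\log^{\Theta(1)} T$. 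This is strictly weaker than what is claimed, and there is no range of $\lambda \gtrsim \sqrt T$ you can discard to fix it (the deficiency persists for every such $\lambda$). Interestingly, the target maximal bound $\E\sup_t \hat S_t^4 \lesssim T^2 + T\lambda^2$ \emph{is} true, but proving it seems to require exactly the paper's min-of-neighbours step (using $\min(a,b)^4 \le a^2 b^2$ and then uncorrelatedness of the squared increments over disjoint blocks); a moment inequality that sees only the sizes $\E|\hat S_b-\hat S_a|^4$, like M\'oricz's, cannot recover the lost logarithm. So the missing idea is precisely the device by which the paper avoids chaining on a one-sided difference that has only a quadratic tail: pair each node with both neighbours, take the better of the two, and use $4$-wise independence to make the pair behave like a product of independent squares.
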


A result of the same spirit can be deduced from \cite[Theorem 10]{BravermanCINWW17} when $X_i$ are uniform $\pm 1$ random variables --- in our case, however, the steps $X_i$ are significantly less well-behaved, i.e. $\E X_i^4$ is already extremely large, even compared to $T$.

 \LemmaR{4-wise-indep-random-walk} already implies the first part of the argument: that for all $t$ we have $2^k |S_k^{(t)}| = |S^{(t)}| \pm \Oh(\varepsilon K)$. To control deviations of $h_2(|S_k^{(t)}|)$ from its expectation, we use $h_2$ to be a composition of pairwise independent hash function $h_3 : [n] \to [P^2]$, and $\poly(\log \frac{1}{\varepsilon})$-wise independent function $h_4 : [P^2] \to [P]$. We should expect that $|h_3(S_k^{(T)})| = |S_k^{(T)}|$, i.e. function $h_3$ has no collisions with probability $9/10$, and all we care about are deviations of $|h_4(\tilde{S}^{(t)})|$ from its expectation, where $\tilde{S}^{(t)} \subset [P^2]$ is such that $|\tilde{S}^{(t)}| = |S^{(t)}|$.

 Consider $\phi : [P]^{P/20} \to \bN$, such that $\phi(r_1, r_2, \ldots r_{P/20}) := \# \{ j : \exists i,\ r_i = j\}$ --- in the random oracle model, bounding the deviations $|h_4(\tilde{S}^{(t)})| - \E |h_4(\tilde{S}^{(t)})|$ can be reduced to bounding the deviations of the Doob's martingale $Y_t := \E_{X'} \phi(r_1, \ldots r_t, X'_{t+1}, \ldots X'_T)$, where $X' \sim \Unif([P])$. In this setting the Doob's martingale inequality yields
\begin{equation}
	\P_{r_1, \ldots r_{P/20} \sim \Unif([P])}\left(\sup_{t \leq P/20} |Y_t - \E Y_t| > \lambda\right) \leq \frac{\Var{\phi}}{\lambda^2} 
    \EquationName{doobs-inequality}
\end{equation}
where $x_i$ and $\hat{X}$ are independent and uniform. Finally, this together with bi-Lipschitz property of function $\Phi_P$ in the range of interest, implies that indeed we have $\forall t, \Phi^{-1}(|h_4(h_3(S_k^{(t)}))|) = |S_k^{(t)}| \pm \Oh(\varepsilon K)$.

In our case, variables $r_i$ have only bounded-wise independence, and the process $Y_t$ above is no longer a martingale. We deal with this, by showing that $\phi$ can be approximated (in some sense, under the distributions of interest) by a $\poly(\log P)$-degree polynomial, and we show that under some additional restrictions, processes as above induced by degree $d$ polynomials, satisfy the same \Equation{doobs-inequality}, even if variables $r_i$ are only $4d$-wise independent, as opposed to fully independent.

\subsection{Strong tracking lower bound}
In \Section{lower-bound} we show the optimality of the strong tracking algorithm proposed in the previous section. We prove the following theorem.
\begin{theorem}
    Every algorithm solving $1+\varepsilon$ strong tracking for $F_0$ estimation with probability $\frac{2}{3}$ needs to use $\Omega(\frac{\log \log n}{\varepsilon^2})$ bits of space.
\end{theorem}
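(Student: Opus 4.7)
The plan is to reduce strong $F_0$ tracking to a $t$-fold variant of the Gap-Hamming Distance (GHD) problem with $t = \Theta(\log \log n)$ independent instances, each of size $m = \Theta(1/\varepsilon^2)$. The target communication lower bound $\Omega(tm) = \Omega(\log \log n / \varepsilon^2)$ will follow from a direct-sum / information-complexity bound for GHD (in the spirit of Jayram--Woodruff). Concretely, I would consider Alice holding $X^{(1)}, \ldots, X^{(t)} \in \{0,1\}^m$ from a GHD-hard distribution, together with public reference strings $Y^{(i)}$, so that solving all $t$ GHD instances jointly with constant probability is $\Omega(tm)$-hard.

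Next I would encode the $t$ instances into an insertion-only stream. Partition $[n]$ into $t$ disjoint universe slices $U_1,\ldots,U_t$, each of size $\Theta(1/\varepsilon^2)$ placed on progressively larger ``scales,'' and in phase $i$ have Alice insert a subset $A_i \subseteq U_i$ whose size correlates with $\langle X^{(i)}, Y^{(i)}\rangle$ on the natural GHD scale. I would choose the slice sizes and any padding so that at each checkpoint $T_i$ the contribution of the current phase dominates up to a controlled additive error, and the tracker's $(1+\varepsilon)$-multiplicative guarantee at $F_0^{(T_i)}$ translates into an additive $O(\sqrt m) = O(1/\varepsilon)$ resolution on $|A_i|$. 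This is exactly the additive resolution at which GHD becomes decidable.

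The final step converts a space-$S$ strong tracker into a communication protocol for the $t$-fold GHD problem. The strong-tracking guarantee --- correctness at all times with probability $2/3$ --- is used crucially here: a single run of the tracker simultaneously decides all $t$ instances, so I can view the sequence of states $s_{T_1},\ldots,s_{T_t}$ as successive messages in a $t$-round protocol. A chaining / information-theoretic argument along the checkpoints then needs to promote the $t$-fold direct sum into a bound on $S$ itself rather than merely on $tS$.

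The main obstacle is the scale calibration in the second step. Because GHD is hard only when the relevant $F_0$ is of order $1/\varepsilon^2$ (matching the multiplicative $\varepsilon$ resolution to the additive $\sqrt m$ GHD gap), one cannot place blocks at unrestricted geometric scales as in turnstile lower bounds, and insertion-only streams do not allow an ``undo'' of earlier blocks. Getting exactly $t = \log \log n$ blocks to coexist in $[n]$ while keeping each at the GHD-hard scale --- and then showing that a single intermediate state must encode enough of the unresolved future GHD answers to give an $\Omega(t/\varepsilon^2)$ space bound rather than the naive $\Omega(1/\varepsilon^2)$ --- is the hard part, and it is consistent with the matching upper bound being nontrivial only when $\varepsilon = o(\sqrt{\log \log n / \log n})$.
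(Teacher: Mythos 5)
Your plan is genuinely different from the paper's, and unfortunately the differences are not cosmetic: the core mechanism by which the $\log\log n$ factor appears is not the one you propose, and the obstacle you flag at the end is real and is not resolved by your outline.

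The paper does \emph{not} run $t = \Theta(\log\log n)$ ``GHD-hard'' instances and then try to prove a direct-sum/information-complexity bound that charges $\Omega(1/\varepsilon^2)$ per instance to a \emph{single} intermediate state. Instead it sets up a $k$-round communication game with $k = \Theta(\log n)$ rounds (Lemma~\ref{lem:strong-tracking-to-T}), where each round is an $F_0^\varepsilon$ instance on $\{0,1\}^n$. The $k$ instances live in geometrically growing universe slices $|U_i| = 8|U_{i-1}|$, so that the total universe has size $O(n^2)$ and at each checkpoint the prefix contribution $|P_t|$ is at most a $1/4$-fraction of the current block's contribution $|D_t|$; that is exactly what defeats the ``insertion-only, cannot undo'' worry you raise --- earlier phases are simply drowned out. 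Crucially, the aggregation step is \emph{round elimination}, not a direct sum: Lemma~\ref{lem:T-to-one-way} shows that a $k$-round protocol succeeding with probability $2/3$ on all rounds simultaneously yields, by conditioning on some prefix, a \emph{one-way} protocol for a single $F_0^\varepsilon$ instance with failure probability $O(1/k) = O(1/\log n)$, at the same message length $S$. Plugging that failure probability into the existing Jayram--Woodruff one-way bound $D^{\rightarrow}_{\mu,\delta}(F_0^\varepsilon) = \Omega(\log\delta^{-1}/\varepsilon^2)$ gives $S = \Omega(\log\log n/\varepsilon^2)$. In other words, $\log\log n$ arises as $\log(1/\delta)$ with $\delta \sim 1/\log n$, not as a count of independent instances.

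Concretely, your proposal has two gaps it would be hard to fill. First, with $t = \Theta(\log\log n)$ blocks each of size $\Theta(1/\varepsilon^2)$, there is no mechanism forcing each block to dominate the running $F_0$ at its checkpoint --- the blocks are all the same size, so the $(1+\varepsilon)$-multiplicative guarantee does not give $O(1/\varepsilon)$ additive resolution on the current block once the prefix is comparable in size; geometric growth of the block sizes is essential. Second, as you yourself note, a direct-sum argument over $t$ GHD instances bounds the \emph{total} information revealed across the run, i.e.\ something like $tS$, not $S$; promoting that to a bound on a single state is precisely the hard step you leave open, and in the paper it is sidestepped entirely by the round-elimination lemma, which directly lower bounds the length of a single message via low-error one-way complexity.
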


Together with previously known lower bound $\Omega(\frac{\log \delta^{-1}}{\varepsilon^2} + \log n)$ for $F_0$ estimation, this shows a lower bound that exactly matches our upper bound discussed earlier.

In order to show this, we introduce a $k$-round communication game, where at round $k$, Alice observes input $x_k \in \{0,1\}^n$, Bob observes input $y_k \in \{0,1\}^n$, and they all observe all the previous inputs $(x_i, y_i)_{i \leq k-1}$. In the $k$-th round, Alice sends a message to Bob, and Bob is supposed to report $(1+ \varepsilon)$ approximation to number of ones in a string $x_k \lor y_k$. The protocol is successful if and only if simultaneously at all rounds Bob reports correct (approximate) answer. We show that existence of a strong tracking algorithm implies low-communication protocol for this kind of game with $k = \Theta(\log n)$ rounds --- which in turn, implies a one-round one-way communication protocol for estimation of $x \lor y$ with small failure probability $\delta = \Theta(1/k)$. This would contradict known communication complexity lower bound for small failure probability of distinct element counting \cite{DBLP:journals/talg/JayramW13}.

\subsection{Pseudorandom construction}

In \Section{sampler-lemma}, we prove the main derandomization lemma used in the algorithm described in \Section{constant-approx} for constant factor approximation of the number of distinct elements. Take $w = \Theta(\log \delta^{-1} + \log \log n)$ --- we wish to use $w$ instantiations of the basic estimator, each instantiation is uniquely determined by a seed to pairwise independent hash function used for the estimator (such a seed is of length $\Oh(\log n)$, let us call the number of different seeds $N$).

We pick $w_2 := \Theta(\log w)$, a random walk of length $w_2$ over an expander graph with vertices $[N]$ and degree $\text{quasipoly}(w_2)$ will be \emph{bad} with probability $\exp(-\Omega(w_2))$ --- by which we mean that either the median of all the estimators produced by this walk is at some point far from actual $F_0$, or that we need at some point more than $C w_2$ bits to store all the values of the estimators by storing median and deviations from median. A single random walk like this is going to need $\Oh(\log n + w_2 \log^2 w_2)$ random bits. If we consider now space of all those random walks, we can use known construction of averaging samplers to get a sample of size $W = \poly(w)$, such that with probability $1 - \exp(-\Omega(w))$ the fraction of failed random walks is the same as in the entire space. If we condition on the event that the sampler succeeded, by taking $\frac{w}{\log w}$ independent elements from the sample $[W]$, we can see that more than half of them is bad with probability $\exp(-c w_1)^{w/\log w} = \exp(-c w)$. As we are taking $\frac{w}{\log w}$ independent elements from the universe of size $\Oh(\poly(w))$ we need only $\frac{w}{\log w} \Oh(\log w) = \Oh(w)$ random bits to achieve this.

\section{Constant factor approximation with high probability \SectionName{constant-approx}}

In this section we prove \TheoremR{constant-approx}, assuming existence of specific pseudo-random objects described in \LemmaR{sampler-lemma}. The proof of \LemmaR{sampler-lemma} itself is postponed until later in \Section{sampler-lemma}. We will first state few necessary definitions, followed by a statement of a pseudo-random lemma, then we proceed with the proof of \TheoremR{constant-approx}.

\begin{Definition}[doubly-exponential tail]
We say that a function $f : [M] \to \mathbb{R}_+$ satisfies doubly-exponential tail bounds if $\P_{s \sim \Unif([M])} (f(s) > \lambda) < \exp(-e^\lambda)$.
\end{Definition}
\begin{Definition}[$C$-small set]
	Consider a finite universe $[M]$, equipped with $R$ functions $g_1, \ldots g_R: [M] \to \mathbb{R}_+$. We will say that a sequence $S \in [M]^*$ is $C$-small with respect to $g_1, \ldots g_R$, if $\forall t\leq R,\ \sum_{X \in S} g_t(X) \leq C |S|$.
\end{Definition}
Equipped with those definitions, we are ready to state the pseudorandom lemma.
\begin{lemma}
    \LemmaName{sampler-lemma}
	For any $w \geq \Omega(R^{1/2})$ satisfying in addition $w \geq (\log M)^{\Omega(1)}$, there exist $w_1, w_2$ with $w_1 w_2 = \Theta(w)$, and an explicit function $\Gsamp : \{0, 1\}^s \times [w_1] \to [M]^{w_2}$, such that for any $g_1, \ldots g_R : [M] \to \bR_+$ with doubly-exponential tail bounds, we have with probability at least $1 - \exp(\Omega(-w))$ over a random selection of the seed $U \in \{0, 1\}^s$, that majority of sequences $\Gsamp(U, k)$ is $C$-small for some universal constant $C$.

    That is
    \begin{equation*}
	    \exists \Gsamp, \forall g_1, \ldots g_R, \P_{U \sim \Unif(\{0,1\}^s)}( \#\{ k : \Gsamp(U, k) \text{ is $C$-small}\} > w_1/2) \geq 1 - \exp(-\Omega(w)),
    \end{equation*}
    where $g_1, \ldots g_R$ above must satisfy doubly-exponential tail bounds.

    The seed length in this construction is $s = \Oh(\log M + w)$, and $\Gsamp$ can be evaluated in time  $\poly(s)$.
\end{lemma}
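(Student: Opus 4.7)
The plan is to compose three pseudorandom primitives: (i) a length-$w_2$ walk on an explicit expander, yielding one \emph{group} of $w_2$ samples in $[M]$; (ii) an averaging sampler producing a universe of candidate groups; (iii) uniform subsampling to pick $w_1$ of them. Throughout, set $w_2 := C_0\log w$ and $w_1 := w/w_2 = \Theta(w/\log w)$, so $w_1 w_2 = \Theta(w)$.

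\emph{Step 1: Single walk.} Fix an explicit expander $G$ on $[M]$ with degree $D = w_2^{\Theta(\log w_2)}$ and spectral gap $1-\lambda = \Omega(1)$. A length-$w_2$ walk costs $\log M + w_2\log D = \Oh(\log M + w_2\log^2 w_2)$ random bits. For any $g$ obeying $\P(g(X)>\lambda)\leq\exp(-e^\lambda)$, direct integration gives $\E g^k \leq \Oh(1)^k$ for every $k$, so $g$ is sub-exponential with constant parameter. A Bernstein/sub-exponential expander deviation inequality (in the spirit of Lezaud, or the exponential-tail extension of \cite[Theorem 4.22]{Vadhan12}) then yields
\begin{equation*}
    \P\Big(\sum_{i \leq w_2} g(X_i) > Cw_2\Big) \leq \exp(-\Omega(w_2))
\end{equation*}
for a universal $C$. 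Union-bounding over the $R \leq w^2$ functions and choosing $C_0$ sufficiently large, a single walk fails to be $C$-small with probability $\eta := \exp(-\Omega(w_2))$.

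\emph{Step 2: Averaging sampler.} Let $\Omega$ denote the space of walks, so $|\Omega|= MD^{w_2} = 2^{\Oh(\log M + w_2\log^2 w_2)}$, and for any fixed $g_1,\dots,g_R$ let $\mathsf{fail}:\Omega\to\{0,1\}$ indicate bad walks, with $\E\,\mathsf{fail}\leq\eta$. Invoke Zuckerman's averaging sampler \cite{DBLP:journals/rsa/Zuckerman97} at accuracy $\gamma := \eta$ and failure probability $\delta_s := \exp(-\Omega(w))$: its seed length is $\log|\Omega|+\Oh(\log\delta_s^{-1}) = \Oh(\log M + w_2\log^2 w_2 + w) = \Oh(\log M + w)$, since $w_2\log^2 w_2 = \Oh(\log w\cdot(\log\log w)^2) = \oh(w)$. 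The sampler outputs $W = \poly(w)$ candidate walks whose empirical bad-fraction is within $\gamma$ of $\E\,\mathsf{fail}$ with probability $\geq 1-\delta_s$.

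\emph{Step 3: Subsampling and conclusion.} $\Gsamp$ returns $w_1$ candidates picked uniformly and independently from $[W]$, using $w_1\log W = \Oh(w)$ extra bits. On the sampler's success event, each chosen walk is bad with probability at most $2\eta$, so by a binomial tail bound
\begin{equation*}
    \P\big(\#\{\text{bad picks}\}>w_1/2\big)\leq \binom{w_1}{w_1/2}(2\eta)^{w_1/2}\leq \exp(-\Omega(w_1w_2))=\exp(-\Omega(w)),
\end{equation*}
using $w_2 = C_0\log w$ with $C_0$ large enough that $\eta^{1/2}$ absorbs the $2^{w_1}$ prefactor. Combined with $\delta_s$, this gives the lemma's failure bound, with total seed length $s=\Oh(\log M+w)$ and evaluation time $\poly(s)$. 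The main obstacle is Step 1: naively truncating $g$ at level $B=\Theta(\log w_2)$ and applying the bounded-function expander Chernoff loses a $B^2$ factor in the exponent, giving only $\exp(-\Omega(w_2/\log^2 w_2))$ --- too weak to survive the union bound inside the budget $w_2 = \Theta(\log w)$. The doubly-exponential tail condition is precisely what upgrades bounded Chernoff to a Bernstein/sub-exponential expander bound recovering the correct exponent; the remaining two steps are essentially bookkeeping around Zuckerman's sampler and a binomial tail.
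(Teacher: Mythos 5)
Your high-level architecture matches the paper exactly: one expander walk per group, an averaging sampler over the space of walk-seeds, and a uniform subsample to choose $w_1$ groups; Steps 2 and 3 are essentially \LemmaR{reduced-universe} and the paper's final combination, and your seed-length bookkeeping there is correct. The gap is entirely in Step~1, and it is the part of the argument that the paper spends the most effort on.

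You invoke a ``Bernstein/sub-exponential expander deviation inequality'' to jump from the bounded-function expander Chernoff to the unbounded $g$'s with doubly-exponential tails. No such off-the-shelf result exists in the form you need. The references you gesture at (Lezaud, \cite[Theorem 4.22]{Vadhan12}, and also the bound in \cite{DBLP:journals/corr/RaoR17}) all apply to functions with values in $[0,1]$; the MGF bound $\E\exp(\theta\sum f(Y_i)) \le \exp(2\mu(e^\theta-1))$ that underlies the strong-sampler tail estimate is explicitly for bounded $f$ and for $\theta$ capped at $\ln\lambda^{-1}-1$. Proving that it still gives $\exp(-\Omega(w_2))$ for an unbounded $g$ with merely doubly-exponential tails is exactly the content of \LemmaR{strong-implies-unbounded-tails}, which the paper proves via a multi-scale level-set decomposition $g \le T_0 + \sum_k 2^k T_0\, \mathbf 1[g>2^kT_0]$, applying the multiplicative Chernoff guarantee separately at each level. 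This is not a bookkeeping step: the low levels need the multiplicative strengthening of Chernoff (not the additive Hoeffding form), and the top levels above $\Theta(\log w_2)$ need a direct Markov argument because $\gamma$ would exceed $\gamma_0$ there. Your own computation correctly identifies that a single truncation at $B=\Theta(\log w_2)$ loses a $\mathrm{poly}(B)$ factor; what is missing is the argument that replaces that single cut by the geometric chain of cuts. Two smaller but real inaccuracies reinforce that this step isn't fully worked out: (i) $\E g^k \le \Oh(1)^k$ is false --- from $\P(g>\lambda)\le\exp(-e^\lambda)$ one gets $\|g\|_k \asymp \ln k$, so $\E g^k$ grows like $(\ln k)^k$; it is the $\psi_1$-Orlicz norm that is $\Oh(1)$, but a constant $\psi_1$-norm alone would not give $\exp(-\Omega(w_2))$ after truncation, which is why the lemma really does need the full doubly-exponential hypothesis. (ii) Your expander parameters are mutually inconsistent: you ask for degree $D=w_2^{\Theta(\log w_2)}$ but only $1-\lambda=\Omega(1)$; a constant spectral gap gives $\gamma_0=\Theta(1)$, which collapses the multiplicative Chernoff to the additive one and reintroduces the $\mathrm{poly}(B)$ loss. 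To run the paper's argument you need $\lambda \approx \exp(-\Theta(\log^2 w_2))$, matching your degree via the Ramanujan bound $\lambda \lesssim D^{-1/2}$; the large degree is precisely what buys a $\gamma_0 = \exp(\Theta(\log^2 w_2))$-strong sampler.
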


Let us now proceed with the proof of \TheoremR{constant-approx}.

Fix a stream of updates $x^{(1)}, x^{(t)}, \ldots x^{(T)} \in [n]$, and corresponding sets $S^{(t)} := \{ x^{(i)} : i \leq t\}$.

Consider $[N]$ as a set, with implicit bijection to a family of pairwise independent hash functions from $[n]$ to $[n]$, where $\log N = \Theta(\log n)$. For each $i \in [N]$, we have corresponding hash function $h_i : [n] \to [n]$, and estimates $Y_i^{(t)} := \max \{ \lsb(h_i(s)) : s \in S^{(t)}\}$ --- the estimate for $\lg |S^{(t)}|$ given by hash function $h_i$. We will focus on the error of those estimators $\hat{Y}_i^{(t)} := Y_i^{(t)} - \lg |S^{(t)}|$.

\begin{fact}
	\FactName{subexp-tails}
    The error terms satisfy following subexponential tail bound
	\begin{equation*}
			\P_{i \sim \Unif([M])}(|\hat{Y}_i^{(t)}| > \lambda) \lesssim 2^{- \lambda}.
	\end{equation*}
\end{fact}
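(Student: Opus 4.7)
The plan is to fix a time $t$, condition on the set $S = S^{(t)}$ with $m := |S^{(t)}|$, and split the bound into an upper and lower tail, each analyzed at the level of a single pairwise independent hash function $h$ drawn uniformly from the family (so the probability over $i \sim \Unif([N])$ is the same as the probability over $h$).

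For the upper tail $\Pr(\hat Y > \lambda)$, I would simply observe that $\hat Y > \lambda$ requires some $s \in S$ to satisfy $\lsb(h(s)) \geq \log m + \lambda$. Since marginals of a pairwise independent family are uniform, $\Pr(\lsb(h(s)) \geq k) = 2^{-k}$ for each fixed $s$, so a union bound over the $m$ elements of $S$ yields
\begin{equation*}
    \Pr(\hat Y > \lambda) \leq m \cdot 2^{-\log m - \lambda} = 2^{-\lambda}.
\end{equation*}

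For the lower tail $\Pr(\hat Y < -\lambda)$, let $k := \lceil \log m - \lambda\rceil$ and define the counting random variable $Z := \#\{s \in S : \lsb(h(s)) \geq k\}$. Then $\hat Y < -\lambda$ implies $Z = 0$. By linearity $\E Z = m \cdot 2^{-k} \simeq 2^\lambda$, and pairwise independence gives $\Var(Z) \leq \E Z$, since $Z$ is a sum of pairwise independent Bernoullis. Chebyshev's inequality then yields
\begin{equation*}
    \Pr(Z = 0) \leq \Pr(|Z - \E Z| \geq \E Z) \leq \frac{\Var(Z)}{(\E Z)^2} \leq \frac{1}{\E Z} \lesssim 2^{-\lambda}.
\end{equation*}
Combining the two bounds gives $\Pr(|\hat Y_i^{(t)}| > \lambda) \lesssim 2^{-\lambda}$ as claimed.

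I do not anticipate a real obstacle here; the main thing to be careful about is handling degenerate ranges (for instance, $\lambda$ so large that $\log m - \lambda < 0$, in which case the event is empty, or $\log m + \lambda > \log n$, in which case the upper tail bound is still valid but via the fact that $\lsb$ is capped at $\log n$). Both edge cases only strengthen the bound, so they can be dispatched in a single sentence rather than an involved case analysis.
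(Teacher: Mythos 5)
Your proof is correct and takes essentially the same approach as the paper: Markov (equivalently, a union bound) on the count of elements landing above $\log m + \lambda$ for the upper tail, and Chebyshev with pairwise independence on the count of elements landing above $\log m - \lambda$ for the lower tail. The paper's write-up actually contains a couple of typos (it writes $\lsb(h_i(s)) \leq k$ where $\geq$ is needed, and swaps the tail labels), so your version is the cleaner rendition of the intended argument.
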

\begin{proof}
	Consider random set $S_k^{(t)} := \{ s \in S^{(t)} : \lsb(h_i(s)) \leq k \}$. 

    For $k = \log |S^{(t)}| - \lambda$ we have $\E |S_k| \simeq 2^{\lambda}$, and $\Var |S_k| \leq \E |S_k|$, hence $\P(|S_k| = 0) \leq 2^{-\lambda}$ by Chebyshev inequality, and therefore $\P(\hat{Y}_i^{(t)} > \lambda) < 2^{-\lambda}$.

    For the lower tail bound it is enough to consider Markov inequality: if $k = \log |S^{(t)}| + \lambda$, we have $\E |S_k| \leq 2^{-k}$, and $\P(|S_k| \geq 1) \leq 2^{-k}$.
\end{proof}

We will be interested in $Z_i^{(t)} := \log ( 2 + |\hat{Y}^{(t)}_i| )$ which is proportional to the number of bits necessary to write down deviation $Y_i^{(t)}$ from $\lg S^{(t)}$. The \Fact{subexp-tails} implies that $Z_i^{(t)}$ have doubly-exponential tail bounds, up to some rescaling: $\P_i(Z_i^{(t)} > c \lambda) \leq \exp(-e^{\lambda})$ for some constant $c$.

Let us take a sequence $t_1, t_2, \ldots t_R$, such that $|S^{(t_{k})}| = 2^k$, where $R = \Theta(\log n)$. 
We can now apply \LemmaR{sampler-lemma}, with $M := N$ and functions $g_k : [M] \to \mathbb{R}_+$ given by $g_k(i) = Z_i^{(t_k)} / c$ and $w = \Theta(\log \delta^{-1} + \log n)$. 

The final algorithm will be following: in the initialization phase, we choose a uniformly random string $S \in \{0, 1\}^s$ and store it. Consider now $\Gsamp_S : [w_1] \to [N]^{w_2}$, as in the statement of \LemmaR{sampler-lemma}, which for each value $t \in [w_1]$ yields a group of $w_2$ seeds for pairwise independent hash function from $[n]$ to $[n]$. For every such group, for example a group $\Gsamp_S(k)$, we store all the values $\{Y_i^{(t)} : i \in \Gsamp_s(k)\}$ in the compressed form: we store separately a median of all estimates within group, and the differences between $Y_i^{(t)}$ and aforementioned median. If at any point in time a size of the whole description of a given group in bits exceeds some $C_2 w_2$ we mark this group as broken and we stop updating it (where $C_2 = 2 C c$). Clearly, the total space complexity is bounded by $\Oh(s + C_2 w_2 w_1) = \Oh(s + w) = \Oh(\log n + \log \delta^{-1})$.
 
We claim that from the $C$-smallness condition, we can deduce that for a majority of $t \in [w_1]$, at all times both the median of all the estimates within group is close to the actual $|S^{(t)}|$, and the total space to store the whole group is bounded by $C_2 w_2$. If this is the case, then as an estimate for $|S^{(t)}|$ we can just report the median over groups $t \in [w_1]$ that are not marked as broken, of all the medians within a group of estimates $Y_i^{(t)}$, and the correctness of the algorithm follows.

To finish the argument, we need to show that every $C$-small group of estimators indeed yields a good approximation for $F_0$, and is stored succinctly at all times (i.e. never becomes marked as broken). Consider a $C$-small group $H \in [N]^{w_2}$. First, we will argue that at all times $t_k$, we have $\log (\median(i \in H: Y_i^{(t_k)})) = \log |S^{(t)}| \pm C_3$. Indeed, we know that on average over all $i \in H$, we have $\E_{i \in H} \log |2 + Y_i^{(t_k)}| < C$, therefore for at least $2/3$ fraction of $i \in H$, we have $\log |2 + \hat{Y}_i^{(t_k)}| \leq 3 C$, which means that for those $i$ we have $\hat{Y}_i^{(t_k)} \leq 2^{3C}$. This, together with the definition of $\hat{Y}$ implies the claim with $C_3 = 2^{3C}$. To argue that we are storing group $H$ using $\Oh(w_2)$ bits, let $M^{(t_k)}$ be the median of all $Y_i^{(t_k)}$ over $i$ in $H$. The space to store the group is given by $\Oh(\log \log n)$ bits to store the median within a group, and $\Oh(\sum_{i \in H} \log \left( 2 + |Y_i - M| \right))$ to store the rest. We have bound
\begin{equation*}
    \sum_{i \in H} \log \left(2 + |Y_i^{(t_k)} - M^{(t_k)}| \right) \leq \sum_{i \in H} \log \left( 2 + |Y_i^{(t_k)} - \log F_0^{(t_k)}|\right) + \sum_{i \in H} \log (1 + |M^{(t_k)} - \log F_0^{(t_k)}|) \leq \Oh(w)
\end{equation*}
where the first sum is bounded because of the $C$-smallness condition.

Finally, we have to say that if the group satisfies those two properties at all times $t_k$, then those properties are satisfied (with larger constants) at all time steps $t \leq T$. To see that, fix some $t$ between $t_k$ and $t_{k+1}$. Note that $Y_i^{(t)}$ is non-decreasing with respect to $t$, and we have
\begin{equation*}
	|Y_i^{(t)} - M^{(t)}| \leq |Y_i^{(t_{k+1})} - M^{(t_k)}|\leq |Y_i^{(t_{k+1})} - M^{(t_{k+1})}| + |M^{(t_{k+1})} - M^{(t_k})|.
\end{equation*}
Moreover, by triangle inequality 
\begin{equation*}
	|M^{(t_{k+1})} - M^{(t_k)}| \leq |M^{(t_{k+1})} - \log F_0^{(t_{k+1})}| + |M^{(t_{k})} - \log F_0^{(t_k)}| + |\log F_0^{(t_{k+1})} - \log F_0^{(t_k)}|
\end{equation*}
and each of those terms is bounded by constant.

This implies
\begin{align*}
    \sum_{i \in H} \log \left(2 + |Y_i^{(t)} - M^{(t)}|\right) 
    & \leq \sum_{i \in H} \left[\log \left(1 + |Y_i^{(t_{k+1})} - M^{(t_{k+1})}|\right) 
+  \log \left(1 + |M^{(t_k)} - M^{(t_{k+1})}|\right)\right] \\
&= \Oh(|H|) = \Oh(w_2).
\end{align*}

This completes the proof of the correctness of the algorithm --- at any step $t$, all $C$-small groups are not marked as broken, and all of them report a constant approximation. Strictly more than half of all the groups is $C$-small, hence the median of all groups that are still active, has to be a constant approximation to the quantity of interest as well.

\section{High accuracy regime \SectionName{high-accuracy}}

In this section we prove \TheoremR{high-accuracy}. As a building block we will use algorithm discussed in \cite[Section 3.2]{DBLP:conf/pods/KaneNW10}. In the Appendix we prove the following, qualitatively stronger bounds on the space complexity of their algorithm. The construction of the algorithm, and correctness analysis was already present in \cite{DBLP:conf/pods/KaneNW10} --- correctness can be also deduced from the discussion in \Section{strong-tracking}, where we discuss this algorithm in detail, and show stronger guarantees for a slight variation of it. Note that in the original paper the guarantees on the space complexity of this algorithm were proven when $\frac{1}{\varepsilon^2} > \log n$, as this could be assumed without loss of generality in their setting. For us, the scenario when $\frac{1}{\varepsilon^2} < \log n$ is relevant.

\begin{restatable}[]{theorem}{knwreduction}
	\TheoremName{knw-reduction}
	There is an algorithm $\mathcal{F}_\varepsilon$ which gives a $(1+\varepsilon)$-approximation to $F_0^{(t)}$ with probability at least $\frac{5}{6}$, assuming access to an oracle providing strong tracking of $F_0^{(t)}$ with constant factor approximation $C$, and oracle access to $\Oh(\log n + \poly\log(1/\varepsilon))$ additional random bits. The space usage of this algorithm at any given time $t$ (excluding random bits mentioned above), denoted by $W^{(t)}$, satisfies
	\begin{equation}
		\P(W^{(t)} > \frac{C_2}{\varepsilon^2}) \leq \varepsilon^4 \EquationName{small-eps}
	\end{equation}
	and
	\begin{equation}
		\P(W^{(t)} > \frac{\lambda}{\varepsilon^2}) \leq \exp(-e^{\Omega(\lambda)}).
		\EquationName{large-eps}
	\end{equation}
	Moreover for $t_1 < t_2$ such that $|S^{(t_1)}| \geq |S^{(t_2)}|/2$ we have 
	\begin{equation}
		W^{(t_1)} \leq W^{(t_2)} + \Oh(\frac{1}{\varepsilon^2}) \EquationName{timestops}
	\end{equation}
	for some universal constant $C$.
\end{restatable}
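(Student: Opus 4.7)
The plan is to open up the data structure from Section~3.2 of \cite{DBLP:conf/pods/KaneNW10} and analyze its space footprint at each step of the stream with sharper tools than in the original paper. Given the constant-factor oracle, the algorithm zooms in on the subsampling level $k$ for which $S_k^{(t)} := \{s \in S^{(t)} : \lsb(h_1(s)) \geq k\}$ has size $\Theta(1/\varepsilon^2)$, and uses a (bounded-wise) independent hash function $h_2 : [n] \to [P]$ with $P = \Theta(1/\varepsilon^2)$ to produce the estimator $2^k \Phi_P^{-1}(|h_2(S_k^{(t)})|)$. The stored state is, for each bucket $p \in [P]$, the quantity $M_p^{(t)} := \max\{\lsb(h_1(s)) - k : s \in S^{(t)},\ h_2(s) = p\}$, represented by a global base $B^{(t)}$ (of size $\Oh(\log\log n)$) together with per-bucket deviations $X_p^{(t)} := M_p^{(t)} - B^{(t)}$. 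Consequently $W^{(t)} = \Oh(\log \log n) + \sum_{p=1}^{P} \Oh(\log(2 + |X_p^{(t)}|))$.

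The first ingredient is a per-bucket tail bound analogous to \Fact{subexp-tails}: conditional on the hash $h_2$, the number of elements of $S_k^{(t)}$ mapping to bucket $p$ has mean $|S_k^{(t)}|/P = \Oh(1)$ with polynomial tails (by Chebyshev using the limited independence of $h_2$), and conditional on that count the maximum $\lsb$ in the bucket concentrates around its expectation exactly as in \Fact{subexp-tails}. Combining these yields $\P(|X_p^{(t)}| > \lambda) \lesssim 2^{-\Omega(\lambda)}$, so the variable $Z_p^{(t)} := \log(2 + |X_p^{(t)}|)$ has mean $\Oh(1)$ and doubly-exponential tails $\P(Z_p^{(t)} > \lambda) \leq \exp(-e^{\Omega(\lambda)})$. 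In particular $\E W^{(t)} = \Oh(P) = \Oh(1/\varepsilon^2)$.

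The second ingredient is a sharp tail bound on $W^{(t)} = \sum_p Z_p^{(t)}$ assuming only bounded-independence of the hash functions. I would take $h_1, h_2$ to be $r$-wise independent with $r = \Theta(\log(1/\varepsilon))$ and use a Khintchine--Rosenthal-style inequality to control the $r$-th central moment of $W^{(t)}$ by $\Oh((rP)^{r/2})$, plus a small correction coming from the doubly-exponential individual tails. Markov's inequality at the $r$-th moment then gives $\P(W^{(t)} > C_2/\varepsilon^2) \leq (C_3 r \varepsilon^2)^{r/2} \leq \varepsilon^4$ for $C_2$ large enough, proving \Equation{small-eps}. Pushing the same moment argument with larger $r$, and using that for $\lambda \gg 1$ the only efficient way for the sum to exceed $\lambda P$ is to have a single bucket with $Z_p \gtrsim \lambda$, gives \Equation{large-eps} after a union bound over $P$ buckets (which is absorbed into the $\Omega(\lambda)$ exponent because $\log P = \Oh(\log(1/\varepsilon))$).

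Finally, for the monotonicity property~\Equation{timestops}: if $|S^{(t_1)}| \geq |S^{(t_2)}|/2$ with $t_1 < t_2$, then both the chosen level $k$ and the base $B^{(t)}$ change by at most an additive constant between $t_1$ and $t_2$. Since $M_p^{(t)}$ is non-decreasing in $t$, we have $\bigl||X_p^{(t_1)}| - |X_p^{(t_2)}|\bigr| = \Oh(1)$ for every bucket $p$, so $\log(2 + |X_p^{(t_1)}|) \leq \log(2 + |X_p^{(t_2)}|) + \Oh(1)$; summing over all $P = \Oh(1/\varepsilon^2)$ buckets yields the claim. The main obstacle will be the tail estimate~\Equation{small-eps}, which is much stronger than what Chebyshev affords and demands a careful accounting of moments of sums of bounded-independent variables whose individual distributions have heavy (doubly-exponential) upper tails; verifying that $r = \Theta(\log(1/\varepsilon))$-wise independence of $h_1$ and $h_2$ suffices simultaneously for \Equation{small-eps} and \Equation{large-eps} is the delicate technical heart of the argument deferred to the appendix.
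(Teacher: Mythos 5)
There is a genuine gap in your handling of the hash-function independence, and it matters for the random-bit budget promised in the statement. You take $h_1 : [n] \to [n]$ (and $h_2$) to be $r$-wise independent with $r = \Theta(\log(1/\varepsilon))$. Storing the seed of an $r$-wise independent function on domain $[n]$ costs $\Theta(r\log n) = \Theta(\log(1/\varepsilon)\,\log n)$ random bits, which is not $\Oh(\log n + \polylog(1/\varepsilon))$ unless $\varepsilon$ is a constant --- precisely the regime where the sharper analysis is irrelevant. The paper avoids this by never putting more than $8$-wise independence on the large-domain hash $h_1$; high ($\polylog(P)$-wise) independence is reserved for $h_4 : [P^2]\to[P]$, whose domain has size $\poly(1/\varepsilon)$ so its seed costs only $\polylog(1/\varepsilon)$ bits, and --- crucially --- $h_4$'s high independence is needed for the \emph{correctness} argument (the balls-and-bins martingale in \Section{strong-tracking}), not for the \emph{space} analysis you are attempting here.

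The second, related gap is your choice to decompose $W^{(t)}$ as a sum over the $P$ buckets, $W^{(t)} = \Oh(\log\log n) + \sum_{p\le P}\Oh(\log(2+|X_p^{(t)}|))$, and to apply a Rosenthal-type moment bound to that sum. The $X_p^{(t)}$ are maxima over random partitions of $S_k^{(t)}$ induced by $h_2$, and they carry global dependence through $h_2$ (the bucket occupancies sum to $|S_k^{(t)}|$); getting an honest $\Oh((rP)^{r/2})$ bound on their $r$-th central moment would require unwinding both $h_1$ and $h_2$ simultaneously. The paper's cleaner move is to upper-bound the storage instead by a sum over \emph{stream elements}: $W^{(t)} \le \Oh(P) + \sum_{s\in S^{(t)}}\max(0,\lsb(h_1(s))-D^{(t)})$. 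The summands $K_s = \max(0,\lsb(h_1(s))-D^{(t)})$ are $8$-wise independent purely because $h_1$ is $8$-wise independent, they have exponential tails with mean $\Theta(2^{D_0}/(\varepsilon^2 \tilde{F}^{(t)}))$, and \LemmaR{supber-concentration} with $p=8$ then yields \Equation{small-eps} directly, while \Equation{large-eps} follows from a single union bound over $s \in S^{(t)}$ (no moments at all). Your decomposition also needlessly carries an $\Oh(\log\log n)$ base-storage term that is deterministic and would falsify \Equation{small-eps} whenever $1/\varepsilon^2 < \log\log n$; the base $D^{(t)}$ is recoverable from the constant-factor tracking oracle, so it need not be charged to $W^{(t)}$. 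Finally, for \Equation{timestops}: your statement $\bigl||X_p^{(t_1)}|-|X_p^{(t_2)}|\bigr| = \Oh(1)$ is false, since an incoming element may cause $M_p^{(t)}$ to jump; the correct observation is the one-sided inequality $Z_i^{(t_2)} - D^{(t_1)} = (Z_i^{(t_2)} - D^{(t_2)}) + (D^{(t_2)} - D^{(t_1)})$ together with $\log(a+b)\le\log a + b$, which gives the desired $W^{(t_1)} \le W^{(t_2)} + \Oh(P)$.
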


We will show how assuming this theorem we can prove \TheoremR{high-accuracy}, leveraging tools described in \Section{constant-approx}.

First of all, note that on a way to prove \TheoremR{high-accuracy}, we can assume without loss of generality that $\frac{1}{\varepsilon^2} < \log n$, for if it is not the case, we can just use $\log \delta^{-1}$ parallel repetitions of the KNW algorithm, to achieve $\delta$ failure probability with space $\Oh(\log n \log \delta^{-1} +  \frac{\log \delta^{-1}}{\varepsilon^2}) = \Oh(\frac{\log \delta^{-1}}{\varepsilon^2})$. In particular, this implies that the number of random bits used in \TheoremR{knw-reduction} is $\Oh(\log n + \poly \log \frac{1}{\varepsilon^2}) = \Oh(\log n)$.

We consider two separate cases, depending on relation between $\varepsilon$ and $\log n$. First, let us discuss case when $\varepsilon < \left(\frac{1}{\log n}\right)^{1/4}$. In this scenario, \Equation{small-eps} implies that for any specific position with probability $1 - \frac{C}{\log n}$ the total space consumption of a single instance of KNW algorithm use space $\Oh(\frac{1}{\varepsilon^2})$. Because of \Equation{timestops}, we can union bound only over positions for which $F_0^{(t)}$ grows by a factor of two (there are $\Oh(\log n)$ such positions), to ensure that with probability $\frac{5}{6}$ single instantiation of the algorithm uses space $\Oh(\frac{1}{\varepsilon^2})$ at all times.

We will use $\Oh(\log \delta^{-1})$ parallel instantiations of this algorithm. We use an algorithm which existence is guaranteed by \TheoremR{constant-approx} instantiated with failure probability $\delta$ to provide a strong-tracking oracle for all those implementations simultaneously. Instead of using $\log \delta^{-1}$ independent seeds across different instantiation of the algorithm $\tilde{F}_\varepsilon$, we consider the following standard pseudorandom object raising from random walks over explicit low degree expander graphs.

\begin{Definition}{\cite[Chapter 3]{Vadhan12}}
    A function $\Gamma : \{0, 1\}^s \times [w] \to [M]$ is \emph{$(\varepsilon, \delta)$-averaging sampler}, if for any function $f: [M] \to [0,1]$ and random variables $Y_i := \Gamma(U, i)$ for uniformly random $U$, we have
    \begin{equation*}
	    \P\left(\left|\sum_{i \leq w} f(Y_i) - \mu w\right| > \varepsilon w\right) < \delta,
    \end{equation*}
    where $\mu := \E_{Y \sim \Unif([M])} f(Y)$.

    A $(\varepsilon, \delta)$ sampler is called \emph{explicit} if $\Gamma$ can be computed in polynomial time in $s$ and $w$.
\end{Definition}
\begin{theorem}{\cite[Corollary 4.41]{Vadhan12}}
    \TheoremName{expander-ave-sampler}
    For every $\varepsilon, \delta$ there exist an explicit $(\varepsilon,\delta)$-averaging sampler, with the number of samples $w = \Oh_\varepsilon(\log \delta^{-1})$ and seed length $s = \log M + \Oh_\varepsilon(w)$, where $\Oh_\varepsilon$ notation hides constant depending on $\varepsilon$.

    Moreover $\Gamma$ can be computed using space $\Oh(s + w)$.
\end{theorem}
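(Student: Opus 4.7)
The plan is to instantiate $\Gamma$ as a random walk on an explicit constant-degree spectral expander, following the standard construction in Chapter~4 of Vadhan's monograph. Fix a fully explicit family of expanders on vertex set $[M]$ (for example via the zig-zag product of Reingold--Vadhan--Wigderson, or by taking repeated powers of a constant-degree expander) with degree $d = d(\varepsilon)$ chosen so that the second-largest normalized eigenvalue satisfies $\lambda \leq \varepsilon/2$; such graphs admit $\poly(\log M)$-time local neighbour evaluation. Interpret the seed $U$ as a starting vertex $v_1 \in [M]$ (costing $\log M$ bits) together with $w-1$ edge labels in $[d]$, and define inductively $v_{i+1}$ to be the neighbour of $v_i$ selected by the $i$-th label. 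Set $\Gamma(U, i) := v_i$. This gives seed length $s = \log M + (w-1)\log d = \log M + \Oh_\varepsilon(w)$.

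To verify the sampling guarantee, I would invoke the expander Chernoff bound (Gillman, or the spectral version in \cite[Theorem 4.22]{Vadhan12}): for any $f : [M] \to [0,1]$ with mean $\mu$,
\begin{equation*}
    \P\left(\left| \frac{1}{w}\sum_{i=1}^w f(v_i) - \mu \right| > \varepsilon\right) \leq 2 \exp\left(-\Omega\left((1-\lambda)\,\varepsilon^2 \, w\right)\right) \leq 2\exp(-\Omega_\varepsilon(w)).
\end{equation*}
Choosing $w = \Theta_\varepsilon(\log \delta^{-1})$ makes the right-hand side at most $\delta$, which is exactly the $(\varepsilon,\delta)$-sampler guarantee. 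Note that the initial vertex $v_1$ is uniform and, since the expander is regular, every subsequent $v_i$ is marginally uniform as well, so the mean $\mu$ coincides with $\E_{Y \sim \Unif([M])} f(Y)$.

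For the space bound, observe that a streaming evaluation of the walk only needs the current vertex $v_i \in [M]$ (using $\log M$ bits) together with the unread suffix of edge labels ($\Oh_\varepsilon(w)$ bits), and each step executes the neighbour function on a graph with constant-size local description, taking space $\Oh(\log M)$. Total working space is therefore $\Oh(s + w)$, as claimed.

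The main technical hurdle is producing the explicit expander with the required eigenvalue gap and efficient neighbour access; once it is in hand, the sampler property is a direct consequence of the expander Chernoff bound. Since this theorem is cited verbatim from \cite{Vadhan12}, I would simply refer the reader there for the detailed construction rather than reproduce it.
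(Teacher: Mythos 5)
Your proposal is correct and matches the standard construction that Vadhan gives in Chapter~4 of~\cite{Vadhan12}: take a random walk on an explicit constant-degree expander with eigenvalue gap depending on $\varepsilon$, and apply the expander Chernoff bound. The paper cites this theorem as a black box from~\cite{Vadhan12} rather than proving it, so there is no alternative in-paper argument to compare against; your sketch, including the accounting of seed length $\log M + (w-1)\log d$ and the streaming-evaluation space bound, is exactly the intended derivation.
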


Consider $[M]$ to be the space of all possible random strings that were to be supplied to the algorithm $\mathcal{F}_\varepsilon$, and note that $\log M = \Oh(\log n)$. Let us fix an input stream, and condition on specific realization of the constant approximation tracking oracle (assuming that it succeeded --- we can bound the failure probability by $\delta$). For $k \in [M]$ let $\tilde{F}_0(k)$ be the approximation to $F_0^{(T)}$ reported by algorithm $\mathcal{F}_\varepsilon$ while supplied random string corresponding to $k$. We can define $f : [M] \to \{0, 1\}$, to be $f(k) := 1$ if and only if $|\tilde{F}_0(k) - F_0^{(T)}| < \varepsilon F_0^{(T)}$. Clearly, we have $\E_{k \sim \Unif([M])} f(k) > \frac{5}{6}$ --- this follows from the correctness guarantee for algorithm $\mathcal{F}_\varepsilon$.

Consider now $\Gamma$, a $(\frac{1}{6}, \delta)$ averaging sampler as in \TheoremR{expander-ave-sampler}. Except with failure probability $\delta$ over uniform random seed $S$ it will yield us a sequence $\Gamma_S(1), \ldots, \Gamma_S(w)$ of seeds, such that at least $\frac{2}{3} w$ amongst $\tilde{F}_0(\Gamma_S(1)), \ldots \tilde{F}_0(\Gamma_S(w))$ yields a good approximation to $F_0$. In this case, if we report median of all $\tilde{F}_0(k)$, it will be a valid answer.

The space of this algorithm is $\Oh(\log n + \log \delta^{-1})$ for the constant approximation oracle, $\Oh(\log n + \log \delta^{-1})$ for storing the seed to the averaging sampler, and $\Oh(\frac{\log \delta^{-1}}{\varepsilon^2})$ for storing all $w$ instantiations of $\mathcal{F}_\varepsilon$ algorithm. This yields total space complexity $\Oh(\log n + \log \delta^{-1})$. The failure probability of each one of the three phases is bounded by $\delta$, hence the total failure probability is bounded by $3 \delta$, and the result follows by rescaling $\delta$ by a constant factor.

Let us now turn our attention to the analysis of the second case, where $\varepsilon > \left(\frac{1}{\log n}\right)^{1/4}$. In this case, the proof will make use of \Equation{large-eps}, and mimic the proof of \TheoremR{constant-approx}. Note that in this regime of parameters, we can assume without loss of generality, that $\log \delta^{-1} \geq C \sqrt{\log n}$, as otherwise we could take $\delta'$ such that $\log \delta'^{-1} = \log \delta^{-1} + C\sqrt{ \log n }$, and the additional $\frac{C \sqrt{\log n}}{\varepsilon^2}$ term in the space complexity will be dominated by $\Oh(\log n)$ term anyway.

First of all, by naive failure probability amplification, after adjusting other constant in \TheoremR{knw-reduction}, we can actually assume that failure probability of this algorithm is small constant $c_0$. We will apply \LemmaR{sampler-lemma} where the universe $[M]$ is given by $M=2^{r}$, with $r$ being the number of random bits accessed by this new adjusted algorithm (in particular $\log M =  \Theta(\log n)$). 

Let us take a sequence $t_1, \ldots t_{R-1}$ where $t_1 = 0$, and each $t_j$ for $j > 1$ is smallest such that $F_0^{(t_j)} \geq 2 F_0^{(t_{j-1})}$. Clearly, $R = \Oh(\log n)$.

We will use $g_1, \ldots g_{R}$ to be given by $g_i(m) := \frac{\varepsilon^2}{C} W^{(t_i)}_m$ for $i \leq R - 1$, where by $W^{(t_i)}_m$ we denote the space consumption of the instantiation of the algorithm described in \TheoremR{knw-reduction} with random bits given by $m \in [M]$. Finally, we pick $g_{R}(m)$ to be $0$ if the instance of algorithm corresponding to random bits $m$ succeeds to provide $1+\varepsilon$ approximation, and some large $C_0 \geq 3 C$ if it fails. Given that failure probability $c_0$ is small enough depending on $C_0$, we can ensure that this function $g_R$ indeed satisfy doubly-exponential tail bounds. For all previous functions $g_{1}, \ldots g_{R-1}$, doubly exponential tail bounds are guaranteed by \Equation{large-eps}. Finally, we can apply \LemmaR{sampler-lemma} with $w = \Theta(\log \delta^{-1})$ --- we assumed that $\log \delta^{-1} = \Omega(\sqrt{\log n})$, so the assumptions of this lemma are satisfied.

The sampler $\Gsamp$ guaranteed by \LemmaR{sampler-lemma} returns a sequence of groups of estimators, such that most of those groups are $C$-small (except with small failure probability $\delta$ over the choice of the seed). We wish to argue that if the sampler succeeds (i.e. most of the reported groups is $C$-small), then the algorithm will use small space, and will correctly return $(1+\varepsilon)$ approximation for the number of distinct elements. As in \Section{constant-approx}, we can discard any group for which the space consumption becomes too large over the course of algorithm, hence the total space is $\Oh(\frac{w}{\varepsilon^2} + \log n) = \Oh(\frac{\log \delta^{-1}}{\varepsilon^2} + \log n)$. By $C$-smallness condition restricted to functions $g_1, \ldots g_{R-1}$ and \Equation{timestops} majority of the groups (all $C$-small groups) are never discarded in this way --- the argument for this is identical as in the proof of \TheoremR{constant-approx}. We need to argue, that reporting median of all the medians within surviving groups indeed yields $(1+\varepsilon)$-approximation to the number of distinct elements. This is guaranteed by $C$-smallness condition applied to function $g_R$ --- indeed, for large enough $C_0$ we can ensure that any $C$-small group of estimators have at least $\frac{2}{3}$ fraction of estimators reporting value that is within $(1+\varepsilon)$ to the actual answer. 

\section{Strong tracking of distinct elements\SectionName{strong-tracking}}

In this section we prove \TheoremR{strong-tracking}. Let us first state a technical lemma essential in the argument. After stating this lemma we will show how, together with \LemmaR{4-wise-indep-random-walk}, those two imply \TheoremR{strong-tracking}. The rest of this section will be devoted to proving those lemmas.

\begin{Definition}
    For a finite universe $[K]$, let $\phi_K : [K]^* \to \bN$ be given by
    \begin{equation*}
        \phi_K(r_1, \ldots r_s) := \#\{ j \in K : \exists i, r_i = j\}.
    \end{equation*}

    Moreover, let $\Phi_K : \bN \to \bN$ be given by
    \begin{equation*}
        \Phi_K(t) := \E \phi(X_1, \ldots X_t)
    \end{equation*}
    over $X_1, \ldots X_t$ uniformly random in $[K]$ and independent.

    We will skip the index $K$, when the underlying universe is clear from context.
\end{Definition}

Function $\phi$ counts the number of non-empty bins after throwing balls into $K$ bins, and the following lemma states that if we track then number of non-empty bills while throwing balls at random it stays close to the expectation. A lemma like this would be much simpler corollary of Doobs Martingale inequality, if variables $X_1, \ldots X_R$ were known to be fully independent.
\begin{lemma}
    \LemmaName{balls-and-bins}
	Consider a sequence $X_1, \ldots X_R \in [K]$ $q$-wise independent for some $q = \Theta(\polylog R)$ where marginal distribution of each $X_i$ is uniform over $[K]$, and $R \leq K/20$. Then
    \begin{equation*}
        \sup_{t \leq R} |\phi(X_1, \ldots X_t)  - \Phi(t)| = \Oh(\sqrt{R})
    \end{equation*}
    with probability $3/4$.
\end{lemma}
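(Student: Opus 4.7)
The plan is to approximate $\phi$ by a polynomial of polylogarithmic degree in the categorical variables $X_i$ and then invoke a Doob-type $L^2$ maximal inequality adapted to polynomial processes under bounded independence. Expanding by inclusion-exclusion,
\begin{equation*}
\phi(X_1,\ldots,X_t) \;=\; \sum_{k=1}^{t} (-1)^{k+1} C_k^{(t)}, \qquad C_k^{(t)} := \#\bigl\{ S \subseteq [t] : |S|=k,\ X_i = X_j \text{ for all } i,j \in S\bigr\}.
\end{equation*}
Truncating this alternating sum at depth $d = \Theta(\polylog R)$ defines a polynomial approximation $\tilde\phi_t$ of degree $d$ (each surviving monomial is a $k$-way collision indicator for $k \leq d$). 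By the alternating-series remainder, $|\phi(X_1,\ldots,X_t) - \tilde\phi_t| \leq C_{d+1}^{(t)}$, whose expectation under $(d{+}1)$-wise independence equals $\binom{t}{d+1} K^{-d}$. For $t \leq R \leq K/20$ and $d$ a suitable polylog in $R$ this is much smaller than $1/R$, so a union bound over $t$ followed by Markov yields $\sup_t |\phi(X_1,\ldots,X_t) - \tilde\phi_t| < 1$ with probability at least $0.95$. The same estimate in expectation forces $|\E\tilde\phi_t - \Phi(t)| = \oh(1)$, and the problem reduces to bounding $\sup_{t\leq R} |\tilde\phi_t - \E\tilde\phi_t|$.

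For the process $M_t := \tilde\phi_t - \E\tilde\phi_t$, I would establish
\begin{equation*}
\P\!\left(\sup_{t\leq R} |M_t| > \lambda\right) \;\lesssim\; \frac{\Var(M_R)}{\lambda^2}
\end{equation*}
under $4d$-wise independence of $X_1,\ldots,X_R$. Under full independence, $M_t$ is a genuine martingale of degree-$d$ polynomials, and its two-point moments $\E M_s M_t$ have polynomial degree at most $2d$, so already $2d$-wise independence forces them to match the fully-independent values, giving orthogonality of increments. The textbook Doob argument pivots on the stopping time $\tau := \min\{t : |M_t| > \lambda\}$ and the cross-term vanishing $\E[M_\tau (M_R - M_\tau) \mathbf{1}_{\tau \leq R}] = 0$. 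To port this to limited independence I would rewrite the indicator $\mathbf{1}_{|M_t|>\lambda}$ as a low-degree polynomial surrogate --- for instance, a Chebyshev-type smoothing of the step function of degree $\Oh(\log R)$, with the additive smoothing error absorbed into the failure probability --- so that the cross-term expands into monomials of total degree at most $4d$, and these moments agree with the fully-independent values under $4d$-wise independence. The main obstacle is precisely this step: balancing the smoothing degree against its error while keeping the independence requirement at $\polylog R$. A naive grid-union-bound approach in place of a genuine maximal inequality would yield only an $\Oh(R^{2/3})$ sup bound, which does not suffice.

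It remains to show $\Var(M_R) = \Oh(R)$. Under $2d$-wise independence $\Var(\tilde\phi_R)$ expands into a sum of covariances of degree-$d$ monomials, each matching its fully-independent value, so the computation reduces to the fully-independent setting up to negligible truncation error. Writing $\phi(X_1,\ldots,X_R) = \sum_{j=1}^K \mathbf{1}[\text{bin } j \text{ is hit}]$ and using the fact that the bin-occupancy indicators are negatively correlated under full independence gives $\Var(\phi(X_1,\ldots,X_R)) \leq \E\phi(X_1,\ldots,X_R) \leq R$. Combining the maximal inequality with this variance bound yields $\sup_t |M_t| = \Oh(\sqrt R)$ with probability at least $0.9$; together with the truncation control of the first paragraph this proves $\sup_{t\leq R} |\phi(X_1,\ldots,X_t) - \Phi(t)| = \Oh(\sqrt R)$ with probability $3/4$.
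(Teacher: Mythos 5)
Your high-level framework --- polynomial approximation of $\phi$, moment-matching under bounded independence, then a maximal inequality --- is exactly the right toolkit, and in fact your inclusion-exclusion truncation is algebraically the same polynomial as the paper's $\hat\phi = \sum_k I_0(\sum_j \EQ(\bar X_j, k))$. However, there is a genuine gap at the center of the argument: the process $M_t := \tilde\phi_t - \E\tilde\phi_t$ is \emph{not} a martingale, even under full independence. Conditioning on $X_1,\ldots,X_t$ and averaging $X_{t+1}$, the increment $\tilde\phi_{t+1} - \tilde\phi_t$ has conditional mean $1 - t/K + \frac{1}{K}C_2^{(t)} - \frac{1}{K}C_3^{(t)} + \cdots$, which is a \emph{random} quantity (starting from the $C_2^{(t)}$ term). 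Equivalently, the untruncated version $\phi(X_1,\ldots,X_t) - \Phi(t)$ is a mean-reverting process, not a martingale: the conditional drift depends on the current occupancy count. This breaks the cross-term vanishing $\E[M_\tau(M_R - M_\tau)\mathbf 1_{\tau\le R}]=0$ that your Doob pivot requires, and it also means the increments of $M_t$ are not orthogonal under full independence, so $2d$-wise independence ``forcing moments to match'' does not buy you what you want.

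The paper's way around this is a change of variables you are missing: instead of $\tilde\phi_t - \E\tilde\phi_t$, it runs the maximal inequality on the \emph{forward-looking} Doob martingale $\hat S_t := \E_{X'}\hat\phi(X_1,\ldots,X_t,X'_{t+1},\ldots,X'_R)$, which genuinely is a bounded-difference martingale under full independence, so Azuma gives $\E(\hat S_i - \hat S_j)^4 \lesssim (i-j)^2$; these being degree-$\le 4d$ polynomial moments, they transfer to $4d$-wise independence, and then a stopping-time-free chaining lemma (the paper's \LemmaR{doobs}, dyadic decomposition with fourth moments) gives the $\Oh(\sqrt R)$ sup bound. The last missing ingredient in your write-up is the translation back: the identity $\E_{X'}\phi(X_1,\ldots,X_t,X'_{t+1},\ldots,X'_R) = \Phi\bigl[\Phi^{-1}(\phi(X_1,\ldots,X_t)) + R - t\bigr]$ (\RemarkR{smaller-bins}) together with the bi-Lipschitz property of $\Phi$ on $[0,K/20]$ (\Fact{phi-Lipschitz}) converts $\sup_t|\hat S_t - \hat S_0| = \Oh(\sqrt R)$ into $\sup_t|\phi(X_1,\ldots,X_t) - \Phi(t)| = \Oh(\sqrt R)$. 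A secondary, smaller issue: your alternating-series remainder bound $|\phi - \tilde\phi_t| \le C_{d+1}^{(t)}$ requires the $C_k$ to be decreasing, which fails when some bin count exceeds $2d$; the paper instead bounds the discrepancy in terms of $\binom{M(X)}{d}$ with $M(X)$ the max load and controls the tail of $M(X)$ by Chernoff under $r$-wise independence.
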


The following fact will also be useful
\begin{fact}[\cite{DBLP:conf/pods/KaneNW10}]
    \FactName{phi-Lipschitz}
     We can calculate $\Phi$ exactly as $\Phi_K(t) = t \left(1 - \left( 1 - \frac{1}{t}\right)^n\right)$ for $t > 0$, and $\Phi_K(0) = 0$.

    Moreover for all $0 \leq \alpha_1, \alpha_2 \leq K/20$, we have $0.9 |\alpha_2 - \alpha_1| \leq |\Phi(\alpha_2) - \Phi(\alpha_1)| \leq |x_2 - x_1|$.

    Finally, for $t < P/20$, we have $\Var(\phi(X_1, \ldots X_t)) \leq \Phi(t) \leq t$.
\end{fact}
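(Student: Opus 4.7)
The plan is to verify each of the three clauses by a direct calculation, writing $\phi$ as a sum of bin-indicator random variables. Let $Y_j := \mathbf{1}[\exists i \leq t : X_i = j]$ for $j \in [K]$, so that $\phi(X_1,\ldots,X_t) = \sum_{j \in [K]} Y_j$. Under the full independence implicit in the definition of $\Phi$, one has $\P[Y_j = 0] = (1 - 1/K)^t$, and linearity of expectation gives $\Phi_K(t) = K\bigl(1 - (1-1/K)^t\bigr)$ (the expression in the statement has $t$ and $K$ swapped, but the identity itself is immediate), with $\Phi_K(0)=0$ by inspection.

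For the bi-Lipschitz claim, the key identity, valid for any $0 \leq \alpha_1 \leq \alpha_2$, is the factorization
\begin{equation*}
\Phi_K(\alpha_2) - \Phi_K(\alpha_1) \;=\; (1 - 1/K)^{\alpha_1} \cdot K\bigl(1 - (1-1/K)^{\alpha_2 - \alpha_1}\bigr).
\end{equation*}
The upper bound then follows from $(1-1/K)^{\alpha_1} \leq 1$ together with $\Phi_K(\beta) \leq \beta$ on the integer grid (each additional ball raises the count by at most one, so $\Phi_K(t+1)-\Phi_K(t) \leq 1$ and we telescope). For the lower bound I would use $(1 - 1/K)^{\alpha_1} \geq (1 - 1/K)^{K/20} \geq \bigl((1-1/K)^K\bigr)^{1/20}$, which is bounded below by a constant strictly greater than $e^{-1/20} \approx 0.951$ once $K$ is at least a modest absolute constant (the case of tiny $K$ is vacuous as $\alpha \leq K/20$ forces $\alpha = 0$), and then the second-order Taylor estimate $K\bigl(1 - (1-1/K)^\beta\bigr) \geq \beta - \beta^2/(2K) \geq (1 - 1/40)\beta$ on $\beta \leq K/20$. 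Multiplying the two factors yields a constant strictly exceeding $0.9$.

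For the variance bound, expand $\Var(\phi) = \sum_j \Var(Y_j) + \sum_{j \neq k} \mathrm{Cov}(Y_j, Y_k)$. A one-line computation gives $\mathrm{Cov}(Y_j, Y_k) = (1 - 2/K)^t - (1 - 1/K)^{2t}$, which is negative because $1 - 2/K < (1-1/K)^2$. Hence $\Var(\phi) \leq \sum_j \Var(Y_j) \leq \sum_j \E Y_j = \Phi_K(t)$, and the already-established bound $\Phi_K(t) \leq t$ closes the chain. The only mildly delicate step is bookkeeping the transcendental constants in the lower Lipschitz bound to confirm the numerical threshold $0.9$ across the whole range $\alpha \leq K/20$; the rest is elementary.
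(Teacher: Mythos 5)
The paper does not supply a proof of this fact; it is attributed directly to Kane--Nelson--Woodruff, so there is nothing internal to compare against. Taking your argument on its own merits, it is essentially correct and self-contained: the indicator decomposition $\phi = \sum_j Y_j$, the factorization $\Phi_K(\alpha_2)-\Phi_K(\alpha_1) = (1-1/K)^{\alpha_1}\Phi_K(\alpha_2-\alpha_1)$, the negative-covariance computation for the variance clause, and the observation that the displayed closed form has $t$ and $K$ transposed are all right.

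One numerical slip in the lower Lipschitz bound: you claim $(1-1/K)^{K/20}$ is ``bounded below by a constant strictly greater than $e^{-1/20}$.'' This has the inequality backwards --- since $(1-1/K)^K < e^{-1}$ for every finite $K$, the quantity $(1-1/K)^{K/20}$ increases \emph{towards} $e^{-1/20}\approx 0.9512$ from below and never exceeds it. The correct observation is the one you need anyway: for $K\geq 20$ (the vacuous range $K<20$ forces $\alpha=0$ as you note) monotonicity gives $(1-1/K)^{K/20}\geq (19/20)^1 = 0.95$, and $0.95\cdot(1-1/40) = 0.92625 > 0.9$, so the numerical threshold still clears. With that one sentence corrected the argument is complete.
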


\begin{proof}[Proof of \TheoremR{strong-tracking}]
	First we will discuss how, given an upper bound $P$ on the number of distinct elements, we can analyze a variant of the algorithm in \cite{DBLP:conf/pods/KaneNW10} to argue, that in fact at all times $t$ it provides a $\pm \varepsilon K$ additive approximation to $|S^{(t)}|$, without any additional space blowup. This can be used to say that after amplifying the failure probability to $\delta/\log n$, by union bound over all positions where $|S^{(t)}|$ grows by a factor of two, we can obtain strong tracking guarantee with failure probability $\delta$.
	
	Take $P = \frac{100}{\varepsilon^2}$, and let us consider a 8-wise independent hash-function $h_1 : [n] \to [n]$ (8-wise independence here is used to get correct bounds on the space complexity in \TheoremR{knw-reduction}) , and random sets $S_k^{(t)} := \{ s \in S^{(T)} : \lsb(h_1(s)) \geq 2^k \}$ as previously. Let us consider in addition a pairwise independent hash function $h_3 : [n] \to [P^2]$, and finally a $\polylog(P)$-wise independent hash function $h_4 : [P^2] \to [P]$. Define $h_2 : [n] \to [P]$ to be the composition $h_2 := h_4 \circ h_3$.

    In the Appendix~\ref{sec:appendix} it is discussed how, given oracle access to constant factor strong tracking, we can maintain a sketches of size $\Oh(\frac{1}{\varepsilon^2})$ on average (with some small constant probability of failure), such that we can recover $|h_2(S_k)|$ for any $k$ at any point of the stream.

    Let us fix any $K < n$, and let $k$ be such that $2^{-k} K \approx \frac{1}{10 \varepsilon^2}$. We wish to show that with probability $\frac{9}{10}$ we have
    \begin{equation}
        \forall t\leq T_0, |\Phi^{-1}(|h_2(S^{(t)}_k)|) 2^k| = |S^{(t)}| \pm \Oh(\varepsilon K)
        \EquationName{weak-tracking}
    \end{equation}
    where $T_0$ is such that $|S^{(T_0)}| = K$.

    If this were true, we could repeat the construction $\Oh(\log \log n + \log \delta^{-1})$ times, to amplify success probability for the median estimator to $1 - \frac{\delta}{\log n}$, and use a union bound to ensure that \Equation{weak-tracking} is satisfied for all $k$ simultaneously. This can kind of amplification can be implemented exactly as described in \Section{high-accuracy}.

    Given access to strong tracking oracle with constant failure probability, we know which set $|S_k|$ to use, at any given time, to estimate $|S^{(t)}|$, as above.

    We only need to show that \Equation{weak-tracking} indeed holds with large constant probability. We can assume without generality that $|S^{(t)}| = t$, i.e. all the elements in the input stream are distinct.

    First of all, we will show that $\forall t \leq T_0,\ |S_k^{(t)}|2^k = |S^{(t)}| \pm \varepsilon K$. Indeed, note that if we take $X_k := |S_k^{(t)}|2^k - |S^{(t - 1)}| 2^{k} - 1$, we can see that $X_k$ are 4-wise independent (because hash function $h_1$ was assumed to be 4-wise independent), and satisfy $\E X_k = 0$, $\E X_k^2 \approx 2^k$. By applying \LemmaR{4-wise-indep-random-walk} we see that $\sup_{t \leq T_0} \left|2^k |S^{(t)}_k| - t\right| = \Oh(\sqrt{ 2^{k} T_0}) = \Oh( \varepsilon K )$.

    By birthday paradox, with probability $9/10$ we have $|h_3(S_k^{(t)})| = |S_k^{(t)}|$, i.e. function $h_3$ has no collisions in the part of the stream of interest. Moreover, by \LemmaR{balls-and-bins}, conditioned on $S_k^{(t)}$ we have that with high probability at all times $t$ in the range of interest that $|h_2(S_k^{(t)})| = \Phi(|S_k^{(t)}|) \pm \Oh(\frac{1}{\varepsilon})$. This together with \Fact{phi-Lipschitz}, implies that $\Phi^{-1}(|h_2(S_k^{(t)})|)$ yields at all times a good approximation of $S_k^{(t)}$, and by composing with the previous argument, this shows \Equation{weak-tracking}
\end{proof}

The rest of this section will be devoted to proving \LemmaR{4-wise-indep-random-walk} and \LemmaR{balls-and-bins}.

% \subsection*{Doobs inequality}
%\begin{lemma}
%    \LemmaName{doobs}
%    Let $X_1, X_2, \ldots X_N$ be a sequence of random variables 4-wise independent random variables, satisfying $\E X_i^2 \leq 1$, and for $i\not=j$, both $\E X_i X_j = 0$ and $\E X_i^2 X_j^2 \leq 1$. 
%    If we take $S_t := \sum_{k < t} X_t$, then
%    \begin{equation*}
%        \P( \sup_{t \leq N} |S_t| > \gamma \sqrt{N}) \lesssim \frac{1}{\gamma^2}
%    \end{equation*}
%\end{lemma}

\subsection{Kolmogorov inequalities with bounded independence}
Here we will show \LemmaR{4-wise-indep-random-walk}. Let us first discuss a similar lemma under fourth moment assumptions, but crucially without any independence assumptions on the increments. This lemma we will use to control deviations of pseudorandom versions of Doobs martingale in the proof of \LemmaR{balls-and-bins}. The following proof is basically present in \cite{BravermanCINWW17}, although with different statement of the lemma --- understanding it will be helpful in understanding much more delicate proof of \LemmaR{4-wise-indep-random-walk}, where we do not have control over fourth moments.
\begin{lemma}
		\LemmaName{doobs}
		Let $X_1, X_2, \ldots X_T$ be collection of random variables with $\E X_i = 0$, and let $Y_j := \sum_{i \leq j} X_j$. If for any $i, j$ we have $\E(Y_i - Y_j)^4 \lesssim (i - j)^2$, then
		\begin{equation*}
				\P(\sup_{i} Y_i \geq \sqrt{T} \lambda) \lesssim \lambda^{-4}
		\end{equation*}
\end{lemma}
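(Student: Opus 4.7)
The plan is to use a dyadic chaining argument that upgrades the hypothesis on fourth moments of arbitrary increments into a fourth-moment bound on the running maximum $\sup_i |Y_i|$, after which Markov's inequality concludes. No martingale or independence structure will be used along the way; only the given bound $\E(Y_i - Y_j)^4 \lesssim (i-j)^2$.

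First I would reduce to $T = 2^n$ (losing at most a constant factor, e.g.\ by padding with zero variables). For each scale $k \in \{0, 1, \ldots, n\}$ consider the aligned dyadic intervals $I^k_j := [j \cdot 2^k, (j+1) \cdot 2^k]$ for $j = 0, \ldots, 2^{n-k}-1$, and the associated jumps $\Delta(I^k_j) := Y_{(j+1) 2^k} - Y_{j \cdot 2^k}$. Using the binary expansion of $i$, the prefix $[0, i]$ decomposes as a disjoint union of at most one dyadic interval at each scale (whose starting endpoint is automatically aligned, since $\sum_{\ell > k} b_\ell 2^\ell$ is a multiple of $2^k$). Hence $Y_i$ is a telescoping sum of at most $n+1$ such increments, and the triangle inequality gives
\begin{equation*}
\sup_{i \le T} |Y_i| \;\le\; \sum_{k=0}^{n} M_k, \qquad M_k \;:=\; \max_{0 \le j < 2^{n-k}} |\Delta(I^k_j)|.
\end{equation*}

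Next I would control each $M_k$ in $L^4$ via the crude bound $M_k^4 \le \sum_j \Delta(I^k_j)^4$. The hypothesis gives $\E \Delta(I^k_j)^4 \lesssim (2^k)^2 = 2^{2k}$, and there are $2^{n-k}$ intervals at scale $k$, so $\E M_k^4 \lesssim 2^{n-k} \cdot 2^{2k} = T \cdot 2^k$; equivalently $\|M_k\|_4 \lesssim T^{1/4} \cdot 2^{k/4}$. Applying Minkowski's inequality in $L^4$,
\begin{equation*}
\bigl\| \sup_{i \le T} |Y_i| \bigr\|_4 \;\le\; \sum_{k=0}^{n} \|M_k\|_4 \;\lesssim\; T^{1/4} \sum_{k=0}^{n} 2^{k/4} \;\lesssim\; T^{1/4} \cdot 2^{n/4} \;=\; \sqrt{T},
\end{equation*}
since the geometric series is absorbed (up to a constant) by its top term. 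Thus $\E (\sup_i |Y_i|)^4 \lesssim T^2$, and Markov's inequality yields $\P(\sup_i Y_i \ge \sqrt{T} \lambda) \le \P(\sup_i |Y_i| \ge \sqrt{T}\lambda) \lesssim \lambda^{-4}$, as required.

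The only delicate point I expect is the balance of the geometric series in the last display: the exponent $1/4$ at each scale is precisely what allows the series to be dominated by its top term and thereby avoid a stray $\log T$ factor. The same scheme would carry through for any $p$-th moment hypothesis with $p > 2$ (under the analogous scaling $\E(Y_i - Y_j)^p \lesssim (i-j)^{p/2}$), but would break down at $p = 2$, where the corresponding series across scales would contribute an unavoidable $\log T$.
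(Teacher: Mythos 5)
Your proof is correct, and it reaches the same conclusion via dyadic chaining, but the execution differs from the paper's in a way that is worth noting. The paper partitions into good events: at each scale $k$ it declares the increment from a point to its predecessor on the next-coarser grid ``bad'' if it exceeds $\lambda 2^{k/3}\sqrt{\delta_k}$, union-bounds over scales (the factor $2^{k/3}$ is tuned so the resulting series $\sum_k 2^{-k/3}\lambda^{-4}$ converges), and then runs a deterministic telescoping bound on the complement. You instead work entirely in $L^4$: the crude $M_k^4 \le \sum_j \Delta(I^k_j)^4$ gives $\|M_k\|_4 \lesssim T^{1/4}2^{k/4}$, Minkowski's inequality sums this over scales to $\|\sup_i|Y_i|\|_4 \lesssim \sqrt{T}$, and a single application of Markov finishes. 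Your route avoids having to hand-tune the $2^{k/3}$ exponent --- the geometric convergence falls out automatically from $p=4>2$ --- and it yields a fourth-moment bound on $\sup_i|Y_i|$ as an intermediate statement, which is a strictly stronger and more reusable object than a tail bound at a single threshold. The dyadic bookkeeping is also sound: the binary expansion of $i$ does decompose $[0,i]$ into at most one aligned dyadic interval per scale, which is exactly what the paper's chain through the nested grids $A_k$ achieves in other words. Both proofs exploit the same phenomenon (a $p$-th moment hypothesis with $p>2$ lets the scale-series be dominated by its top term); yours just makes that mechanism more transparent.
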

\begin{proof}
		We assume without loss of generality that $T = 2^n - 1$, define $A_0 = \{0\}$, and in general $A_k := \{ j 2^{n-k} : j \in \{0, 1, \ldots 2^k - 1\}\}$, let us moreover define $\delta_k := 2^{n-k}$. 

		For each $k>1$ we have $\Pr(\exists j \in A_k - A_{k-1}, |Y_j - Y_{j - \delta_k}| > \lambda 2^{k/3} \sqrt{\delta_k}) \lesssim |A_k| (\lambda 2^{k/3})^{-4} \leq \lambda^{-4} 2^{-k/3}$. Note that if $j \in A_{k} - A_{k-1}$ then $j - \delta_k \in A_{k-1}$.

		Except with probability $\sum_k \lambda^{-4} 2^{-k/3} \lesssim \lambda^{-4}$ all of those events happen simultaneously for all $k$. In this case, by applying triangle inequality we have 
		\begin{equation*}
				\forall j |Y_j| \leq \sum_k 2^{k/3}\sqrt{\delta_k}\lambda = \sum_k 2^{n/2 - k/6}\lambda \lesssim \sqrt{T} \lambda.
		\end{equation*}
\end{proof}

\begin{proof}[Proof of \LemmaR{4-wise-indep-random-walk}]
    Let us assume without loss of generality that $T = 2^n$. Take $A_0 := \{0, T\}$, and for $k \leq n$, take $A_k := \{ j 2^{n-k} : j \in \{0, 1, \ldots 2^k\}\}$.

    For every $t \in A_{k} \setminus A_{k-1}$, it has two neighbours in $A_{k-1}$: those are $t + \delta_k, t - \delta_k \in A_k$, where $\delta_k := 2^{n-k+1}$.

    Observe that $\E_{X \sim \mathcal{D}} (S_t - S_{t - \delta_k})^2 \leq \delta_k$, and similarly $\E_{X \sim \mathcal{D}} (S_t - S_{t + \delta_k})^2 \leq \delta_k$, and moreover $\E_{X \sim \mathcal{D}} (S_t - S_{t - \delta_k})^2 (S_t - S_{t + \delta_k})^2 \leq \delta_k^2$ --- this can be shown by expanding both sums $(S_t - S_{t - \delta_k})$ and $(S_{t + \delta_k} - S_t)$ --- because of $4$-wise independence and $\E X_i X_j = 0$ for $i\not=j$, we have $\E (S_t - S_{t-\delta_k})^2 (S_{t+\delta_k} - S_t)^2 = \E \sum_{1 \leq i \leq \delta_k} \sum_{1 \leq j \leq \delta_k} X_{t - \delta_k + i}^2 X_{t + j}^2 \leq \delta_k^2$.
    
    Those bounds on second moments, together with Markov inequality, yield a bound
    \begin{equation*}
        \P( \min \{ | S_t - S_{t - \delta_k} |, | S_t - S_{t + \delta_k}| \} > \lambda \sqrt{\delta_k}) \leq \P( (S_t - S_{t - \delta_k})^2 (S_t + S_{t + \delta_k})^2 > \lambda^4 \delta_k^2) \leq \frac{1}{\lambda^4}
    \end{equation*}

    For $k \geq 1, t \in A_k$ define a bad event $E_{k, t}$ to be $\min\{|S_t - S_{t-\delta_k}|, |S_t - S_{t+\delta_k}|\} > \gamma \sqrt{\delta_k} 2^{k/3}$. We have $\P(E_{k, t}) \leq \frac{2^{-4k/3}}{\gamma^4}$. Since $|A_{k} \setminus A_{k-1}| = 2^{k-1}$, for every $k$ we have $\P(\exists t\in A_k\setminus A_{k-1}: \, E_{k, t}) \leq \frac{2^{-k/3}}{\gamma^4}$, and finally by taking union bound over all $k$, we have $\P(\exists k, t\in A_k\setminus A_{k-1}:\, E_{k,t}) \lesssim \frac{1}{\gamma^4}$.

    We now claim, that if no of the events $E_{k,t}$ happened, we have 
    \begin{equation}
        \sup_{t\leq T} |S_t| \leq |S_T| + \gamma \sum_{k \leq n} \sqrt{\delta_k} 2^{k/3}.
        \EquationName{chain-induct}
    \end{equation}
    Before we prove that, let us observe that $\sum_{k \leq n} \sqrt{\delta_k} 2^{k/3} \leq \sqrt{T} \sum_{k \leq n} 2^{-k/6} = \Oh(\sqrt{T})$.

    We will show the following fact by induction over $k_0$
    \begin{equation*}
        \sup_{t \in A_{k_0}} S_t \leq S_T + \gamma \sum_{k\leq k_0} \sqrt{\delta_k} 2^{k/3}.
    \end{equation*}

    Clearly for $k_0 = 0$ this is satisfied. Moreover, for $k_0 > 1$, if $t\in A_{k_0} - A_{k_0 - 1}$, we have some $\tilde{t} \in A_{k_0 - 1}$, with $|S_t - S_{\tilde{t}}| \leq \gamma \sqrt{\delta_{k_0}} 2^{k_0/3}$ --- this follows from the fact that event $E_{k_0, t}$ did not happen. This yields
    \begin{equation*}
        |S_t| \leq |S_{\tilde{t}}| + |S_t - S_{\tilde{t}}| \leq \sup_{t^* \in A_{k_0 - 1}} |S_{t^*}| + \gamma \sqrt{\delta_{k_0}} 2^{k_0/3}
    \end{equation*}

    On the other hand, $\E S_T^2 \leq T$, hence by Chebyshev inequality $\P(|S_T| > \sqrt{T} \gamma) \leq \frac{1}{\gamma^2}$. This, together with inequality~\Equation{chain-induct} yields a tail bound $\P(\sup_{t \leq N} S_t > K \sqrt{N} \gamma) = \frac{1}{\gamma^2} + \Oh(\frac{1}{\gamma^{4}})$ for some universal constant $K$; after changing $\gamma$ by this constant factor, we conclude the statement of the lemma.
\end{proof}

% Observe that \LemmaR{4-wise-indep-random-walk} is a direct corollary from \LemmaR{doobs} --- the $4$-wise independent random variables as in the statement of this lemma satisfy all the assumptions of \LemmaR{doobs}.

%\begin{theorem}
%    Let $f$ be a function satisfying $\forall{x_1,\ldots x_n} \forall x'_i,\ |f(x_1, \ldots x_i, \ldots x_n) - f(x_1, \ldots x'_i, \ldots x_n)|\leq 1$. Assume moreover that there is some $\hat{f}$, a polynomial of degree $d$, such that $\P(f(X) \not= f'(X)) \leq \delta$ and $\|\hat{f} \mathbb{1}_{f(x) \not = f'(x)}\|_4 \leq \delta$.
%\end{theorem}

%\begin{lemma}
%    For a sequence $X_1, \ldots X_T$ of independent and uniform random variables, $X_i \in [N]$, we have $\sup_{t \leq T} |\phi(X_1, \ldots X_t) - \Phi_N(t)| \lesssim \sqrt{T}$
%\end{lemma}
%\begin{proof}
%    Consider $\psi(X_1, \ldots X_t) = \E_{\hat{X}} \phi(X_1, \ldots X_t, \hat{X}_{t+1}, \ldots, \hat{X}_T)$. We claim that if $\phi(X_1, \ldots X_t) \geq \Phi_N(t) + \gamma$, then $\psi(X_1, \ldots X_t) \geq \phi_N(T) + \Omega(\gamma)$. Indeed, observe that $\psi(X_1, \ldots X_t) = \Phi( \Phi^{-1}(\phi(X_1, \ldots X_t)) + T - t)$ \todo{why}, and the claim follows by bounds on derivatives of $\Phi$. On the other hand, we know that $Y_t := \psi(X_1, \ldots X_t)$ is a martingale, so by Doob's martingale inequality, we have $\P(|Y_t - \E Y_t| \geq \lambda) \leq \frac{\Var(Y_T)}{\lambda^2}$. \todo{words}
%\end{proof}

\subsection{Pseudorandom balls and bins}
For the proof of the \LemmaR{balls-and-bins} we will need following statement of the Chernoff inequality

\begin{theorem}[Chernoff bound {\cite[Lemma 2.3]{BellareR94}}]
\TheoremName{chernoff}
If $X_1, \ldots X_n$ are $r$-wise independent random variables satisfying $0 < X_i < 1$ almost surely, with $\mu := \E \sum X_i$ then for $\lambda > 2$ we have
\begin{equation*}
    \P(\sum X_i > \lambda \mu) < \exp(-\Omega(\max\{ \mu \lambda, r \} \log \lambda))
\end{equation*}
\end{theorem}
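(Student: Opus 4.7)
The natural approach is the bounded-independence moment method of Bellare--Rompel. First I would center: set $Y_i := X_i - \E X_i$, so that $\E Y_i = 0$, $|Y_i| \le 1$, and $\sum_i \E Y_i^2 \le \sum_i \E X_i = \mu$. Writing $S := \sum_i Y_i$, the target tail becomes $\P(S > (\lambda - 1)\mu)$, and for any even integer $k \le r$, Markov applied to $S^k$ gives
\begin{equation*}
\P(S > (\lambda - 1)\mu) \;\le\; \frac{\E[S^k]}{((\lambda - 1)\mu)^k}.
\end{equation*}
The problem thus reduces to bounding the moment $\E[S^k]$ for well-chosen $k \le r$.

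The heart of the proof is a combinatorial estimate of $\E[S^k]$. Expanding $S^k = \sum_{i_1, \ldots, i_k} \prod_j Y_{i_j}$ and invoking $k$-wise independence (which is valid since $k \le r$), each expectation factors across the partition of $\{1, \ldots, k\}$ induced by equal indices. Any partition containing a singleton class vanishes, because $\E Y_i = 0$, so only partitions into blocks of size $\ge 2$ (in particular at most $k/2$ blocks) contribute. For such a partition with $j$ blocks, the bound $|Y_i| \le 1$ gives $\E |Y_i|^s \le \E Y_i^2$ for $s \ge 2$, and summing over distinct index assignments yields at most $(\sum_i \E Y_i^2)^j \le \mu^j$. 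Tallying partitions with the appropriate multinomial weights (the standard Bellare--Rompel counting step) yields a two-term bound
\begin{equation*}
\E[S^k] \;\le\; (C_1 \sqrt{k\mu})^k + (C_2 k)^k
\end{equation*}
for absolute constants $C_1, C_2$; the first term captures ``balanced'' partitions (mostly pairs), and the second absorbs the degenerate regime when $k$ is large relative to $\mu$.

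With this moment inequality, the tail bound follows by optimizing $k$. Setting $t := (\lambda - 1)\mu \asymp \lambda \mu$,
\begin{equation*}
\P(S > t) \;\lesssim\; \left(\frac{C_1 \sqrt{k\mu}}{\lambda \mu}\right)^k + \left(\frac{C_2 k}{\lambda \mu}\right)^k.
\end{equation*}
I would take $k := \min(r, \lceil c \lambda \mu \rceil)$ (rounded to an even integer) for a small absolute constant $c$. When $r \ge c \lambda \mu$, choosing $k \asymp \lambda \mu$ makes the first factor $(C_1/\sqrt{\lambda})^{\Theta(\lambda \mu)}$ and the second $(C_2 c/\lambda)^{\Theta(\lambda \mu)}$, yielding $\exp(-\Omega(\mu \lambda \log \lambda))$. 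When $r < c \lambda \mu$, take $k = r$; the inequality $r/\mu < \lambda$ then gives $\sqrt{k\mu}/(\lambda \mu) < 1/\sqrt{\lambda}$ and $k/(\lambda \mu) < 1/\lambda$, producing $\exp(-\Omega(r \log \lambda))$. Combining the two regimes gives the announced bound $\exp(-\Omega(\max\{\mu\lambda, r\} \log \lambda))$.

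The principal technical obstacle is the combinatorial step above: carefully counting partitions of $\{1, \ldots, k\}$ into blocks of size $\ge 2$, with the right multinomial weights and a clean separation into the two terms $(\sqrt{k\mu})^k$ and $k^k$. Once this moment inequality is in hand, the Markov step and the final optimization over $k$ are routine. This is exactly the content of Lemma~2.3 of \cite{BellareR94}, which is what I would ultimately cite.
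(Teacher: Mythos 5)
The paper does not actually prove this statement; it is imported verbatim as a black-box citation to \cite{BellareR94}, so there is no internal proof to compare against. Your proposal is the standard bounded-independence moment method, and it is the right \emph{method} — but two of the claimed arithmetic steps in the final optimization do not go through, and as a consequence the bound you actually derive is weaker than the one stated. (Separately, note that the statement as printed has a typo: the $\max$ must be a $\min$; with $\max$ the right-hand side tends to $0$ as $r\to\infty$ while the left-hand side is a fixed positive probability. The paper's own use of this bound, in the proof of Lemma~\ref{lem:polynomial-approximation}, is consistent with $\min$.)

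Concretely, in your Case~1 ($r\ge c\lambda\mu$, $k\asymp\lambda\mu$) you write that the second factor is $(C_2 c/\lambda)^{\Theta(\lambda\mu)}$, but with $k = c\lambda\mu$ one has $k/((\lambda-1)\mu) = c\lambda/(\lambda-1) = \Theta(1)$, not $\Theta(1/\lambda)$, so the second factor is only $(\Theta(1))^{\Theta(\lambda\mu)} = \exp(-\Omega(\lambda\mu))$ — the $\log\lambda$ is missing. Similarly, in Case~2 ($r < c\lambda\mu$, $k=r$) the claim $k/(\lambda\mu) < 1/\lambda$ only holds when $r < \mu$; for $\mu \le r < c\lambda\mu$ you only get $k/(\lambda\mu) < c$, a constant, and again the second term caps the exponent at $\Omega(r)$ rather than $\Omega(r\log\lambda)$. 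This is not a cosmetic issue: the centered Bellare–Rompel moment bound $\E[S^k]\lesssim(\sqrt{k\mu})^k + k^k$ gives, after optimizing $k$, only a Bernstein-type tail $\exp(-\Omega(\lambda\mu))$ in the independence-rich regime, and the extra $\log\lambda$ (which the paper genuinely relies on in the proof of Lemma~\ref{lem:polynomial-approximation}, where the exponent $k\cdot\min(2^k,r)$ needs the factor $k=\log\lambda$) cannot be recovered from that centered bound. To get the Poisson/Bennett-type exponent $\exp(-\Omega(\lambda\mu\log\lambda))$ one should Markov on the \emph{uncentered} moments $\E[(\sum X_i)^k]$, which for $X_i\in[0,1]$ are dominated by the $k$-th moment of a Poisson$(\mu)$ variable (since $\E X_i^s\le\E X_i$ for $s\ge 1$); optimizing $k\asymp\lambda\mu\log\lambda$ there gives $\exp(-\Omega(\lambda\mu\log\lambda))$, and capping $k$ at $r$ yields the $r$-limited term.
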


The strategy for the proof of~\LemmaR{balls-and-bins} is following. For random variables $X_1, \ldots X_R$ as in the statement of the lemma, we would like to control process $S_t = \E_{X'} \phi(X_1, \ldots X_t, X'_{t+1}, \ldots X'_R)$, and specifically the deviations $\sup_{t} |S_t - \E S_t|$. If variables $X_i$ were truly independent, that would be given by the Doobs martingale inequality. As they are not, we first show that $\phi$ can be approximated in appropriate sense by a low degree polynomial $\hat{\phi}$. Then we control analogous process $\hat{S}_t$ defined on top of the approximation $\hat{\phi}$ --- in order to do this, we observe that low moments of the increments $\hat{S}_j - \hat{S}_i$ are bounded --- they are expectations of low degree polynomials of input variables $X_i$, hence they are the same as if the variables were truly independent --- and in such a case we can use known results about martingales to reach the desired conclusion.

\begin{lemma}
    \LemmaName{polynomial-approximation}
	For any $P$ and $R \leq P/20$, there exists a polynomial $\hat{\phi} : \{0, 1\}^{\log P \times R} \to \bR$ of degree $ \Oh(\log^2 P)$ with integer coefficients, such that for every distribution $X_1, \ldots X_R$ which is at least $r = \poly(\log P)$-wise independent and with marginal distribution of each $X_i$ being uniform, we have
    \begin{equation*}
	\|\hat{\phi}(\bar{X}_1, \ldots \bar{X}_R) - \phi(X_1, \ldots X_R)\|_p = o(1)
    \end{equation*}
    for any $p \lesssim \log \log P$. In particular
    \begin{equation}
	\P(\hat{\phi}(\bar{X}_1, \ldots \bar{X}_R) \not= \phi(X_1, \ldots X_R)) = o(1).
        \EquationName{rarely-different}
    \end{equation}

    Above, $\bar{X}_i$ denotes the binary representation of $X_i$.
\end{lemma}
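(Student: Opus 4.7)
The plan is to take $\hat\phi$ to be the depth-$d$ Bonferroni truncation of inclusion--exclusion, with $d = \Theta(\log P)$:
\begin{equation*}
\hat\phi(\bar X_1,\ldots,\bar X_R) := \sum_{j \in [P]} \sum_{\substack{S \subseteq [R]\\ 1 \leq |S| \leq d}} (-1)^{|S|+1} \prod_{i \in S} \mathbf{1}[X_i = j],
\end{equation*}
since expanding $\phi = P - \sum_j \prod_i (1 - \mathbf{1}[X_i = j])$ yields exactly the same formula without the cap $|S|\leq d$. For any fixed $j \in [P]$ the indicator $\mathbf{1}[X_i = j]$ factors over the $\log P$ bits of $\bar X_i$ as a product whose $b$-th factor is $\bar X_{i,b}$ if $j_b = 1$ and $1 - \bar X_{i,b}$ otherwise, i.e.\ a polynomial of degree $\log P$ with integer coefficients in those bits. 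Hence $\hat\phi$ is a polynomial with integer coefficients and total degree at most $d \log P = \Oh(\log^2 P)$.

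For the approximation quality I would use the identity $\sum_{k=0}^d (-1)^k \binom{N}{k} = (-1)^d \binom{N-1}{d}$. Writing $N_j := \#\{i : X_i = j\}$ and grouping the summands in $\hat\phi$ by $|S|$, the contribution of bin $j$ to $\phi - \hat\phi$ is $0$ when $N_j \leq d$ and $\pm \binom{N_j - 1}{d}$ when $N_j > d$. So $\phi = \hat\phi$ on the event $E := \{\forall j,\, N_j \leq d\}$, while the pointwise worst case is $|\phi - \hat\phi| \leq P \binom{R-1}{d} \leq P^{d+1}$.

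It remains to bound $\P[E^c]$. Each $N_j$ is a sum of $R$ many $r$-wise independent $\{0,1\}$ random variables with $\E N_j = R/P \leq 1/20$, so \TheoremR{chernoff} applied with deviation ratio $\lambda = \Omega(d) = \Omega(\log P)$ and $r \geq d$ yields $\P[N_j > d] \leq \exp(-\Omega(r \log \log P))$; a union bound over the $P$ bins gives $\P[E^c] \leq P\exp(-\Omega(r \log \log P))$. Choosing $r = \poly(\log P)$ sufficiently large makes $\P[E^c]$ smaller than $P^{-p(d+1)}$ for every $p \lesssim \log \log P$, and combining with the pointwise bound $|\phi - \hat\phi| \leq P^{d+1}$ we obtain $\E|\hat\phi - \phi|^p \leq \P[E^c] \cdot P^{p(d+1)} = o(1)^p$, as required. \Equation{rarely-different} is then just $\P[E^c] = o(1)$.

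The main subtlety is that the worst-case blow-up $P^{p(d+1)}$ is enormous and must be swallowed by the Chernoff tail. The mechanism that makes this go through is that the Chernoff bound for $r$-wise independence saves a factor of $\log \lambda = \Omega(\log \log P)$ per unit of $r$, which leaves enough slack to absorb both the union-bound penalty $\log P$ and the worst-case inflation $p(d+1)\log P = \Oh(\log^2 P \log \log P)$ simultaneously, provided $r = \poly(\log P)$ is chosen large enough; all three parameter constraints ($d = \Omega(\log P)$ for concentration, $d = \Oh(\log P)$ for the degree bound, $r$ large enough to kill the $P^{pd}$ term) are mutually compatible, so no further ideas beyond careful parameter bookkeeping are required.
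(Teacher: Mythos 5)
Your proposal is essentially the paper's proof: the Bonferroni truncation $\hat\phi = \sum_j \sum_{1\leq|S|\leq d}(-1)^{|S|+1}\prod_{i\in S}\mathbf{1}[X_i=j]$ is identically the polynomial $\sum_{k\in[P]}I_0\bigl(\sum_j\EQ(\bar X_j,k)\bigr)$ the paper constructs, and both proofs reduce the moment bound to a Chernoff tail estimate on $M(X):=\max_j N_j$. The only difference is organizational --- you bound $\E|\hat\phi-\phi|^p$ by (worst-case magnitude $P^{d+1}$)$^p\cdot\P[M(X)>d]$, whereas the paper decomposes dyadically over the value of $M(X)$; your cruder bookkeeping is slightly wasteful but closes the gap just as well once $r=\poly(\log P)$ is taken large enough.
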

\begin{proof}
%    \todo{fixme}   Sounds good (?)
    Consider polynomials $\EQ_k : \bR^{\log P} \to \bR$, such that $\EQ$ restricted to the hypercube $\{0, 1\}^{\log P}$ has values $\{0, 1\}$ and it takes value $1$ only for argument $\bar{k}$ (for a number $k \in [P]$, we write $\bar{k}\in \{0,1\}^{\log P}$ to be the binary representation of $k$). There is such a multilinear polynomial of degree $\log P$.

    Consider moreover polynomial $I_0 : \bR \to \bR$ of degree $d = \Theta(\log P)$ defined as
    \begin{equation*}
	    I_0(x) := 1 - \sum_{0 \leq i \leq d} (-1)^{i} \binom{x}{i}
    \end{equation*}

    Let us observe that for $x \in \bN$, with $x \leq d$ we have
    \begin{equation*}
        I_0(x) = \left\{ \begin{array}{cl} 0 \qquad & \text{for }x=0 \\
            1 \qquad & \text{otherwise}
        \end{array}\right.
    \end{equation*}
    and moreover $|I_0(x)| \lesssim {\binom{x}{d + 1}}$ for any $x > 0$. % \todo{citation?} Nah.

    Let us define now 
    \begin{equation*}
        \hat{\phi}(\bar{X}_1, \ldots \bar{X}_R) := \sum_{k \in [P]} I_0(\sum_{j \leq R} \EQ(\bar{X}_j, k))
    \end{equation*}
	and note that this is a polynomial of degree $d \log P = \Oh(\log^2 P)$.
    
    Given an instantiation of random variables $X_1, \ldots X_R$, we will take $B_i := \#\{ j : X_j = i \}$ defined for every $i \in [P]$, and moreover we will define $M(X_1, \ldots X_R) := \max_{i \in [P]} B_i$.

    We claim that $\hat{\phi}(\bar{X}_1, \ldots \bar{X}_R) = \phi(X_1, \ldots X_R)$ as long as $M(X) \leq d$, and for $M(X) > d$ we have $|\hat{\phi}(X_1, \ldots X_R) - \phi(X_1, \ldots X_R)| < N \binom{M(X)}{d} < N \exp(c_0 d \ln \frac{M(X)}{d})$ for some constant $c_0$. This yields
    \begin{align}
        \E |\hat{\phi} - \phi|^q & \leq 
        \sum_{k > \log d} \P(2^{k} < M(X) < 2^{k+1}) \E\left[ N^q 
        \exp(c_0 q d \log \frac{M(X)}{d}) \middle| 2^{k} < M(X) < 2^{k+1}\right] \nonumber \\
        & \leq \sum_{k > \log d} \P(M(X) > 2^k) \exp(c_0 q d \log \frac{2^{k}}{d} + q \ln N). \EquationName{q-norm-phi}
    \end{align}

    We can bound the tail probabilities of $M(X)$ as follows
    \begin{align}
        \P(M(X) > \lambda) & \leq \sum_{j \leq R} \P(B_j > \lambda) \nonumber \\
        & \leq R \P(B_1 > \lambda) \nonumber \\
        & \leq R \exp(-C \min(\lambda, r) \log \lambda) \EquationName{m-tail-bound}
    \end{align}
    where the last inequality follows from the Chernoff bound \TheoremR{chernoff}, because $\E B_i < 1/20$.

    \Equation{m-tail-bound} together with \Equation{q-norm-phi} yields
    \begin{equation*}
        \E |\hat{\phi} - \phi|^q  \leq
	\sum_{k > \log d} \exp(-C(k \min(2^k, r)) + d q \log \frac{2^k}{d} + q \log R) 
    \end{equation*}
    The exponents in this sum are quickly decaying, so the whole sum is of the same order as the first term, namely
    \begin{equation*}
        \exp( - \Theta(d \log d) + \Theta(d q) + q \log R)
    \end{equation*}
	if we pick $r > d^{\Oh(1)}$.

    Hence, for $q \ll \log d$ and $d \gg \log R$ we have
    \begin{equation*}
        \E |\hat{\phi} - \phi|^q  \leq \exp(-\Omega(d \log d)).
    \end{equation*}

\end{proof}

We will now show that for any distribution with enough independence, specific types of random walks associated with functions $\phi$ and $\hat{\phi}$ stay close to each other with good probability. If variables $X_1, \ldots X_R$ are uniform and independent, processes $S_t$ described below are just Doobs martignales associated with function $\phi$, that were used to show correctness of the algorithm in the random-oracle model.

\begin{lemma}
    \LemmaName{close-martingales}
    Consider $\hat{\phi}$ to be the polynomial from \LemmaR{polynomial-approximation}, and let $X_1, \ldots X_R \in \Sigma$ be a sequence of $k$-wise independent random variables such that marginal distribution of each $X_i$ is uniform, where $k = \poly(d)$.

    Consider $S_t := \E_{X'} \phi(X_1, \ldots X_t, X'_{t+1}, \ldots X'_R)$, and
    $\hat{S}_t := \E_{X'} \hat{\phi}(X_1, \ldots X_t, X'_{t+1}, \ldots X'_R)$, where $X'$ are independent random variables, distributed uniformly over $\Sigma$. Then
    \begin{equation}
        \P(\exists t,\ |\hat{S}_t - S_t| > \lambda) \lesssim \frac{R}{\lambda^2}.
    \end{equation}
\end{lemma}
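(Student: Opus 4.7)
The plan is a direct reduction to \LemmaR{polynomial-approximation} via a union bound over the $R+1$ time steps; no new chaining or martingale machinery is required. For each fixed $t$, I would first bound the second moment of $D_t := \hat S_t - S_t$. By Jensen's inequality applied to the inner average,
\[
    D_t^2 = \left(\E_{X'}\bigl[(\hat\phi - \phi)(X_1, \ldots, X_t, X'_{t+1}, \ldots, X'_R)\bigr]\right)^2 \leq \E_{X'}\bigl[(\hat\phi - \phi)^2(X_1, \ldots, X_t, X'_{t+1}, \ldots, X'_R)\bigr],
\]
and, after taking an outer expectation over $X_1, \ldots, X_t$,
\[
    \E D_t^2 \leq \E_{(X, X')}\bigl[(\hat\phi - \phi)^2\bigr].
\]

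The key observation is that the coupled tuple $(X_1, \ldots, X_t, X'_{t+1}, \ldots, X'_R)$ is itself $k$-wise independent with uniform marginals: any subset of at most $k$ of its coordinates splits into some $X_i$'s, which are jointly independent uniform by $k$-wise independence of $X$, and some $X'_j$'s, which are independent uniform by construction and independent of the $X$'s. Hence \LemmaR{polynomial-approximation} applied with $p = 2$ (well within its admissible range $p \lesssim \log\log P$) to this coupled sample yields $\E(\hat\phi - \phi)^2 = o(1)$, and therefore $\E D_t^2 = o(1)$ uniformly in $t$.

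Chebyshev's inequality then gives $\P(|D_t| > \lambda) \leq o(1)/\lambda^2$ for every fixed $t$, and a union bound over $t \in \{0, 1, \ldots, R\}$ delivers
\[
    \P\!\left(\exists\, t \leq R,\ |D_t| > \lambda\right) \leq (R+1) \cdot \frac{o(1)}{\lambda^2} \lesssim \frac{R}{\lambda^2}
\]
as soon as $P$ is large enough for the $o(1)$ factor to be bounded by an absolute constant. The only subtlety --- and the single point that deserves careful justification --- is verifying that splicing a $k$-wise independent sequence $X$ with fresh uniform variables $X'$ preserves $k$-wise independence with uniform marginals, so that \LemmaR{polynomial-approximation} can be legitimately invoked on the coupled tuple. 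All of the genuinely hard combinatorial work (the construction of $\hat\phi$, its pointwise agreement with $\phi$ outside a rare event, and the $L^p$ control of $\hat\phi - \phi$ under bounded independence) is already packaged inside \LemmaR{polynomial-approximation}, so no further moment estimate on increments or dyadic chaining is needed.
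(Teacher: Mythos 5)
Your proof is correct and follows essentially the same route as the paper's: bound $\E(\hat S_t - S_t)^2$ by $\E_{X,X'}(\hat\phi-\phi)^2$, invoke \LemmaR{polynomial-approximation} to make this $O(1)$, then Chebyshev and a union bound over $t$. Your explicit verification that the spliced tuple $(X_1,\ldots,X_t,X'_{t+1},\ldots,X'_R)$ remains $k$-wise independent with uniform marginals is a point the paper leaves implicit, and it is a worthwhile clarification rather than a deviation.
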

\begin{proof}
	For single $t$ we have 
	\begin{align*}
		\|\hat{S}_t - S_t\|_2^2 &\leq \E_X | \E_{X'} \phi(X_1, \ldots X_t, X'_{t+1}, \ldots, X'_R) \\
		& \phantom{\leq} - \hat{\phi}(X_1, \ldots X_t, X'_{t+1}, \ldots X'_R)|^2 \\
		& \leq \E_{X, X'} |\phi(X_1, \ldots X_t, X'_{t+1}, \ldots X'_R) \\
		& \phantom{\leq}- \hat{\phi}(X_1, \ldots X_t, X'_{t+1}, \ldots X'_R)|^2\\
		& \lesssim 1,
	\end{align*}
	where the last inequality follows from \LemmaR{polynomial-approximation}.

    Hence, $\P(|\hat{S}_t - S_t| > \lambda) \lesssim \frac{1}{\lambda^2}$, and by union bound

    \begin{equation*}
        \P(\exists t,\ |\hat{S}_t - S_t| > \lambda) \lesssim \frac{R}{\lambda^2}
    \end{equation*}
\end{proof}

\begin{lemma}
    \LemmaName{bounded-increments}
    For $X_1, \ldots X_R$ and $\hat{S}_i$ defined as in the \LemmaR{close-martingales}, if $\Delta_i := \hat{S}_{i} - \hat{S}_{i-1}$, then
    \begin{align}
		\E \Delta_i & = 0 \\
		\E (\hat{S}_i - \hat{S}_j)^4 & \lesssim (i - j)^2
    \end{align}
\end{lemma}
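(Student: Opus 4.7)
The plan is to exploit two observations: first, that $\hat{\phi}$ is a polynomial in $(\bar{X}_1, \ldots, \bar{X}_R)$ of degree $d = \Oh(\log^2 P)$, so $\hat{S}_t$ and $(\hat{S}_i - \hat{S}_j)^4$ are themselves polynomials of degree at most $d$ and $4d$ respectively in the underlying bits; and second, that $k$-wise independence with $k = \poly(d) \geq 4d$ forces the expectation of any such polynomial to agree with the expectation under a fully independent sequence. Consequently, the whole analysis reduces to the independent case, and the $k$-wise independence assumption is invoked only at the very end.

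First I would dispose of the mean-zero claim. Under full independence, $\hat{S}_t$ is a Doob martingale with respect to the filtration generated by $X_1, \ldots, X_t$, hence $\E[\hat{S}_t - \hat{S}_{t-1}] = 0$ trivially. Since $\hat{S}_t - \hat{S}_{t-1}$ is a polynomial of degree at most $d$ in $X_1, \ldots, X_t$, its expectation under our $k$-wise independent distribution coincides with the fully independent expectation, yielding $\E \Delta_i = 0$.

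For the fourth-moment bound I would again reduce to full independence, and then introduce the genuine Doob martingale of $\phi$, namely $S_t := \E_{X'} \phi(X_1, \ldots, X_t, X'_{t+1}, \ldots, X'_R)$. Because adding one ball changes the number of occupied bins by at most $1$, the increments $S_t - S_{t-1}$ are bounded by $1$ almost surely; direct expansion of $(S_i - S_j)^4$ as a fourth power of a sum of martingale differences, using orthogonality to kill any mixed term with a unique largest index together with the boundedness of the remaining increments, yields $\|S_i - S_j\|_4 \lesssim \sqrt{i - j}$. Next, \LemmaR{polynomial-approximation} provides $\|\hat{\phi} - \phi\|_4 = o(1)$, and conditional Jensen propagates this to $\|\hat{S}_t - S_t\|_4 \leq \|\hat{\phi} - \phi\|_4 = o(1)$ at every $t$. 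The $L^4$ triangle inequality applied to the decomposition
\begin{equation*}
\hat{S}_i - \hat{S}_j = (S_i - S_j) + (\hat{S}_i - S_i) - (\hat{S}_j - S_j)
\end{equation*}
then delivers $\|\hat{S}_i - \hat{S}_j\|_4 \lesssim \sqrt{i - j}$ whenever $i > j$, and raising to the fourth power gives exactly $\E(\hat{S}_i - \hat{S}_j)^4 \lesssim (i - j)^2$ under full independence; transferring back via the degree/independence observation concludes the proof.

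In my view, the only conceptually nontrivial step is the reduction from $k$-wise to full independence via the degree-counting observation in the opening paragraph — once that is in place, the bounded-differences fourth-moment bound for $\phi$ is classical and the $L^4$ triangle inequality is a one-liner. The routine check to carry out is that the parameters of \LemmaR{polynomial-approximation} are instantiated so that approximation holds at $L^4$ (requiring $p \geq 4$ there), and that $k \geq 4d$ is enough both for the present degree-counting step and for invoking the hypotheses of the earlier lemmas.
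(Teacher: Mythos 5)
Your proof is correct, and in fact it is more careful than the paper's own argument in one respect. The paper's proof performs the same degree-counting reduction to full independence that you do, and then says that since $\hat{S}_i$ is a Doob martingale one can apply Lemma~\ref{lem:azuma} directly to it to get the fourth-moment bound. But Lemma~\ref{lem:azuma} requires the underlying function to have bounded differences, and $\hat{\phi}$ — built from the polynomials $I_0$, which grow like $\binom{x}{d+1}$ outside $\{0,\ldots,d\}$ — does \emph{not} obviously satisfy this: changing a single $X_j$ can shift two of the bin counts $\sum_j \EQ(\bar{X}_j,k)$ by one, and $I_0$ is not $1$-Lipschitz on large inputs, so $\hat{\phi}$ need not change by $O(1)$. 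Your route sidesteps this by applying the martingale fourth-moment bound to $S_i$, the Doob martingale of $\phi$ (which genuinely has differences bounded by $1$), and then transferring to $\hat{S}_i$ via $\|\hat{S}_t - S_t\|_4 \leq \|\hat{\phi} - \phi\|_4 = o(1)$ (conditional Jensen plus Lemma~\ref{lem:polynomial-approximation} with $p = 4 \lesssim \log\log P$) and the $L^4$ triangle inequality; the $o(1)$ error is absorbed since $\sqrt{i-j} \geq 1$. The degree-counting reduction and the mean-zero step match the paper exactly; the explicit $S \to \hat{S}$ transfer inside the fourth-moment bound is what differs, and it appears to be the right way to make rigorous what the paper leaves implicit.
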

\begin{proof}
		Note that all the expressions in the statement of the lemma are expectations of polynomials of degree at most $4d$ in variables $X_i$, and variables $X_1, \ldots X_R$ are $r$-wise independent for $r > 4d$. We can without loss of generality prove this theorem assuming that $X_1, \ldots X_R$ are instead independent uniform random variables. 

		In that case $\Delta_i$ is a sequence of increments of a Doob's martingale, and therefore $\E \Delta_i = 0$. Similarly, since $\hat{S}_i$ is Doobs martingale we can apply Lemma~\ref{lem:azuma} to deduce that $\E (\hat{S}_i - \hat{S}_j)^4 \lesssim (i - j)^2$.
\end{proof}

\begin{corollary}
		\CorollaryName{bounded-hats-tails}
    For $\hat{S}_k$ defined as above, we have $\P(\sup_{k \leq R} |\hat{S}_k - \hat{S}_0|\geq \lambda) \lesssim \frac{R}{\lambda^2}$.
\end{corollary}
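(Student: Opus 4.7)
The plan is to deduce the corollary by direct application of Lemma~\ref{lem:doobs} with $Y_k := \hat{S}_k - \hat{S}_0$ and increments $X_k := \Delta_k = \hat{S}_k - \hat{S}_{k-1}$. Lemma~\ref{lem:doobs} requires two hypotheses: (i) $\E X_k = 0$, and (ii) $\E(Y_i - Y_j)^4 \lesssim (i-j)^2$. Both are supplied verbatim by Lemma~\ref{lem:bounded-increments}, which already does the substantive work of observing that the relevant expressions are polynomials of degree at most $4d$ in the $X_i$'s and so can be computed as if the $X_i$'s were fully independent, reducing to the genuine Doob martingale case.

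Invoking Lemma~\ref{lem:doobs} gives $\P(\sup_{k \leq R} Y_k \geq \sqrt{R}\,\mu) \lesssim \mu^{-4}$; applying the same reasoning to $-Y_k$ (whose increments satisfy identical moment bounds) and union-bounding yields the two-sided version $\P(\sup_{k \leq R} |Y_k| \geq \sqrt{R}\,\mu) \lesssim \mu^{-4}$. Reparameterising with $\lambda := \sqrt{R}\,\mu$, this becomes
\begin{equation*}
   \P\left(\sup_{k \leq R} |\hat{S}_k - \hat{S}_0| \geq \lambda\right) \lesssim \frac{R^2}{\lambda^4}.
\end{equation*}
In the regime $\lambda \geq \sqrt{R}$ this upper bound is already at most $R/\lambda^2$, so the claim is immediate. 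In the complementary regime $\lambda < \sqrt{R}$, the target bound $R/\lambda^2 > 1$ is a trivial probability bound. In either case the desired inequality follows.

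There is no real obstacle here: the conceptually delicate ingredients (the fourth-moment control of increments under only $r$-wise independence, and the chaining argument that turns a fourth-moment bound into a maximal inequality) were already dispatched in Lemmas~\ref{lem:bounded-increments} and~\ref{lem:doobs}. The corollary is essentially a repackaging, and one could in fact advertise the tighter $R^2/\lambda^4$ bound; the stated $R/\lambda^2$ suffices for the application in the proof of Theorem~\ref{thm:strong-tracking}, where $\lambda$ will be chosen well above $\sqrt{R}$ and the precise tail shape beyond that threshold is irrelevant.
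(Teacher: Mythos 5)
Your proof is correct and takes the same route as the paper, whose own proof is simply the one-liner that the corollary ``Follows from Lemma~\ref{lem:bounded-increments} and Lemma~\ref{lem:doobs}.'' Your unpacking of the reparameterisation $\lambda=\sqrt{R}\mu$, the case split on $\lambda\lessgtr\sqrt{R}$, and the observation that one actually obtains the sharper $R^2/\lambda^4$ tail are all accurate and are implicit in the paper's citation of those two lemmas.
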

\begin{proof}
    Folows from \LemmaR{bounded-increments} and \LemmaR{doobs}.
\end{proof}

\begin{corollary}
    \CorollaryName{polynomial-tails}
    For $S_t$ defined as in \LemmaR{close-martingales}
    \begin{equation*}
        \P(\sup_t\, |S_t - S_0| \geq \lambda) \lesssim \frac{R}{\lambda^2}
    \end{equation*}
\end{corollary}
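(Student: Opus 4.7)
The plan is to obtain this corollary as an immediate triangle-inequality consequence of the two previously established facts: \LemmaR{close-martingales}, which controls the uniform distance between the approximating process $\hat{S}_t$ and the true process $S_t$, and \Corollary{bounded-hats-tails}, which controls the deviations of $\hat{S}_t$ from its starting value $\hat{S}_0$. Both bounds have the same $R/\lambda^2$ form, so combining them will preserve that shape up to absolute constants.

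Concretely, I would first write
\begin{equation*}
    |S_t - S_0| \;\leq\; |S_t - \hat{S}_t| \;+\; |\hat{S}_t - \hat{S}_0| \;+\; |\hat{S}_0 - S_0|,
\end{equation*}
and then take the supremum over $t \leq R$. Since $|\hat{S}_0 - S_0|$ is bounded by $\sup_t |\hat{S}_t - S_t|$, we obtain
\begin{equation*}
    \sup_{t \leq R} |S_t - S_0| \;\leq\; 2\sup_{t \leq R} |\hat{S}_t - S_t| \;+\; \sup_{t \leq R} |\hat{S}_t - \hat{S}_0|.
\end{equation*}

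From here, a union bound finishes the job: the event $\{\sup_t |S_t - S_0| \geq \lambda\}$ is contained in the union of $\{\sup_t |\hat{S}_t - S_t| \geq \lambda/4\}$ and $\{\sup_t |\hat{S}_t - \hat{S}_0| \geq \lambda/2\}$. By \LemmaR{close-martingales} the first has probability $\lesssim R/\lambda^2$, and by \Corollary{bounded-hats-tails} the second likewise has probability $\lesssim R/\lambda^2$, yielding the claimed bound after absorbing constants.

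There is essentially no obstacle here, since the heavy lifting has already been done: \LemmaR{polynomial-approximation} gave us a degree-$\Oh(\log^2 P)$ polynomial approximation $\hat{\phi}$ to $\phi$, \LemmaR{bounded-increments} exploited bounded independence to bound the low moments of the increments of $\hat{S}_t$ as if the variables were fully independent, and \LemmaR{doobs} converted those moment bounds into a uniform tail estimate. This corollary merely transfers the uniform control from $\hat{S}_t$ back to $S_t$ via the pointwise approximation guarantee, which is exactly what the two preceding results were designed to permit.
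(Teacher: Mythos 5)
Your proof is essentially identical to the paper's: both decompose $|S_t - S_0|$ via the triangle inequality through the approximating process $\hat{S}_t$, then apply a union bound, invoking \LemmaR{close-martingales} for $\sup_t|S_t - \hat{S}_t|$ and \Corollary{bounded-hats-tails} for $\sup_t|\hat{S}_t - \hat{S}_0|$. The only cosmetic difference is your choice of $\lambda/4$, $\lambda/2$ versus the paper's $\lambda/3$ split, which changes nothing.
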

\begin{proof}
		Since $|S_t - S_0| \leq |S_t - \hat{S}_t| + |\hat{S}_t - \hat{S}_0| + |\hat{S}_0 - S_0|$, clearly we have
    \begin{equation*}
        \P(\sup_t\, |S_t - S_0| \geq \lambda) \leq \P(\exists t,\ |S_t - \hat{S}_t| \geq \lambda/3) + \P(\sup_t\, |\hat{S}_t - \hat{S}_0| \geq \lambda/3)
    \end{equation*}
    and the claim follows from \Corollary{bounded-hats-tails} and \LemmaR{close-martingales}. 
\end{proof}
\begin{Remark}
    \RemarkName{smaller-bins}
	If $\hat{X}_{t+1}, \ldots \hat{X}_R$ are all uniform and independent, then for any setting of variables $X_1, \ldots X_t$ we have
    \begin{equation*}
	    \E_{\hat{X}} \phi(X_1, \ldots X_t, \hat{X}_{t+1}, \ldots \hat{X}_R) = \Phi\left[ \Phi^{-1}(\phi(X_1, \ldots X_t)) + R - t\right].
    \end{equation*}
\end{Remark}
We are finally ready to prove the last technical lemma, stating that for bounded-wise independence balls-and-bins experiment, the number of non-empty bins stays close to its expectation at all times.
\begin{proof}[Proof of \LemmaR{balls-and-bins}]

	By \Corollary{polynomial-tails} we conclude that with probability $\frac{9}{10}$ we have
	\begin{equation*}
		\forall t, \E_{X'} \phi(X_1, \ldots X_t, X'_{t+1}, \ldots, X'_R) - \Phi(R) \leq \Oh(\sqrt{R}).
	\end{equation*}

	Using bi-Lipschitz properties of $\Phi_R$ (\Fact{phi-Lipschitz}), we deduce that equation above imply
	\begin{equation*}
		\forall t, \Phi^{-1}(\E_{X'} \phi(X_1, \ldots X_t, X'_{t+1}, \ldots. X'_R)) = R \pm \Oh(\sqrt{R}).
	\end{equation*}
	Applying \RemarkR{smaller-bins}, we deduce
	\begin{equation*}
		\forall t, \Phi^{-1}(\phi(X_1, \ldots X_t)) = t \pm \Oh(\sqrt{R})
	\end{equation*}
	and finally, again using bi-Lipschitz continuity of $\Phi$, we deduce
	\begin{equation*}
		\forall t, \phi(X_1, \ldots, X_t) = \Phi(t) \pm \Oh(\sqrt{R})
	\end{equation*}
\end{proof}

\section{Strong tracking lower bound \SectionName{lower-bound}}

In this chapter we prove \TheoremR{lower-bound} --- $\Omega(\frac{\log \log n}{\varepsilon^2})$ lower bound for strong tracking of distinct elements. To this end, we introduce concept of $T$-game --- model of communication-complexity game tailored to the lower bound in question. 

\begin{Definition}[$T$-game]
%    \todo{Better name}
    For any relation $\mathcal{R} \subset \{0,1\}^n \times \{0,1\}^n \times \Sigma$, we consider $T$-game $T(\mathcal{R}, k)$ with $k$-rounds, to be communication problem with two parties, Alice and Bob defined as follows. In each round of the game
    \begin{itemize}
        \item Alice receives her input $x_k \in \{0,1\}^n$, and Bob receives his input $y_k \in \{0,1\}^n$.
        \item Alice receives Bobs input $y_{k-1}$ from the previous round, and Bob observes Alices input $x_{k-1}$ from the previous round.
        \item Alice and Bob can observe private random coins $r^1_k, r^2_k \in \{0, 1\}^*$.
        \item Alice can send a message $a_k$ to Bob that depends on all her observations.
        \item Bob reports to the judge his output $z_k \in \Sigma$.
        \item Bob can send a message $b_k$ to Alice.
    \end{itemize}

    We say that protocol $P$ \emph{succeeds} on input $(x_1, y_1), \ldots (x_k, y_k)$ and random coins $\left( (r_i^1, r_i^2) \right)_{i\in [k]}$ if $\forall_k (x_k, y_k, z_k) \in \mathcal{R}$. For any protocol $P$ by Alice and Bob, we define complexity $C(P)$ of the protocol to be the largest length of $a_k$, or $b_k$ sent by any party.

    For a distribution $\mu$ over pair of strings $\{0, 1\}^n \times \{0, 1\}^n$, let $\mathcal{P}_{\mu, \delta}$ be the set of all protocols that succeed with probability $1-\delta$, given as input sequence of independent samples $(x_1, x_2), \ldots (x_k, y_k) \sim \mu$. We define
    \begin{equation*}
        D_{\mu, \delta}(T(\mathcal{R}, k)) := \inf_{P \in \mathcal{P}_{\mu, \delta}} C(P)
    \end{equation*}
\end{Definition}

\begin{Definition}
    For relation $\mathcal{R} \subset \{0, 1\}^n \times \{0, 1\}^n \times \Sigma$, we will denote by $D_{\mu, \delta}^{\rightarrow}(\mathcal{R})$ the one-way deterministic communication complexity of $\mathcal{R}$ under distribution $\mu$ of inputs for Alice and Bob.
\end{Definition}

The following lemma connects complexity of $T$-game based on relation $\mathcal{R}$, with one-way communication complexity of the relation $\mathcal{R}$ itself.

\begin{lemma}
	\LemmaName{T-to-one-way}
	For every protocol for a $k$-round $T$-game with failure probability $\delta$, over independent samples distributed according to $\mu$ and complexity $C(P)$, there is a one-way communication protocol for a distribution $\mu$ with communication complexity $C(P)$ and failure probability $\delta/k$. Formally, for every relation $\mathcal{R}$ the following inequality holds
    \begin{equation*}
        D_{\mu, \delta/k}^\rightarrow(\mathcal{R}) \leq D_{\mu, \delta}(T(\mathcal{R}, k)).
    \end{equation*}
\end{lemma}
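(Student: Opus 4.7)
The plan is a reduction from the $k$-round $T$-game to a one-way protocol for a single instance drawn from $\mu$. Given an optimal $T$-game protocol $P$ of complexity $C = D_{\mu, \delta}(T(\mathcal{R}, k))$ and failure $\delta$, I would embed an arbitrary $(x, y) \sim \mu$ as the inputs at a well-chosen round $k^*$, hardcode everything else, and produce a deterministic one-way protocol of complexity $C$ and failure at most $\delta/k$ up to absolute constants.

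First I would choose the round $k^*$. Run $P$ on iid inputs $(X_i, Y_i) \sim \mu$ with uniformly random coins; let $F_i$ be the event ``round $i$ fails'' and $E_i := F_i \cap \bigcap_{j < i} \neg F_j$ the event ``round $i$ is the \emph{first} failure''. The $E_i$ partition $\bigcup_i F_i$, so $\sum_i \Pr[E_i] = \Pr[\bigcup_i F_i] \leq \delta$, and by averaging some $k^* \in [k]$ satisfies $\Pr[E_{k^*}] \leq \delta/k$. Next I would fix a history and coins for the rounds before $k^*$. Writing $E_{k^*} = G \cap F_{k^*}$, where $G$ is the ``no earlier failure'' event, which depends only on $h := (X_1, Y_1, \ldots, X_{k^*-1}, Y_{k^*-1})$ and the coins $c_{<k^*}$ used in those rounds, let $\alpha(h, c_{<k^*})$ denote the conditional probability of $F_{k^*}$ given $(h, c_{<k^*})$, integrated over $(X_{k^*}, Y_{k^*}) \sim \mu$ and the round-$k^*$ coins. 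Then $\E[\mathbf{1}_G \cdot \alpha] = \Pr[E_{k^*}] \leq \delta/k$ while $\Pr[G] \geq 1 - \delta$, so $\E[\alpha \mid G] \leq (\delta/k)/(1-\delta) = O(\delta/k)$; averaging picks out a concrete $(h^*, c^*_{<k^*})$ with $G$ holding and $\alpha(h^*, c^*_{<k^*}) = O(\delta/k)$, and a further averaging over the round-$k^*$ coins fixes a specific $c^*_{k^*}$ preserving this bound.

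With $k^*, h^*, c^*$ hardcoded, the one-way protocol is immediate: on inputs $x, y$ both parties reinterpret them as $X_{k^*}, Y_{k^*}$ and independently replay the first $k^*-1$ rounds of $P$ on the hardcoded data (no communication is needed, since $h^*$ and $c^*_{<k^*}$ are known to both). Then Alice sends the single round-$k^*$ message $a_{k^*}$, of length $\leq C$, to Bob, who outputs $z_{k^*}$. The failure probability over $(x, y) \sim \mu$ is exactly $\alpha(h^*, c^*_{<k^*}, c^*_{k^*}) = O(\delta/k)$, matching the lemma up to the universal constant.

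The main obstacle is the $1/k$ factor itself. The joint bound $\Pr[\bigcup_i F_i] \leq \delta$ only yields $\Pr[F_i] \leq \delta$ per individual round, so a naive best-round choice would give a one-way protocol with failure $\delta$, not $\delta/k$. The key trick I would highlight is partitioning by \emph{first failure} rather than by round: this replaces $\sum_i \Pr[F_i]$ (which can be as large as $k\delta$) with $\sum_i \Pr[E_i] \leq \delta$, after which the $1/k$ drops out of elementary averaging. Restricting to ``good histories'' costs only the harmless $1/(1-\delta)$ factor because good histories have probability close to $1$.
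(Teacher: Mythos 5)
Your proof takes essentially the same route as the paper: decompose the failure event by \emph{first} failing round, average over rounds to find a $k^*$ with first-failure probability at most $\delta/k$, then average over (good) histories and random coins to hardcode a context in which round $k^*$ becomes a one-way protocol with the desired failure probability. If anything you are a bit more careful than the paper at the conditioning step --- the paper's displayed equality $\P(\bigvee_t \lnot A_t) = \sum_t \P(\lnot A_t \mid \bigwedge_{s<t} A_s)$ is really a $\leq$ in the wrong direction and silently absorbs the $(1-\delta)^{-1}$ factor that you make explicit, which is harmless for the lower bound but good to notice.
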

\begin{proof}
    Consider fixed protocol $P$ with $C(P) \leq S$. By standard averaging argument we can assume that $P$ is a deterministic protocol.
    
    Consider event $A_t$ given by $(x_t, y_t, z_t)\in \mathcal{R}$. $A_t$ --- such an event depends only on $\{(x_s, y_s)\}_{s \leq t}$. We have $\delta \geq \P(\bigvee_t \lnot A_t) = \sum_t \P(\lnot A_t | \bigwedge_{s < t} A_s)$, and therefore there is $t_0$ for which $\P(\lnot A_{t_0} | \bigwedge_{s < t_0} A_s) \leq \frac{\delta}{k}$. In particular, there exists $(\hat{x}_1, \hat{y}_1), \ldots (\hat{x}_{t_0-1}, \hat{y}_{t_0-1})$, such that
    \begin{equation*}
        \P_{(x_{t_0}, y_{t_0}) \sim \mu}(\lnot A_{t_0} | \forall i<t_0,\ (x_i, y_i) = (\hat{x}_i, \hat{y}_i)) \leq \frac{\delta}{k}
    \end{equation*}

    Now, Alice and Bob can fix those $(\hat{x}, \hat{y})$, and use the restriction of protocol $P$ to the $k$-th round as a single round one way communication protocol for $\mathcal{R}$. As described above, failure probability of this protocol is bounded by $\frac{\delta}{k}$.
%    \todo{Some rewording?}
\end{proof}

In what follows we will use $T$-games associated with following relation.

\begin{Definition}[Approximate distinct elements relation]
	We define relation $F_0^\varepsilon \subset \{0, 1\}^n \times \{0, 1\}^n \times \bZ$, to be $F_0^\varepsilon = \{ (x,y,z) : (1 - \varepsilon) |x \lor y| \leq z \leq (1 + \varepsilon) |x \lor y|\}$.
\end{Definition}

The one-way communication complexity of this relation, in the low failure probability range, can be lower bounded as follows.
\begin{theorem}[\cite{DBLP:journals/talg/JayramW13}]
	\TheoremName{jayram-woodruff}
    For every $\varepsilon$ there is a distribution $\mu$ over $(x,y) \in \{0,1\}^n \times \{0,1\}^n$, such that $D_{\mu, \delta}^\rightarrow(F_0^{\varepsilon}) = \Omega(\frac{\log \delta^{-1}}{\varepsilon^2})$. Moreover, this distribution is supported on vectors with $|x| = |y| = \frac{n}{2}$
\end{theorem}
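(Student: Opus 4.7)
The plan is to construct a hard distribution by composing a per-instance hard gadget giving the $1/\varepsilon^2$ factor with an augmented-indexing-style chaining giving the $\log \delta^{-1}$ factor, which is essentially the Jayram-Woodruff template. For the basic gadget, I would partition $[n]$ into $\Theta(1/\varepsilon^2)$ equal-sized blocks. On a single designated block, Alice's restriction is a uniform balanced string and Bob's restriction is one of two strings correlated with Alice's in such a way that the two hypotheses produce values of $|x \lor y|$ on the block that differ by a factor $1 \pm \Theta(\varepsilon)$. Outside the designated block, Alice's and Bob's bits are fixed (deterministic padding) so that the global constraint $|x| = |y| = n/2$ is preserved. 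A $(1+\varepsilon)$-approximation to $F_0$ then distinguishes the two hypotheses with constant probability, and a standard Hellinger-distance/rectangle argument yields an $\Omega(1/\varepsilon^2)$ one-way communication lower bound for this binary decision under constant failure probability.

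To amplify to failure probability $\delta$, I would stack $k = \Theta(\log \delta^{-1})$ independent copies of the basic gadget on geometrically growing subuniverses $n_1 \ll n_2 \ll \cdots \ll n_k = n$, chosen so that the contribution of the $j$-th copy dominates at scale $n_j$. Bob additionally holds an index $j^\star \in [k]$ pointing to the copy of interest; using the prefix structure of the universe sizes he can subtract off the contribution of the lower-scale copies and read off a $(1+\varepsilon)$-approximation to the $F_0$ of copy $j^\star$. Consequently, any protocol that succeeds on the stacked instance with failure probability $\delta$ must, on average over $j^\star$, solve each individual basic gadget correctly with failure probability $O(\delta/k)$, yielding by direct-sum / chain-rule an $\Omega(1/\varepsilon^2)$ per-copy information cost and a total of $\Omega(\log \delta^{-1}/\varepsilon^2)$.

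The main obstacle is precisely this direct-sum step: a naive union bound would only recover $\Omega(1/\varepsilon^2)$, not the full $\Omega(\log\delta^{-1}/\varepsilon^2)$. The Jayram-Woodruff machinery handles this by expressing the distinguishing advantage on copy $j$ in terms of the squared Hellinger distance between Alice's conditional message distributions under the two hypotheses of that copy, and then exploiting subadditivity of squared Hellinger distance across independent copies together with Pinsker-type inequalities to bound the sum of per-copy advantages by the total communication length. The other purely technical point is to arrange the padding so that the balance constraint $|x| = |y| = n/2$ holds exactly at every stacking level, which is routine. Since this lower bound is classical, in the body of the present paper I would simply invoke it as a black box and combine it with \LemmaR{T-to-one-way} to derive the desired $\Omega((\log \log n)/\varepsilon^2)$ strong-tracking lower bound.
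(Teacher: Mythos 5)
The paper does not prove \TheoremR{jayram-woodruff}: it is imported verbatim from \cite{DBLP:journals/talg/JayramW13} and invoked as a black box in combination with \LemmaR{strong-tracking-to-T} and \LemmaR{T-to-one-way}. Your final sentence --- invoke it as a black box and combine with \LemmaR{T-to-one-way} --- is therefore exactly what the paper does, so at the level the paper operates your proposal is correct.

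The preceding sketch of how one would actually prove the Jayram--Woodruff bound is optional context and is not something the paper reproduces, so there is no paper proof to check it against. For what it is worth, the high-level ingredients you name (a gap-Hamming-style gadget giving the $1/\varepsilon^2$ factor, then an information-theoretic amplification via subadditivity of squared Hellinger distance over $\Theta(\log\delta^{-1})$ independent copies rather than a union bound) are indeed the spirit of their technique. However, the specific mechanism you describe --- geometrically growing subuniverses with Bob holding an index $j^\star$ and subtracting off lower-scale contributions --- is an augmented-indexing-style reduction, which is not how Jayram and Woodruff structure their argument; their amplification is phrased directly in terms of information cost of many independent copies of the same gadget at a single scale, with the distance-based analysis replacing the indexing trick. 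If you were actually asked to supply the proof rather than a citation, that discrepancy would need to be reconciled, and you would also need to verify that the balanced support $|x|=|y|=n/2$ is preserved by the gadget itself (not just by ad hoc padding), since the reduction in \LemmaR{strong-tracking-to-T} relies on that exact Hamming weight to control $|P_t|/|D_t|$.
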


It is enough now to show that strong tracking algorithm for distinct elements can be leveraged to obtain efficient protocols for $T$-game based on relation $F_0^\varepsilon$.

\begin{lemma}
	\LemmaName{strong-tracking-to-T}
    If there is a randomized streaming algorithm using space $S$ for $(1+\varepsilon)$-strong tracking distinct elements on the universe of size $O(n^2)$, which succeeds with probability $\frac{2}{3}$, then for any distribution $\mu$ supported on pairs $(x, y) \in \{0,1\}^n \times \{0,1\}^n$ of vectors with Hamming weight $\frac{n}{2}$ we have $D_{\mu, \delta}(F_0^{2\varepsilon}) \leq S$.
\end{lemma}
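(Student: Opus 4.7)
The approach is to have Alice and Bob jointly simulate the $(1+\varepsilon)$-strong tracking streaming algorithm on a stream that they construct round by round, using the algorithm's $S$-bit state as their messages $a_i,b_i$. The heart of the construction is a \emph{doubling} encoding of the universe: round $i$ will inject into the stream at least as many fresh distinct elements as all previous rounds combined, so the current round's contribution to $F_0$ is always a constant fraction of the total, and multiplicative strong-tracking error at times $t_{i-1}$ and $t_i$ translates into a multiplicative approximation for $|x_i\lor y_i|$.

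Concretely, I would pick $k=\Theta(\log n)$ and partition the universe (of size $\Oh(n^2)$) into disjoint blocks $R_1,\ldots,R_k$ with $R_i\cong[n]\times[2^{i-1}]$; the total size $\sum_{i\leq k} n\cdot 2^{i-1} \leq n\cdot 2^k = \Oh(n^2)$ fits. In round $i$ Alice inserts $\{(j,m)\in R_i : x_i(j)=1,\ m\in[2^{i-1}]\}$ and then Bob inserts $\{(j,m)\in R_i : y_i(j)=1,\ m\in[2^{i-1}]\}$. By disjointness of the blocks,
\[
F_0^{(t_i)}\;=\;\sum_{j\leq i}|x_j\lor y_j|\cdot 2^{j-1}.
\]
At the end of round $i$ Bob queries the strong tracker, obtaining $\tilde F_0^{(t_i)}$, and reports $z_i := (\tilde F_0^{(t_i)} - \tilde F_0^{(t_{i-1})})/2^{i-1}$; he caches $\tilde F_0^{(t_{i-1})}$ locally between rounds. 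The messages $a_i,b_i$ each carry the streaming algorithm's $S$-bit state, passed from Alice to Bob after her insertions and back from Bob to Alice at the end of the round. The randomness used by the streaming algorithm is sampled once by Alice and baked into the initial state (or handled by Yao's principle), so the per-message size is $S$ and hence $C(P)\leq S$.

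For correctness, condition on the event that the strong tracker is $(1+\varepsilon)$-accurate at every time step (probability $\geq 2/3$). Writing $N_i := n\cdot 2^{i-1}$, the bound $F_0^{(t_{i-1})} \leq \sum_{j<i} n\cdot 2^{j-1} < N_i$ together with $F_0^{(t_i)}\leq 2N_i$ yields
\[
\bigl|(\tilde F_0^{(t_i)} - \tilde F_0^{(t_{i-1})}) - |x_i\lor y_i|\cdot 2^{i-1}\bigr|
\;\leq\;\varepsilon\bigl(F_0^{(t_i)}+F_0^{(t_{i-1})}\bigr)\;\leq\;3\varepsilon N_i,
\]
so $|z_i-|x_i\lor y_i||\leq 3\varepsilon n$. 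Since $|x_i|=|y_i|=n/2$ forces $|x_i\lor y_i|\geq n/2$, the relative error is $\Oh(\varepsilon)$ at every round simultaneously; absorbing the $\Oh(1)$ factor (equivalently, invoking the reduction with a tracker of accuracy $\varepsilon/C$ for a suitable constant $C$) gives $(x_i,y_i,z_i)\in F_0^{2\varepsilon}$ for all $i$, so the protocol wins the $T$-game with probability $\geq 2/3$.

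The only substantive design choice is the doubling, and it is what I expect to be the key point to get right. Without it, $F_0^{(t_{i-1})}$ accumulates across rounds and the multiplicative noise $\varepsilon F_0^{(t_{i-1})}$ eventually dominates the round-$i$ increment $|x_i\lor y_i|\cdot 2^{i-1}$, making multiplicative strong tracking useless for isolating the current round. With doubling, each fresh contribution is at least half of the current total $F_0$, so a $(1\pm\varepsilon)$-multiplicative tracker cleanly yields a $(1\pm\Oh(\varepsilon))$-multiplicative approximation to $|x_i\lor y_i|$.
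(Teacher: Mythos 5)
Your proposal is the same reduction the paper uses: partition a universe of size $\Oh(n^2)$ into geometrically growing blocks so that each round's fresh contribution to $F_0$ is a constant fraction of the running total, simulate the streaming algorithm on this stream, and pass its $S$-bit state as the message. However, there is a quantitative gap in your round-$i$ estimator. You have Bob report $z_i = (\tilde F_0^{(t_i)} - \tilde F_0^{(t_{i-1})})/2^{i-1}$, i.e.\ you subtract the \emph{estimate} of $F_0$ from the previous round. But by the $T$-game rules Bob observes all earlier inputs $(x_s,y_s)_{s<i}$, so he can compute the exact prefix count $|P_i|=\sum_{s<i}|x_s\lor y_s|\cdot|U_s^1|$; the paper has him subtract this exact value. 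Subtracting an estimate costs an extra $\varepsilon F_0^{(t_{i-1})}$, which together with your factor-$2$ growth (giving $|P_i|\leq 2|D_i|$ rather than $|P_i|\leq |D_i|$) lands you at $(1\pm 6\varepsilon)|x_i\lor y_i|$ rather than the stated $(1\pm 2\varepsilon)$. Your proposed patch --- running the hypothesis tracker with accuracy $\varepsilon/C$ --- changes the hypothesis of the lemma, which is not what the statement allows. The clean fix is exactly what the paper does: use Bob's exact knowledge of $|P_i|$, and use a larger growth factor (the paper uses $8$) so that $|P_i| < |D_i|$; then the one-sided error is $\varepsilon(|P_i|+|D_i|) \leq 2\varepsilon|D_i|$ and the $F_0^{2\varepsilon}$ relation follows directly. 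With that change your argument matches the paper's.
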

\begin{proof}
		Indeed, consider universe $U$ partitioned into subsets $U_1 \cup U_2 \cup \ldots \cup U_k$, such that $|U_1| = n$, and $|U_i| = 8 |U_{i-1}|$. We can take $k = \Theta(\log n)$ such that $|U| \leq n^2$. Moreover, for each $t \leq k$, consider a partition of $U_t$ into $n$ sets $U_t = U_t^1 \cup \ldots \cup U_t^n$ with $|U_t^i| = 8^{t-1}$. The players are going to pass between each other the memory content of the streaming algorithm. On the $t$-th round, Alice takes her input $x_k$, and feeds to the algorithm all the elements $A_t := \bigcup_{i : (x_k)_i = 1} U_t^{i}$, then she sends the memory content to Bob, who in turn feeds to the algorithm set $B_t := \bigcup_{i : (y_k)_i = 1} U_t^{i}$, and reads off the answer $w$. 

    Let $P_t := \bigcup_{s<t} A_t \cup B_t$, and $D_t = A_t \cup B_t$. Note that Bob knows $|P_t|$, and moreover $|P_t| \leq \sum_{s < t} n 8^s \leq \frac{1}{4} n 8^t \leq \frac{|D_t|}{4}$, where the last inequality follows from the fact that all vectors $x_i$ under consideration have Hamming weight exactly $\frac{n}{2}$.

    By the correctness guarantee of the tracking algorithm, $w$ is a good approximation of $|P_t \cup D_t|$, i.e. $w = (1\pm\varepsilon)(|P_t \cup D_t|)$. Bob can estimate $|D_t|$ by $w - |P_t|$. Indeed: $w - |P_t| \leq (1 + \varepsilon)(|D_t|) + \varepsilon |P_t| \leq (1 + 2 \varepsilon) |D_t|$. Bob can report this estimate to the judge, and send the memory content of the algorithm back to Alice.
\end{proof}

\begin{theorem}
	\TheoremName{lower-bound}
    Any algorithm satisfying $(1+\varepsilon)$ strong tracking of $F_0$ with failure probability at most $\frac{1}{3}$, needs to use at least $\Omega(\frac{\log \log n}{\varepsilon^2})$ bits of space
\end{theorem}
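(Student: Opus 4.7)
The plan is to derive the lower bound by chaining together the three ingredients assembled in this section: the strong tracking-to-$T$-game reduction (\LemmaR{strong-tracking-to-T}), the $T$-game-to-one-way reduction (\LemmaR{T-to-one-way}), and the one-way lower bound for $F_0$ estimation in the small-failure regime (\TheoremR{jayram-woodruff}). Each step shrinks the failure probability by at most a factor of $\Theta(\log n)$, so after composing them we will be evaluating the Jayram--Woodruff bound at $\delta = \Theta(1/\log n)$, which is exactly what produces the desired $\log\log n$ factor.

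Concretely, suppose $\mathcal{A}$ is a streaming algorithm using $S$ bits of space that provides $(1+\varepsilon)$-strong tracking of $F_0$ on a universe of size $\Oh(n^2)$ with failure probability at most $1/3$. Fix the distribution $\mu$ on $\{0,1\}^n \times \{0,1\}^n$ guaranteed by \TheoremR{jayram-woodruff}, supported on pairs of vectors of Hamming weight $n/2$ (this is precisely the class of distributions for which \LemmaR{strong-tracking-to-T} applies). Applying \LemmaR{strong-tracking-to-T} with $k = \Theta(\log n)$ rounds produces a protocol for the $k$-round $T$-game $T(F_0^{2\varepsilon}, k)$ with communication complexity at most $S$ and failure probability at most $1/3$ when the inputs in each round are drawn i.i.d.\ from $\mu$.

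Next, I would feed this protocol into \LemmaR{T-to-one-way} to obtain a one-way deterministic protocol for the relation $F_0^{2\varepsilon}$ under input distribution $\mu$, whose communication complexity is still at most $S$ but whose failure probability is now at most $\frac{1}{3k} = \Theta(1/\log n)$. Applying \TheoremR{jayram-woodruff} with this choice of $\delta$ and accuracy parameter $2\varepsilon$ then yields
\[
    S \;\geq\; D^{\rightarrow}_{\mu,\delta}(F_0^{2\varepsilon}) \;=\; \Omega\!\left(\frac{\log \delta^{-1}}{(2\varepsilon)^2}\right) \;=\; \Omega\!\left(\frac{\log \log n}{\varepsilon^2}\right),
\]
as claimed.

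This argument is essentially a triangle composition of results already established in this section, so there is no substantive obstacle; the only things to verify are that the constants and hypotheses line up across the three steps. In particular, the Jayram--Woodruff hard distribution must be supported on Hamming-weight-$n/2$ inputs (which it is, by the ``moreover'' clause of \TheoremR{jayram-woodruff}) in order to be admissible in \LemmaR{strong-tracking-to-T}, and the enlargement of the universe from $n$ to $\Oh(n^2)$ inside \LemmaR{strong-tracking-to-T} only inflates $\log n$ by a factor of $2$, which is harmless inside the outer $\log\log n$. Combined with the already-known $\Omega(\frac{\log\delta^{-1}}{\varepsilon^2}+\log n)$ lower bound for plain (non-tracking) $F_0$ estimation, this gives the matching lower bound advertised in the introduction.
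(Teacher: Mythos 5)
Your proposal is correct and is essentially identical to the paper's argument: the paper simply states that \TheoremR{lower-bound} ``follows directly by composing \LemmaR{strong-tracking-to-T}, \LemmaR{T-to-one-way} and \TheoremR{jayram-woodruff},'' and you have correctly unfolded that composition, tracking how $k = \Theta(\log n)$ rounds drives $\delta$ down to $\Theta(1/\log n)$ and hence yields the $\log\log n$ factor. Your sanity checks (the Hamming-weight-$n/2$ support of the Jayram--Woodruff distribution, and the $n \to n^2$ universe blowup being harmless) are exactly the right ones and match the paper's setup.
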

\begin{proof}
	 The statement of this theorem follows directly by composing \LemmaR{strong-tracking-to-T}, \LemmaR{T-to-one-way} and \TheoremR{jayram-woodruff}.
\end{proof}

\section{Pseudorandom construction \SectionName{sampler-lemma}}
In this section we will prove \LemmaR{sampler-lemma}. Before we proceed with the proof, let us introduce a useful definition.

\begin{Definition}
    A function $\Gamma : \{0, 1\}^s \times [w] \to [M]$ is called $\gamma_0$-strong sampler, if for any function $f : [M] \to [0,1]$ and random variables $Y_i := \Gamma(U, i)$ generated by supplying uniformly random $U$, we have for any $2 < \gamma$
    \begin{equation*}
        \P(|\sum_{i \leq w} f(Y_i) > \mu \gamma) \leq \exp(-\Omega(\mu \gamma \log \min\{\gamma, \gamma_0\}))
    \end{equation*}
    and moreover for any fixed $i$, we have $\Gamma(U, i) \sim \Unif([M])$.
\end{Definition}

The definition above is non-vacuous --- as it has been recently shown, standard pseudorandom constructions of samplers actually satisfy our definition of the strong sampler.
\begin{theorem}[\cite{DBLP:journals/corr/RaoR17,DBLP:journals/cpc/Wagner08}]
    \TheoremName{random-walk-sampler}
    A random walk over a finite regular undirected graph with second largest eigenvalue $\lambda$, yields a $\lambda^{-1}$-strong sampler. This implies explicit $\gamma$-strong samplers $\Gamma : \{0, 1\}^s \times [w] \to [M]$ with seed length $s = \log M + \Oh(w \log \gamma)$.
\end{theorem}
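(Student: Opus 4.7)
The plan is to establish the two assertions separately: the spectral Chernoff-type bound for expander walks in the strong-sampler form, and then the explicit-construction consequence. Since both are cited from prior work, this proposal describes how I would invoke and verify them rather than re-derive from scratch.

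For the first assertion, fix a $D$-regular undirected graph on $[M]$ with transition matrix $P$ whose non-trivial eigenvalues are bounded by $\lambda$ in absolute value, and fix $f : [M] \to [0,1]$ with $\mu := \E_{Y \sim \Unif([M])} f(Y)$. I would follow Wagner's combinatorial-spectral argument. For any subset $T \subseteq [w]$, the probability that a walk's hits of the ``heavy'' level set of $f$ occur precisely at the time-steps in $T$ factors through an operator product $F P F P \cdots P F$, where $F$ is the diagonal matrix of $f$-values at the indices in $T$ and the identity elsewhere. The key inequality, which Wagner proved and Rao--Regev sharpened, is that such a product maps the uniform start to total $\ell_1$-mass at most $(\mu+\lambda)^{|T|}$. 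Summing over all $T$ of size $k$ and using $\binom{w}{k} \leq (ew/k)^k$, I obtain for any integer $k \geq 2\mu w$
\begin{equation*}
    \P\!\left(\sum_{i \leq w} f(Y_i) \geq k\right) \leq \binom{w}{k}(\mu+\lambda)^{k}.
\end{equation*}
Setting $k = \mu \gamma w$ rearranges to $\exp\!\bigl(-\Omega(\mu \gamma w \log \min\{\gamma, \lambda^{-1}\})\bigr)$, which matches the definition of a $\lambda^{-1}$-strong sampler. Starting the walk from a uniformly chosen vertex makes each $Y_i$ marginally uniform, and the reduction from general $f : [M] \to [0,1]$ to indicator-valued $f$ is handled by a layer-cake decomposition.

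For the explicit-sampler consequence, I would instantiate this with an explicit $D$-regular Ramanujan or Margulis--Gabber--Galil graph on $[M]$ with $D = \Theta(\gamma^2)$, for which $\lambda = \Oh(D^{-1/2}) \leq \gamma^{-1}$. The seed then consists of $\log M$ bits for the starting vertex and $\log D = \Oh(\log \gamma)$ bits per walk step, for a total seed length of $\log M + \Oh(w \log \gamma)$. Since $\lambda^{-1} \geq \gamma$, the resulting function is a $\gamma$-strong sampler.

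The main obstacle, which is precisely the content of the cited work, is obtaining the $\log \gamma$ factor in the exponent rather than the weaker $\gamma$ factor that a naive Hoeffding-style expander Chernoff bound would give. Once this refined spectral estimate is available, the marginal-uniformity condition, the layer-cake reduction, and the explicit expander instantiation are all routine.
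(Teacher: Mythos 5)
Your route differs from the paper's in an essential way, and the difference matters. The paper does not reprove the spectral statement; it quotes the Rao--Regev moment-generating-function bound
\begin{equation*}
    \E \exp\Bigl(\theta \sum_i f(Y_i)\Bigr) \leq \exp\bigl(2\mu'(e^\theta - 1)\bigr),
    \qquad \theta \leq \ln \lambda^{-1} - 1,
\end{equation*}
(with $\mu' := w \E f$) and then applies the standard Markov exponentiation with $\theta := \ln\min\{\gamma,\gamma_0\} - \ln 4$ to land directly on $\exp(-\Omega(\mu'\gamma \log \min\{\gamma,\gamma_0\}))$. You instead propose to re-derive a tail bound from the AFWZ-style hitting estimate $\Pr[Y_i \in B\ \forall i \in T] \leq (\mu+\lambda)^{|T|}$ and a union bound over the $\binom{w}{k}$ choices of $T$, which gives
$\Pr[\sum_i f(Y_i) \geq k] \leq \binom{w}{k}(\mu+\lambda)^k$.

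The problem is the step you label a ``rearrangement.'' With $k = \mu\gamma w$ and $\binom{w}{k} \leq (ew/k)^k$, your bound is
\begin{equation*}
    \binom{w}{k}(\mu+\lambda)^k \;\leq\; \exp\!\left(-\,\mu\gamma w\,\ln\frac{\mu\gamma}{e(\mu+\lambda)}\right).
\end{equation*}
To match the definition of a $\lambda^{-1}$-strong sampler you would need $\frac{\mu\gamma}{e(\mu+\lambda)} \geq \bigl(\min\{\gamma,\lambda^{-1}\}\bigr)^{\Omega(1)}$, but this fails whenever $\lambda$ is much larger than $\mu$. Concretely, take $\lambda$ a small constant and $\mu = \E f$ extremely small; for $\gamma$ only slightly above $\lambda^{-1}$ the base $e(\mu+\lambda)/(\mu\gamma) \approx e\lambda/(\mu\gamma)$ is much larger than $1$, so the union-bound estimate is vacuous, while the claimed strong-sampler guarantee is $\exp(-\Omega(\mu w\gamma\log\lambda^{-1})) < 1$. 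In other words, the combinatorial bound is Poisson-type only when $\lambda \lesssim \mu$; the definition of a $\gamma_0$-strong sampler, and the way it is applied in \LemmaR{strong-implies-unbounded-tails} to level-set functions $f_k$ whose density drops far below $\lambda$, does not come with that restriction. The MGF route avoids this entirely because the sub-Poisson MGF bound holds for all densities $\mu$, with the only constraint being $\theta \leq \ln\lambda^{-1}-1$, and that constraint is exactly what produces the $\min\{\gamma, \lambda^{-1}\}$ in the exponent.

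Two smaller points. First, the works you cite do not actually prove the bound via the $\binom{w}{k}(\mu+\lambda)^k$ union bound: Wagner and Rao--Regev establish MGF/operator-norm estimates (the union-bound approach is the older Alon--Feige--Wigderson--Zuckerman one), so the attribution in your proposal is off. Second, for $[0,1]$-valued $f$ the layer-cake reduction you invoke is not routine in the union-bound framework, since the event $\sum_i f(Y_i) > \mu\gamma w$ does not decompose cleanly over level sets the way a union bound over hitting patterns requires; in the MGF framework this is trivial because the MGF bound is proved directly for $[0,1]$-valued observables. The second half of your argument (instantiating with an explicit $D$-regular expander with $D = \Theta(\gamma^2)$ so that $\lambda \leq \gamma^{-1}$, and counting $\log M + w\log D$ seed bits) matches what the theorem intends and is fine.
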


In \cite{DBLP:journals/corr/RaoR17} bounds on the moments generating functions of $\sum Y_i$ are proven, instead of the tail bounds that appear in our definition of strong-sampler. They proved
    \begin{equation*}
        \mathbb{E} \exp(\theta \sum f(Y_i)) \leq \exp\left( 2\mu (e^\theta - 1)\right)
    \end{equation*}
    for $\theta \leq \ln \lambda^{-1} - 1$. There is a standard way of deducing tail bounds of the form required for strong samplers from this MGF bound
    \begin{equation*}
        \P(|\sum_{i \leq w} f(Y_i)| > \mu \gamma) \leq \exp(-\mu \gamma \theta) \mathbb{E} \exp(\theta \sum f(Y_i)) \leq \exp(-\mu \gamma \theta + 2 \mu (e^\theta - 1))
    \end{equation*}
    we can plug in $\theta := \ln \gamma^* - \ln 4$, where $\gamma^* := \min\{\gamma, \gamma_0\}$ to get $\P(\sum_{i \leq w} f(Y_i) > \mu \gamma) \leq \exp(- \frac{1}{2} \mu \gamma \ln \gamma^*)$.
    
We will now show that sums of strongly concentrated random variables, sampled according to a strong sampler still satisfy similar type of tail-bounds as if they were sampled independently at random.

\begin{lemma}
    \LemmaName{strong-implies-unbounded-tails}
    If $\Gamma : \{0, 1\}^s \times [w] \to [M]$ is $\exp(\ln^2 w)$-strong sampler, and $f : [M] \to \bR_+$ satisfies doubly exponential tail bounds $\P_{X \sim \mathrm{Unif([M])}}(X > \gamma) \leq \exp(-e^\gamma)$, then
    \begin{equation*}
        \P(\sum_{i \leq w} f(Y_i) > C w) < \exp(- \Omega(w))
    \end{equation*}
    for some universal constant $C$.
\end{lemma}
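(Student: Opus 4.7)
The plan is a level-set decomposition of $f$. Set $T_k := \sum_{i \leq w} \mathbf{1}[f(Y_i) \geq k]$ for each integer $k \geq 1$; then $\sum_i f(Y_i) \leq w + \sum_{k \geq 1} T_k$, so it suffices to show $\sum_{k \geq 1} T_k \leq O(w)$ except with probability $\exp(-\Omega(w))$. Each indicator $\mathbf{1}[f(\cdot) \geq k]$ is $[0,1]$-valued with mean $p_k \leq \exp(-e^{k-1})$ by the doubly-exponential tail hypothesis, so the strong-sampler guarantee applies directly to $T_k$, which has expected value $w p_k$.

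For each level $k$ I would choose a threshold $B_k = w p_k \gamma_k$ and apply
\[
    \Pr\bigl(T_k > B_k\bigr) \leq \exp\!\bigl(-\Omega(B_k \log \min(\gamma_k, \gamma_0))\bigr),
    \qquad \gamma_0 = \exp(\ln^2 w).
\]
Achieving per-level failure $\exp(-\Omega(w))$ forces $B_k \log \min(\gamma_k, \gamma_0) = \Omega(w)$. In the small-$k$ regime I would take $B_k = \Theta(w/e^{k-1})$: since $p_k \leq \exp(-e^{k-1})$, the induced ratio $\gamma_k = B_k/(w p_k)$ satisfies $\log \gamma_k = \Omega(e^{k-1})$, so $B_k \log \gamma_k = \Omega(w)$, and these $B_k$ form a geometric series with total mass $O(w)$. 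This choice breaks near $k \approx 2 \ln \ln w$, where $\gamma_k$ starts to exceed $\gamma_0$; for larger $k$ I would set $B_k = \Theta(w/\ln^2 w)$ instead, using the capped exponent $\log \gamma_0 = \Theta(\ln^2 w)$. The doubly-exponential tail forces $f(x) \leq \log \log M + O(1)$ for every $x \in [M]$, so only $O(\log \log M)$ levels have $p_k > 0$; provided $\log \log M = O(\ln^2 w)$ (a condition comfortably met in every application of the lemma in this paper), the large-$k$ contribution to $\sum_k B_k$ is also $O(w)$.

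Finally, a union bound over these $O(\log \log M)$ levels gives total failure $O(\log\log M) \cdot \exp(-\Omega(w)) = \exp(-\Omega(w))$, and on the good event $\sum_{k \geq 1} T_k \leq \sum_k B_k = O(w)$, whence $\sum_i f(Y_i) \leq C w$ for a universal constant $C$. The main technical care lies in the transition around $k \approx 2 \ln \ln w$: one must verify that the two prescriptions for $B_k$ agree up to constants and that neither the small-$k$ geometric tail (where the doubly-exponential decay of $p_k$ must outpace $e^{k-1}$ inside the log) nor the large-$k$ uniform contribution $K_{\max} \cdot w / \ln^2 w$ overshoots the $O(w)$ budget. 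The doubly-exponential (as opposed to merely sub-exponential) tail on $f$ is exactly what makes the small-$k$ geometric series converge to $O(w)$ and is therefore essential to the argument.
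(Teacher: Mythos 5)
Your level-set decomposition (arithmetic thresholds $k=1,2,\dots$ versus the paper's geometric thresholds $T_k=2^kT_0$) and your two-regime treatment --- uncapped exponent $\log\gamma_k$ for small $k$, capped exponent $\log\gamma_0=\Theta(\ln^2 w)$ for large $k$ --- match the paper's argument in structure and intent. The small-$k$ regime is handled correctly, and the transition point $k\approx 2\ln\ln w$ is identified correctly.

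The gap is in the large-$k$ regime: the side condition ``$\log\log M = O(\ln^2 w)$'' that you invoke is neither part of the lemma's hypotheses nor, in fact, satisfied where the paper uses this lemma. Tracing through \LemmaR{random-walk-small} inside the proof of \LemmaR{sampler-lemma}, the sample count fed into the present lemma is $w_2=\Theta(\log(\log\delta^{-1}+\log n))$ while $\log M = \Theta(\log n)$, so $\log\log M = \Theta(\log\log n)$ whereas $\ln^2 w_2 = \Theta\bigl((\log\log\log n)^2\bigr)$: the former dominates. Allocating $B_k=\Theta(w/\ln^2 w)$ to each of $\Theta(\log\log M)$ levels therefore gives a total budget $\sum_k B_k = \Theta\bigl(\log\log M\cdot w/\ln^2 w\bigr)$, which is $\omega(w)$ in precisely the regime of interest. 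The point is that the strong-sampler bound alone forces a per-level budget of at least $\Omega(w/\ln^2 w)$ once the exponent is capped at $\log\gamma_0$, so you cannot afford $\Theta(\log\log M)$ such levels in general.

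The paper closes this by truncating much earlier, and in a way that is independent of $M$: once $k$ is large enough that $\E T_k = w\,p_k \le w\exp(-e^{k-1}) \le \exp(-w)$ --- i.e.\ already at $k=\Theta(\log w)$ --- Markov's inequality gives $\P(T_k\ge 1)\le \exp(-w)$, so with probability $1-\exp(-w)$ no sample has $f(Y_i)$ exceeding roughly $\log w$ and all higher levels vanish outright. Only $\Oh(\log w)$ levels then remain in the large-$k$ regime, each with budget $\Theta(w/\ln^2 w)$, for a total of $\Oh(w/\ln w)=\Oh(w)$, with no assumption relating $M$ to $w$. So: replace your ``$f(x)\le\log\log M+\Oh(1)$, hence $\Oh(\log\log M)$ levels'' observation with a Markov cutoff at $k=\Theta(\log w)$, and the rest of your argument goes through.
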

\begin{proof}
    Take some $T_0$, sufficiently large constant, and consider a sequence $T_k = 2^k T_0$, together with functions $f_k : [M] \to \{0, 1\}$ given by $f_k(x) := [x > T_k]$. Let $\mu_k := w \E f_k(X)$, and notice that because of the assumed tail bounds on function $f$ we have $\mu_k \leq w \exp(-e^{2^k T_0})$. 
    
    We can bound value of $\sum f(Y_i)$ in terms of $f_k$ as follows
    \begin{equation*}
        \sum f(Y_i) \leq \Oh(w) + T_0 \sum_{k\geq 0} 2^k \sum_{i\leq w} f_k(Y_i).
    \end{equation*}

    We shall bound all terms $\sum_i f_k(Y_i)$ separately. Let us take $k_1$ smallest such that $\mu_{k_1} \leq 1$ (i.e. $T_{k_1} \approx \ln \ln w$) and $k_2$ smallest such that $\mu_{k_2} \leq \exp(-w)$ --- we have $T_{k_2} = \Theta(\log w)$.

    Firstly, by Markov inequality $\P(\sum_{i \leq w} f_{k_2}(Y_i) \geq 1) \leq \mu_{k_2} \leq \exp(-w)$, so with probability $1-\exp(-w)$, we have $\sum f(Y_i) \leq \Oh(w) + \sum_{k \leq k_2} 2^k \sum_{i \leq w} f_k(Y_i)$.

    We will bound terms between $0$ and $k_1$, and terms in the range $k_1$ and $k_2$ separately. For $k_1 < k < k_2$, we can use the Chernoff-type inequality guaranteed by the sampler. Indeed, for $k > k_1$, we have $\mu_k < w \exp(-e^{2 \ln \ln w}) < \exp(- \frac{1}{2} (\ln w)^2)$, and therefore if we pick $\gamma_k := \frac{w}{\ln^2 w}\mu_k^{-1} > \exp( (\ln w)^2)$ we have by the definition of strong sampler
    \begin{equation*}
        \P(\sum_{i \leq w} f_k(Y_i) > w/\log^2 w) < \exp(-\Omega(\frac{w}{\log^2 w} \log^2 w)) = \exp(-\Omega(w)).
    \end{equation*}
    If this (exponentially unlikely) event does not hold for any $k$ in this range, we have $\sum_{k_1 < k \leq k_2} \sum_{i\leq w} T_k f_k(Y_i) \leq \sum_{k_1 \leq k \leq k_2} T_0 \ln w \frac{w}{\ln^2 w} = \Oh(w)$, because $k_2 = \Oh(\log w)$.

    Let us now focus on the range $k < k_1$, and let us consider $\gamma_k$ such that $\gamma_k \mu_k = 3^{-k} w$. For $k < k_1$ we have $\gamma_k < w < \exp(\ln^2 w)$, so the sampler guarantee gives us
    \begin{equation}
        \EquationName{last-failure}
        \P(\sum_{i \leq w} f(Y_i) > \gamma_k \mu_k) \leq \exp(- 3^{-k} w \log \gamma_k)
    \end{equation}

    Clearly, if neither of those events hold, we have
    \begin{equation*}
        \sum_{k \leq k_1} \sum_{i \leq w} T_k f(Y_i) \leq \sum_{k \leq k_1} T_0 2^k 3^{-k} w = \Oh(w)
    \end{equation*}

    It is enough to bound the failure probability in \Equation{last-failure}. We have $\log \gamma_k = -k \log 3 + \log w - \log \mu_k = - k \log 3 + \log \P(f(X) > T_k) > k \log 3$. As such, for any fixed $k$, we have $\P(\sum_i \leq w f(Y_i) > \gamma_k \mu_k \leq \exp(-w)$, and by union bound the failure probability is bounded by $\exp(-w + \log k_1) < \exp(-\Omega(w))$.
\end{proof}

First, let us observe that \TheoremR{random-walk-sampler} and \LemmaR{strong-implies-unbounded-tails} implies that
\begin{lemma}
    \LemmaName{random-walk-small}
    For $w_2 \geq K \log R$, there exist an explicit function $\Gsamp_0 : \{0, 1\}^{s_0} \times [w_2] \to [M]$ such that set
    $\{\Gsamp_0(U,1), \ldots \Gsamp_0(U, w_2)\}$ is $C$-small except with probability $\exp(-c w_2)$.

    The seed length is $s_0 = \Oh(\log M + w_2 \log^2 w_2)$.
\end{lemma}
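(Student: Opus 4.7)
The plan is to obtain $\Gsamp_0$ by instantiating the strong-sampler guarantee of \TheoremR{random-walk-sampler} with parameter $\gamma_0 := \exp(\ln^2 w_2)$, applying \LemmaR{strong-implies-unbounded-tails} once per function $g_t$, and then taking a union bound over all $R$ functions. The hypothesis $w_2 \geq K \log R$ is exactly what is needed so that this union bound does not swallow the exponential savings.

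More concretely, I would first invoke \TheoremR{random-walk-sampler} with $\gamma_0 := \exp(\ln^2 w_2)$ to obtain an explicit $\gamma_0$-strong sampler $\Gsamp_0 : \{0,1\}^{s_0} \times [w_2] \to [M]$ whose seed length is
\begin{equation*}
    s_0 = \log M + \Oh(w_2 \log \gamma_0) = \log M + \Oh(w_2 \log^2 w_2),
\end{equation*}
matching the claimed bound. This is the only object we need to construct; everything else is just an application of the tail bound machinery already developed.

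Fix any functions $g_1, \ldots, g_R$ satisfying the doubly-exponential tail assumption. Because $\Gsamp_0$ is an $\exp(\ln^2 w_2)$-strong sampler, \LemmaR{strong-implies-unbounded-tails} (applied with $w=w_2$) immediately yields, for each fixed $t \in [R]$,
\begin{equation*}
    \P_U\!\left(\sum_{i \leq w_2} g_t(\Gsamp_0(U,i)) > C\, w_2\right) < \exp(-\Omega(w_2))
\end{equation*}
for a universal constant $C$. A union bound over the $R$ functions gives
\begin{equation*}
    \P_U\!\left(\exists\, t \leq R:\ \sum_{i \leq w_2} g_t(\Gsamp_0(U,i)) > C\, w_2\right) < R \cdot \exp(-\Omega(w_2)).
\end{equation*}

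The only thing to check is that the union bound does not destroy the exponential decay: picking $K$ large enough compared to the implicit constant in $\Omega(w_2)$, the assumption $w_2 \geq K \log R$ gives $R \cdot \exp(-\Omega(w_2)) \leq \exp(-c\, w_2)$ for some $c > 0$. On the complementary event, every one of the $R$ inequalities $\sum_i g_t(\Gsamp_0(U,i)) \leq C w_2$ holds simultaneously, which is precisely the $C$-smallness of the sequence $(\Gsamp_0(U,1), \ldots, \Gsamp_0(U,w_2))$ with respect to $g_1, \ldots, g_R$. No step here requires any further ingenuity, since both \TheoremR{random-walk-sampler} and \LemmaR{strong-implies-unbounded-tails} have already been proved; the only "obstacle" is the bookkeeping of choosing $\gamma_0$ large enough so that \LemmaR{strong-implies-unbounded-tails} applies, while still small enough (namely $\log \gamma_0 = \Oh(\log^2 w_2)$) to keep the seed length at the target $\Oh(\log M + w_2 \log^2 w_2)$.
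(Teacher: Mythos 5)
Your proposal is correct and takes essentially the same route as the paper: instantiate the expander-walk sampler of \TheoremR{random-walk-sampler} with strength parameter $\exp(\ln^2 w_2)$, apply \LemmaR{strong-implies-unbounded-tails} with $w = w_2$ to each $g_t$, and union bound over the $R$ functions using $w_2 \geq K\log R$ to absorb the $R$ factor. Your bookkeeping of the sampler strength parameter is in fact cleaner than the paper's (which refers to $\lambda = \Oh(\exp(\log^2 w_2))$ where $\lambda$ is nominally the second eigenvalue, a minor inconsistency).
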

\begin{proof}
    Consider $\Gsamp_0$ as in \TheoremR{random-walk-sampler} with parameter $\lambda = \Oh(\exp(\log^2 w_2))$. We know that for every specific $g_t$, with probability $\exp(-\Omega(w_2))$, the sum over the generated sequence satisfies $\sum_{t \leq w_2} g_t(\Gsamp_0(U, i)) \leq C w_2$. We can union bound over all $t \leq R$, so the probability that $\Gsamp_0(U, *)$ fails to be $C$-small is bounded by $\exp(-\Omega(w_2) + \log R) = \exp(-\Omega(w_2))$, as long as $K$ in the statement of the lemma is sufficiently large constant.
\end{proof}

In what follows we will use as a building block the construction guaranteed by the following theorem
\begin{theorem}[\cite{DBLP:journals/rsa/Zuckerman97, DBLP:journals/jacm/GuruswamiUV09} \cite{Vadhan12} Corollary 6.24]
    There exist an explicit $(\varepsilon, \delta)$-averaging sampler $\Gamma : \{0, 1\}^s \times [W] \to [M]$, with $s = \Oh(\log M + \log \delta^{-1})$ and $W = \poly(\varepsilon^{-1}, \log \delta^{-1}, \log M)$.
\end{theorem}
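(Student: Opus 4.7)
The plan is to construct the sampler from a strong extractor, using Zuckerman's classical equivalence between extractors and averaging samplers, and then to plug in the explicit GUV extractor to obtain the optimal seed length and number of samples.

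Concretely, I would take an explicit strong $(k, \varepsilon/2)$-extractor $\mathrm{Ext} : \{0,1\}^n \times [W] \to [M]$ with parameters $n = \log M + \log \delta^{-1} + \Oh(1)$ and $k = n - \log \delta^{-1} - 1$, and then define $\Gamma(u, i) := \mathrm{Ext}(u, i)$. For correctness, fix any test function $f : [M] \to [0, 1]$ with mean $\mu$ and let $B := \{ u : | W^{-1} \sum_{i \leq W} f(\Gamma(u, i)) - \mu | > \varepsilon \}$ be the set of bad seeds. Suppose for contradiction $|B| > \delta \cdot 2^n$. Then $\Unif(B)$ is a source on $\{0,1\}^n$ of min-entropy at least $n - \log \delta^{-1} \geq k$, so the extractor guarantee forces $\mathrm{Ext}(\Unif(B), \Unif([W]))$ to be within statistical distance $\varepsilon/2$ of uniform on $[M]$, which since $f$ has range $[0,1]$ gives $| \E_{u \sim \Unif(B), i \sim \Unif([W])} f(\Gamma(u, i)) - \mu | \leq \varepsilon/2$. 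After averaging the one-sided deviations that define membership in $B$ (handling the two signs separately), this contradicts $|B| > \delta \cdot 2^n$, so $B$ must be small, which is exactly the averaging-sampler guarantee.

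For the parameters, I would invoke the GUV extractor: for any constant $\alpha > 0$ it produces an explicit strong $(k, \eta)$-extractor with output length $(1 - \alpha) k$ and seed length $\log W = \Oh(\log n + \log \eta^{-1})$. Setting $\eta = \varepsilon/2$ and matching the output length to $\log M$ pins $k = \Theta(\log M)$, and hence $n = k + \log \delta^{-1} + \Oh(1) = \Oh(\log M + \log \delta^{-1})$ — exactly the claimed seed length. The number of samples is then $W = 2^{\Oh(\log n + \log \varepsilon^{-1})} = \poly(\log M, \log \delta^{-1}, \varepsilon^{-1})$, matching the other half of the claim.

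The main obstacle is entirely outsourced to the GUV construction itself: simultaneously attaining near-optimal entropy loss $\alpha k$ and near-optimal seed length $\Oh(\log n + \log \eta^{-1})$ is highly nontrivial, and goes through Parvaresh--Vardy codes together with list-decoding arguments as in \cite{DBLP:journals/jacm/GuruswamiUV09}. Given that extractor as a black box, the sampler-from-extractor reduction above and the parameter bookkeeping are routine, and the final statement can be read off as Corollary 6.24 of \cite{Vadhan12}.
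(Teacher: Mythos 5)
This statement is cited in the paper from \cite{DBLP:journals/rsa/Zuckerman97, DBLP:journals/jacm/GuruswamiUV09, Vadhan12} and is not proved there, so there is no internal proof to compare against; your reconstruction is precisely the standard argument underlying that citation (Zuckerman's extractor-to-averaging-sampler equivalence composed with the GUV extractor), and the parameter bookkeeping comes out right. The one spot worth tightening is the contradiction step: applying the extractor guarantee to $\Unif(B)$ directly only controls $\left|\E_{u \sim B, i}\, f(\Gamma(u,i)) - \mu\right|$, which can be small even when every $u \in B$ individually deviates by more than $\varepsilon$ (positive and negative deviations can cancel); the argument must be run on $B^+ = \{u : W^{-1}\sum_i f(\Gamma(u,i)) > \mu + \varepsilon\}$ or $B^-$, whichever has size at least $\tfrac12 \delta 2^n$, which is what your parenthetical about handling the two signs separately is gesturing at and what the choice $k = n - \log\delta^{-1} - 1$ is set up for.
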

We shall use such a sampler to subsample a set of seeds for the expander random walks discussed in \LemmaR{random-walk-small}. We can ensure that except with probability $\exp(-\Omega(w))$ the subsampled set of seeds has the same fraction of seeds generating $C$-small sets.

\begin{lemma}
    \LemmaName{reduced-universe}
	For any $w > K R$ and $w_2$ satisfying $K \log R < w_2 < w$, there exist $c$ and an explicit function $\Gsamp_1 : \{0, 1\}^{s_1} \times [W] \to [M]^{w_2}$ such that
    \begin{equation*}
			\P_{x \in \Unif(\{0,1\}^{s_1})}( \#\{ w : G_1(x, w) \text{ is not $C$-small}\} > 2 \exp(-c w_2) W) < \exp(-\Omega(w))
    \end{equation*}
	The seed length here is $s_1 := \Oh(\log M + w)$ and $W = \poly(w, \exp(w_2), \log M)$.
\end{lemma}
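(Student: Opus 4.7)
The plan is to take the construction from Lemma~\ref{lem:random-walk-small} and subsample its seed space using an off-the-shelf averaging sampler, so that the sampled sub-collection inherits the same (up to a factor of two) fraction of $C$-small outputs. Fix the functions $g_1,\ldots,g_R$ satisfying the doubly-exponential tail bounds. First, I would invoke Lemma~\ref{lem:random-walk-small} with the given $w_2$ to obtain an explicit $\Gsamp_0 : \{0,1\}^{s_0} \times [w_2] \to [M]$ with $s_0 = \Oh(\log M + w_2 \log^2 w_2)$, such that a uniform $U \in \{0,1\}^{s_0}$ produces a $C$-small sequence $\Gsamp_0(U,\cdot)$ except with probability at most $p_0 := \exp(-c w_2)$. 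Define the indicator $f : \{0,1\}^{s_0} \to \{0,1\}$ by $f(u) := 1$ iff $\Gsamp_0(u,\cdot)$ fails to be $C$-small; then $\mu := \E_U f(U) \le p_0$.

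Second, I would apply the explicit GUV/Zuckerman averaging sampler (Corollary 6.24 of \cite{Vadhan12}) to the universe $\{0,1\}^{s_0}$ with accuracy $\varepsilon_0 := p_0$ and confidence $\delta_0 := \exp(-w)$. This gives $\Gamma : \{0,1\}^{s_1} \times [W] \to \{0,1\}^{s_0}$ with seed length $s_1 = \Oh(s_0 + \log \delta_0^{-1}) = \Oh(\log M + w)$ and $W = \poly(\varepsilon_0^{-1}, \log \delta_0^{-1}, s_0) = \poly(\exp(w_2), w, \log M)$. Defining $\Gsamp_1(x, i) := \Gsamp_0(\Gamma(x, i), \cdot) \in [M]^{w_2}$, the averaging guarantee applied to $f$ says that except with probability $\delta_0 = \exp(-w)$ over uniform $x \in \{0,1\}^{s_1}$, the empirical mean satisfies $\frac{1}{W}\sum_{i \le W} f(\Gamma(x,i)) \le \mu + \varepsilon_0 \le 2 \exp(-c w_2)$, so at most $2 \exp(-c w_2)\, W$ indices $i$ give a sequence $\Gsamp_1(x,i)$ that fails to be $C$-small; this is exactly the conclusion of the lemma.

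The main obstacle is parameter bookkeeping in two places. First, I must verify that $s_1$ collapses $s_0$ into $\Oh(\log M + w)$; this relies on $w_2 \log^2 w_2 = \Oh(w)$, which holds comfortably in the intended regime $w_2 = \Theta(\log w)$ used later in the construction of Lemma~\ref{lem:sampler-lemma}. Second, the two parameters of the outer averaging sampler must be tuned jointly: the accuracy $\varepsilon_0$ must match $p_0$ so that the bad-seed fraction only doubles after subsampling, while the confidence $\delta_0$ must be driven all the way down to $\exp(-w)$ so that the sampler's failure absorbs into the target $\exp(-\Omega(w))$ bound. What makes both of these requirements simultaneously achievable within an $\Oh(\log M + w)$ seed budget is the linear dependence of the GUV sampler's seed length on $\log M'$ and $\log \delta_0^{-1}$; given that, the remaining steps are immediate from the definition of an averaging sampler and the fact that $f$ is $\{0,1\}$-valued with expectation at most $p_0$.
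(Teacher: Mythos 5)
Your proposal is correct and is essentially the same as the paper's proof: both compose the expander-walk sampler $\Gsamp_0$ from Lemma~\ref{lem:random-walk-small} with the explicit $(\exp(-cw_2),\exp(-w))$-averaging sampler of Zuckerman/GUV applied to the $\{0,1\}$-valued indicator of $C$-smallness over the seed space $\{0,1\}^{s_0}$. Your extra remark that the seed-length collapse $s_1 = \Oh(\log M + w)$ requires $w_2\log^2 w_2 = \Oh(w)$ (satisfied in the regime $w_2=\Theta(\log w)$ actually used) is a valid observation that the paper leaves implicit.
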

\begin{proof}
    Take $\Gamma$ to be $(\exp(-c w_2), \exp(-w))$-averaging sampler, where $c$ is such that $\Gsamp_0$ from \LemmaR{random-walk-small} provides a $C$-small set except with probability $\exp(-c w_2)$. Consider function $\Gsamp_1(S, i) = \Gsamp_0(\Gamma(S, i), *)$, i.e. $[\Gsamp_1(S, i)]_j = \Gsamp_0(\Gamma(S, i), j)$. The required properties follow from definition of the averaging sampler applied to the indicator function of $f :\{0, 1\}^{s_0} \to \{0,1\}$ with $f(s) = 1$ if and only if $\Gsamp_0(S, *)$ yields a $C$-small sequence. 
\end{proof}

Finally, we are ready to prove the main lemma in this section.

\begin{proof}[Proof of \LemmaR{sampler-lemma}]
		Given $w > K R^{1/2}$, take $w_2 = \Theta(\log w)$ large enough to apply \LemmaR{reduced-universe}, and $w_1$ large enough with $w_1 w_2 = \Theta(w)$. Take $\Gsamp_1$ as in the \LemmaR{reduced-universe}, and note that with the setting of parameters $w_2 = \Theta(\log w)$ and $w \geq (\log M)^{\Omega(1)}$, we have in fact $W = w^{O(1)}$. Consider the decomposition of the seed $S \in \{0,1\}^s$ as $S = (S_1, S_2) \in \{0,1\}^s = \{0,1\}^{s_1 + s_2}$, and let us focus on collection $\mathcal{A} = \Gsamp_1(S_1, *)$ of $W$ sequences in $[M]$. We know that, except with probability $\exp(-\Omega(w))$ over choice of the seed $S_1$, we most of the sequences in $\mathcal{A}$ is $C$-small --- only $2 \exp(-c w_2)$ fraction of all sequences is not $C$-small. Let us use $S_2$ to pick a uniformly random sequence of indices from $[W]$ --- to achieve this, we need $s_2 = \Theta(w_1 \log W)$. Note that if $\mathcal{A}$ indeed satisfies that $\#\{i : [\mathcal{A}]_i \text{ is $C$-small} < 2 \exp(-c w_2)\}$, then for a uniformly random indices $i_1, i_2, \ldots i_{w_1} \in [W]$, we have

    \begin{equation*}
        \P(\#\{ j : [\mathcal{A}]_{i_j}] \text{ is $C$-small}\} > w_1/2) \leq \binom{w_1}{w_1/2} \exp(-w_2)^{w_1/2} \leq 2^{w_1} \exp(-\Omega(w)) = \exp(-\Omega(w))
    \end{equation*}

    The total seed length is $s = s_1 + s_2 = \Oh(\log M + w) + \Oh( w_1 \log W) = \Oh(\log M + w + \frac{w}{\log w} \log w) = \Oh(\log M + w)$.
\end{proof}

\textbf{Acknowledgement}
The author thanks Raghu Meka for answering questions about the \cite{DBLP:conf/soda/Meka17} sampler construction, which inspired the proof of \LemmaR{sampler-lemma}, Preetum Nakkiran for helpful discussion during various stages of the work, Thibaut Horel for comments on parts of the writeup, Rohit Agrawal for additional discussions about the sampler construction.
The author is especially grateful to Jelani Nelson for many inspiring and helpful discussions and comments.

\bibliographystyle{alpha}
\bibliography{biblio}

\newpage
\appendix

\section{Appendix \label{sec:appendix}}

In this section, for the sake of completeness, we will discuss space complexity of the optimal algorithm with constant probability proposed in \cite{DBLP:conf/pods/KaneNW10} --- we use it as a building block in \Section{high-accuracy}. The existence of the algorithm as described below was proven in \cite{DBLP:conf/pods/KaneNW10}, as well as the fact that it returns correct answer with large constant probability. In what follows we describe the KNW algorithm, and provide more detailed analysis of the space complexity of this algorithm --- in the original paper, it was shown only that for $\frac{1}{\varepsilon^2} > \log n$, the total space consumption is $\Oh(\frac{1}{\varepsilon^2})$  with large constant probability. The condition $\frac{1}{\varepsilon^2} > \log n$ could have been assumed without loss of generality in the original setting, it is not the case in our application.

The correctness of this algorithm (with large constant probability) has been shown in \cite{DBLP:conf/pods/KaneNW10}, it also follows from the proofs in \Section{strong-tracking}. In particular in the proof of \TheoremR{strong-tracking} it is shown how to deduce strictly stronger statement. We do not discuss it in this appendix.

\knwreduction*
\begin{proof}
	Consider some $P = \frac{C_0}{\varepsilon^2}$ with large constant $C_0$ (depending on $C$) and some constant $D_0$ to be specified later. We will pick a random pairwise independent hash function $h_3 : [n] \to [P^2]$, and random $\polylog(P)$-wise independent hash function $h_4 : [P^2] \to [P]$. We set $h_2 := h_4 \circ h_3$, and we take $h_1 : [n] \to [n]$ to be 8-wise independent hash function. The total number of random bits necessary to access is $\Oh(\log n + \poly \log P) = \Oh(\log n + \poly\log\frac{1}{\varepsilon})$. 
	
%\todo{We can assume $\frac{1}{\varepsilon^2} \leq \log n$}.

We assume access to $\tilde{F}_0^{(t)}$ such that for every $t$ we have $F_0^t \leq \tilde{F}_0^{(t)} \leq C F_0^{(t)}$. For each $i \in [P]$ we consider $Z_i^{(t)} := \max \{\lsb(h_1(s)) : s \in S^{(t)}, h_2(s) = i\}$, and we store $\hat{Z}_i^{(t)} := \max \{-1, Z_i^{(t)} - D^{(t)}\}$, where $D^{(t)} := \log \tilde{F}_0^{(t)} - \log \frac{1}{\varepsilon^2} - D_0$. At time $t$, we consider $Q^{(t)} := \#\{ i : Z_i^{(t)} \geq 0\}$, and we report $\hat{F}^{(t)} := \Phi^{-1}_{K}(Q^{(t)}) 2^{D^{(t)}}$.

%	The correctness of this algorithm is analyzed in \cite{DBLP:conf/pods/KaneNW10}, and follows from our (stronger) analysis in \Section{strong-tracking} (\LemmaR{4-wise-indep-random-walk} and \LemmaR{balls-and-bins}). We will focus now on the space space complexity of this solution. 

	Let us consider total space used by all the counters $Z_i$. We need to use $\Oh(P)$ bits to store all counters for which value of $Z_i^{(t)}$ is $-1$, and space necessary to store counters with $Z_i^{(t)} \geq 0$ is bounded by $\sum_{i \in S^{(t)}} \log \max(1, \lsb(h_1(s)) - D^{(t)})$. Hence, the space used by the algorithm at time $t$ is bounded as
\begin{equation*}
W^{(t)} \leq \Oh(P) + \sum_{i \in S^{(t)}} \log \max(1, \lsb(h_1(s)) - D^{(t)}) \leq \Oh(P) + \sum_{i \in S^{(t)}} \max(0, \lsb(h_1(s)) - D^{(t)})
\end{equation*}
Note that for fixed $t$, variables $K_s := \max(0, \lsb(h_1(s)) - D^{(t)})$ are $8$-wise independent (because $h_1$ was). For fixed $s$ the random variable $K_s$ have strongly decaying tails, i.e. $\P(K_s > \lambda) = \P(\lsb(h_1(s)) > D^{(t)} + \lambda) \leq 2^{- D^{(t)} - \lambda} \leq \frac{2^{D_0}}{\varepsilon^2 \hat{F}^{(t)}} 2^{-\lambda}$. We can apply \LemmaR{supber-concentration} to appropriately rescaled $K_s$, with $p=8$ and $\mu_i = \frac{2^{D_0}}{\varepsilon^2 \hat{F}^{(t)}}$. With this choice of $\mu_i$ we have $\frac{1}{\varepsilon^2} \cdot \frac{2^{D_0}}{C} \leq \sum \mu_i \leq \frac{1}{\varepsilon^2} 2^{D_0}$, so the conditions of this lemma are satisfied, and the conclusion of the lemma yields
\begin{equation*}
	\P(\sum_{s \in S^{(t)}} K_s > C_{1} \frac{1}{\varepsilon^2}) \leq C_2 \varepsilon^4 2^{-D_0}
\end{equation*}
We can now take $D_0$ such that that $2^{-D_0} C_2 \leq 1$ to finish the proof of~\Equation{small-eps}.

Let us now turn our attention to the proof of~(\ref{eq:large-eps}). Observe that $\P(W^{(t)} > \frac{\lambda}{\varepsilon^2}) \leq \P(\exists i\in [P], s.t. \log \hat{Z}_i > C_1 \lambda) = \P(\exists s \in S^{(t)}, s.t. \lsb(h_1(s)) > D^{(t)} + 2^{C_1 \lambda})$, where $C_1$ is a constant depending on $C_0$. Using union bound, this latter quantity is bounded as follows
\begin{equation*}
	\P(\exists s \in S^{(t)}, s.t. \lsb(h_1(s)) > D^{(t)} + 2^{C \lambda}) \leq F_0^{(t)} \P(\lsb(h_1(s_0)) > P^{(t)} + 2^{C_1 \lambda}) \leq \frac{2^{D_0} C}{\varepsilon^2} 2^{- 2^{C_1 \lambda}}.
\end{equation*}

Which yields a bounds of form $\P(W^{(t)} \geq \frac{\lambda}{\varepsilon^2}) \leq \frac{C_2}{\varepsilon^2} 2^{-2^{C_1 \lambda}}$. On the other hand, by \LemmaR{supber-concentration} we have $\P(W^{(t)} \geq \frac{\lambda}{\varepsilon^2}) \leq \frac{\varepsilon^4}{\lambda^8}$. We can combine those two bounds for different ranges of $\lambda$ to get $\P(W^{(t)} \geq \frac{\lambda}{\varepsilon^2} ) \leq \exp(-e^{\Omega(\lambda)})$. Indeed, for $\lambda < \log \log \frac{1}{\varepsilon^2}$ we already have $\frac{1}{\varepsilon^4} < \exp(-e^{\Omega(\lambda)})$ whereas for $\lambda > \log \log \frac{1}{\varepsilon^2}$ we have $\frac{1}{\varepsilon^2} \exp(-e^{\lambda}) < \exp(-e^{\lambda/2})$.

Finally, to show~(\ref{eq:timestops}), note that $W^{(t_1)} \leq \sum_{i} \lceil \log (Z_{i}^{(t_1)} - D^{(t_1)}) \rceil \leq \sum_i \lceil \log (Z_{i}^{(t_2)} - D^{(t_1)}$, because $Z_{i}^{(t_2)} \geq Z_{i}^{(t_1)}$. By subadditivity of logarithm, we have $W^{(t_1)} \leq \sum_i \lceil \log (Z_{i}^{(t_2)} - D^{(t_2)} \rceil + P (D^{(t_2)} - D^{(t_1)}) \leq W^{(t_2)} + \Oh(P)$, where $(D^{(t_2)} - D^{(t_1)} = \Oh(1)$ follows from the fact that $D^{(t_2)} - D^{(t_1)} = \log \frac{\hat{F}_0^{(t_2)}}{\hat{F}_0^{(t_1)}}$, and since $\hat{F}^{(t)}$ is a constant approximation to $F_0^{(t)}$, this quantity is bounded by $\log \frac{F_0^{(t_2)}}{F_0^{(t_1)}} + \Oh(1)$, and by assumption on $t_2, t_1$, we have $\log \frac{F_0^{(t_2)}}{F_0^{(t_1)}} \leq \Oh(1)$.
\end{proof}

\subsection{Probabilistic inequalities}
\begin{lemma}
	Let $Z_1, \ldots Z_k$ be a sequence of non-negative, $p$-wise independent random variables (for some even $p$), satisfying $\|Z\|_p \leq C$. Then
	\begin{equation*}
		\|\sum (Z_i - \E Z_i)\|_p \lesssim C\sqrt{p} \sqrt{k}
	\end{equation*}
	\label{lem:p-wise-indep-sum}
\end{lemma}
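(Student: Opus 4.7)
The plan is the standard moment-method estimate for $p$-wise independent sums. First I would pass to the centered variables $Y_i := Z_i - \E Z_i$; by the triangle inequality in $L^p$ together with $|\E Z_i| \leq \|Z_i\|_p \leq C$, one has $\|Y_i\|_p \leq 2C$ and $\E Y_i = 0$. Since $p$ is even, the goal becomes bounding $\E(\sum_i Y_i)^p = \sum_{\mathbf{i}\in[k]^p}\E\prod_{j=1}^{p}Y_{i_j}$, and the rest of the argument is an estimate of this sum.

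Next, I would invoke the hypothesis: by $p$-wise independence each inner expectation factors as $\prod_v \E Y_v^{c(v)}$, where the product runs over distinct values $v$ appearing in the multi-index $\mathbf{i}$ and $c(v)$ is the multiplicity of $v$. Any $v$ with $c(v)=1$ contributes a factor $\E Y_v = 0$, so only multi-indices in which every distinct value has multiplicity $\geq 2$ survive; in particular the number $m$ of distinct values satisfies $m\leq p/2$. For surviving terms, Jensen's inequality gives $|\E Y_v^{c(v)}|\leq \|Y_v\|_p^{c(v)}\leq (2C)^{c(v)}$, so the inner product is bounded by $(2C)^p$ in absolute value.

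What remains is a counting argument. The number of multi-indices with exactly $m$ distinct values, each of multiplicity $\geq 2$, equals $\binom{k}{m}N(p,m)$ where $N(p,m)$ counts functions $[p]\to[m]$ with every fiber of size $\geq 2$. A convenient upper bound ---designate two ``mandatory'' elements of $[p]$ for each image, then distribute the remaining $p-2m$ elements freely among the $m$ images--- gives $N(p,m)\leq \tfrac{p!}{2^m(p-2m)!}\,m^{p-2m}$. The dominant contribution comes from $m=p/2$, where $N(p,p/2)=p!/2^{p/2}$ is the number of perfect matchings of $[p]$; Stirling then gives $\binom{k}{p/2}N(p,p/2)\lesssim (pk/e)^{p/2}\cdot\mathrm{poly}(p)$. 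Combining everything, $\E(\sum_i Y_i)^p \lesssim (2C)^p(pk/e)^{p/2}\cdot\mathrm{poly}(p)$, and taking a $p$-th root (the $\mathrm{poly}(p)^{1/p}$ factor becoming $O(1)$) yields $\|\sum_i Y_i\|_p\lesssim C\sqrt{pk}$.

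The main technical point, and what I expect to be the only mildly delicate step, is verifying that the sum over $m\in\{1,\ldots,p/2\}$ is genuinely dominated by its endpoint $m=p/2$ rather than by some intermediate value --- equivalently, that the constraint ``each fiber of size $\geq 2$'' binds tightly. Under the parametrization $\alpha:=m/p\in(0,1/2]$, the logarithm of the $m$-th summand is a smooth function of $\alpha$ whose derivative at $\alpha=1/2$ is non-negative in the interesting regime $k\gtrsim p$, so the maximum indeed sits at the endpoint; for the complementary regime $k\lesssim p$ the target bound is implied trivially by $\|\sum_i Y_i\|_p\leq \sum_i \|Y_i\|_p\leq 2Ck\lesssim C\sqrt{pk}$.
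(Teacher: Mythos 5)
Your proof attempts a genuinely different route from the paper. The paper observes that $\|\sum (Z_i-\E Z_i)\|_p^p$ is a polynomial of degree $p$ in the $Z_i$, so $p$-wise independence lets one replace them by fully independent copies with the same marginals; it then applies symmetrization and Khintchine's inequality to reduce to $\sqrt{p}\,\|\sum Y_i^2\|_{p/2}^{1/2}\leq\sqrt{p}\,(\sum\|Y_i\|_p^2)^{1/2}$, avoiding any combinatorial counting. Your approach is the more elementary direct moment expansion, which is also a standard route. The reduction to centered variables, the vanishing of singleton terms, the Jensen bound $(2C)^p$ on the surviving inner products, and the count $\binom{k}{m}N(p,m)$ are all fine.

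However, there is a genuine error in the step you flagged as delicate. You claim that, with $\alpha=m/p$, the log of the $m$-th summand has non-negative derivative at $\alpha=1/2$ when $k\gtrsim p$, so the maximum is at the endpoint. That is false. Your upper bound $N(p,m)\leq\frac{p!}{2^m(p-2m)!}m^{p-2m}$ contributes, via Stirling on $(p-2m)!$, an entropy term $-(1-2\alpha)\log(1-2\alpha)$ to the log-per-$p$, and its derivative $2\log(1-2\alpha)+2$ diverges to $-\infty$ as $\alpha\to \frac12^{-}$. Concretely, writing $\tilde T_m:=\binom{k}{m}\frac{p!}{2^m(p-2m)!}m^{p-2m}$, one has
\begin{equation*}
\frac{\tilde T_{p/2}}{\tilde T_{p/2-1}}=\frac{k-p/2+1}{p/2}\cdot\frac{1}{(p/2-1)^2}\approx\frac{8k}{p^3},
\end{equation*}
which is strictly less than $1$ for all $p\lesssim k\lesssim p^3$. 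So the maximum sits at an interior $m^*<p/2$ throughout that range, which is not covered by your fallback trivial bound (that only handles $k\lesssim p$).

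The argument is repairable, but not by endpoint dominance. One should instead verify that $\tilde T_m\lesssim K^p(pk)^{p/2}$ uniformly over all $m\leq p/2$ for some absolute $K$: after Stirling and substituting $\alpha=m/p$, the quantity $\frac1p\log\tilde T_m-\frac12\log(pk)$ equals
\begin{equation*}
\Bigl(\alpha-\tfrac12\Bigr)\log\tfrac{k}{p}+(1-3\alpha)\log\alpha-(1-2\alpha)\log(1-2\alpha)+O(1),
\end{equation*}
and each term is bounded above by an absolute constant on $\alpha\in(0,\tfrac12]$ once $k\geq p$ (the first is $\leq 0$, the second is $\leq \tfrac12\log 3$, the third is $\leq 1/e$). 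Summing the $\leq p/2$ terms and taking $p$-th roots then loses only a constant factor. With that replacement (plus your trivial bound for $k\lesssim p$) the proof goes through, but as written the justification of the dominant-term step is incorrect.
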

\begin{proof}
	Take $Y_i$ independent, with marginal distribution $Y_i \sim Z_i - \E Z_i$. Because $\|\sum (Z_i - \E Z_i)\|_p^p$ is a polynomial of degree $p$ in variables $Z_i$, and $Y_i$ are independent, it follows that $\|\sum (Z_i - \E Z_i)\|_p = \|\sum Y_i\|_p$, and it is enough to bound this second quantity. We can use symmetrization argument, to deduce that $\|\sum Y_i\|_p \lesssim \|\sum \varepsilon_i Y_i\|_p$  where $\varepsilon_i$ are independent random signs. 
	
	Indeed, consider $\tilde{Y}_i$ distributed identically as $Y_i$ and independent from those, then 
	\begin{align*}
		\|\sum Y_i\|_p & = \| \sum (Y_i - \E \tilde{Y}_i)\|_p 
		 \leq \|\sum (Y_i - \tilde{Y}_i)\|_p \\
		& = \| \sum \varepsilon_i (Y_i - \tilde{Y}_i)\|_p 
		 \leq 2 \|\sum \varepsilon_i Y_i\|_p.
	\end{align*}

	We can now condition on $Y_i$ and use Khnitchine inequality \cite[Theorem 12.3.1]{Garling} to bound
	\begin{align*}
		\|\sum \varepsilon_i Y_i\|_p & = \left( \E (\sum \varepsilon_i Y_i)^p \right)^{1/p} \\
		& \lesssim \sqrt{p} \left( \E(\sum Y_i^2)^{p/2} \right)^{1/p} \\
		& = \sqrt{p} \sqrt{\| \sum Y_i^2 \|_{p/2}} \\
		& \leq \sqrt{p} \sqrt{\sum \|Y_i^2\|_{p/2}} \\
		& \leq \sqrt{p} \sqrt{ \sum \|Y_i\|_p^{2}} \\
		& \leq C \sqrt{p} \sqrt{k}
	\end{align*}
\end{proof}

\begin{lemma}
	\label{lem:supber-moment-bound}
	For every $p$ there exist $C_p, \tilde{C}_p$ such that if non-negative independent random variables $Z_1, \ldots Z_k$ satisfy $\E Z_i^{s} \leq \mu_i$ for all $1\leq s \leq p$, where $\sum \mu_i \geq 1$ then
	\begin{equation}
		\|\sum Z_i\|_p \leq C_p \sum \mu_i \label{eq:small-p-norm}
	\end{equation}
	and moreover
	\begin{equation}
		\|\sum (Z_i - \E Z_i)\|_p \leq \tilde{C}_p \sqrt{\sum \mu_i} \label{eq:sqrt-deviation}
	\end{equation}
\end{lemma}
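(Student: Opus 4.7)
My proposal is to establish both bounds by a moment-method / Rosenthal-style expansion, exploiting the two-sided interpretation of the hypothesis $\E Z_i^s \le \mu_i$ (for $1\le s\le p$): it simultaneously controls the first moment (so each $Z_i$ is ``small on average'') and the $p$-th moment (so each $Z_i$ is ``well-concentrated''). The clean way to do both (\ref{eq:small-p-norm}) and (\ref{eq:sqrt-deviation}) is to expand the $p$-th moment and group terms by the multiset of indices.

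For (\ref{eq:small-p-norm}), I would assume $p$ is even (else pass to $\|\cdot\|_p\le \|\cdot\|_{p+1}$) and write
\[
\E\Bigl(\sum_i Z_i\Bigr)^{\!p} \;=\; \sum_{(j_1,\dots,j_p)\in[k]^p}\E\prod_{\ell=1}^p Z_{j_\ell}.
\]
By independence the inner expectation factors over distinct indices in $(j_1,\dots,j_p)$, and for each distinct index $i$ appearing with multiplicity $r_i\ge 1$ we use $\E Z_i^{r_i}\le \mu_i$. Collecting terms by the number $m$ of distinct indices and using the elementary bound $\sum_{|I|=m}\prod_{i\in I}\mu_i \le \tfrac{1}{m!}(\sum_i\mu_i)^m$ gives
\[
\E\Bigl(\sum_i Z_i\Bigr)^{\!p} \;\le\; \sum_{m=1}^{p} S(p,m)\,\Bigl(\sum_i\mu_i\Bigr)^{\!m},
\]
where $S(p,m)$ is the Stirling number of the second kind (absorbing the $m!$ from the number of ordered placements of $[p]$ onto an $m$-set of distinct indices). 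The hypothesis $\sum_i\mu_i\ge 1$ lets us bound $(\sum\mu_i)^m \le (\sum\mu_i)^p$, so the sum is $\le B_p (\sum\mu_i)^p$ with $B_p$ the $p$-th Bell number. Taking $p$-th roots yields (\ref{eq:small-p-norm}) with $C_p = B_p^{1/p}$.

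For (\ref{eq:sqrt-deviation}) the same expansion is applied to the centered variables $W_i := Z_i - \E Z_i$, using $\|W_i\|_s \le 2\mu_i^{1/s}$ (so $\E|W_i|^s \le 2^s\mu_i$). The crucial gain from centering is that any tuple in which \emph{some} distinct index appears exactly once contributes $0$, since $\E W_i = 0$ and $W_i$'s are independent. Hence only tuples where every distinct index has multiplicity $\ge 2$ survive, which forces $m\le p/2$. Running the same combinatorial bookkeeping as above (picking up a $2^p$ factor from the $\E|W_i|^{r_i}\le 2^{r_i}\mu_i$ estimates) gives
\[
\E\Bigl(\sum_i W_i\Bigr)^{\!p} \;\le\; 2^p\sum_{m=1}^{p/2} S(p,m)\,\Bigl(\sum_i\mu_i\Bigr)^{\!m} \;\le\; (2 B_p^{1/p})^p\Bigl(\sum_i\mu_i\Bigr)^{\!p/2},
\]
where we again used $\sum\mu_i\ge 1$ to upper bound $(\sum\mu_i)^m$ by $(\sum\mu_i)^{p/2}$ for $m\le p/2$. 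This yields (\ref{eq:sqrt-deviation}) with $\tilde C_p = 2B_p^{1/p}$.

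I expect no serious obstacle here; the whole argument is essentially a specialization of Rosenthal's inequality to our moment hypothesis, and one could alternatively invoke Rosenthal as a black box (treating the two cases max$(\sum\E Z_i,(\sum\E Z_i^p)^{1/p})$ and max$((\sum\E Z_i^2)^{1/2},(\sum\E|W_i|^p)^{1/p})$), with the condition $\sum\mu_i\ge 1$ used only at the end to collapse the max. The most delicate step is verifying the combinatorial identity $\sum_{|I|=m}\prod_{i\in I}\mu_i \le (\sum\mu_i)^m/m!$ and keeping careful track of the $p$-dependent constants through the Stirling-number sums; but these are standard manipulations.
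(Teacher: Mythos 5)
Your proof is correct, but it takes a genuinely different route from the paper's. The paper proves the lemma by a dyadic induction on $p$ (restricted to powers of two): centering, symmetrization by Rademacher signs, Khintchine's inequality to reduce $\|\sum \varepsilon_i Y_i\|_p$ to $\|\sum Z_i^2\|_{p/2}^{1/2}$, and then the inductive hypothesis applied to the squared variables $Z_i^2$ (which satisfy the same moment hypothesis up to order $p/2$). You instead carry out a direct Rosenthal-style moment expansion: expand $\E(\sum Z_i)^p$ over ordered $p$-tuples, factor by independence, apply $\E Z_i^{r}\le\mu_i$ to each distinct index, and control the combinatorics with Stirling and Bell numbers, exploiting $\sum\mu_i\ge 1$ to collapse $(\sum\mu_i)^m$ to $(\sum\mu_i)^p$. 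The centered version then follows by the usual observation that multiplicity-one indices vanish by $\E W_i=0$, forcing at most $p/2$ distinct indices and hence the $\sqrt{\sum\mu_i}$ scaling. One small point worth making explicit in your sketch: the estimate $\|W_i\|_s\le 2\mu_i^{1/s}$ hides a use of the power-mean inequality $\E Z_i\le (\E Z_i^s)^{1/s}\le\mu_i^{1/s}$; without it one would be tempted to write $\E Z_i\le\mu_i$, which fails to give the claimed bound when $\mu_i>1$. Also note that your reduction to even $p$ via $\|\cdot\|_p\le\|\cdot\|_{p+1}$ would formally need moments up to $p+1$, whereas the paper's dyadic induction stays within the given moment budget; since the lemma is applied with fixed small $p$ (namely $p=8$), this is immaterial here, but it is a slight difference. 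Overall your argument is more self-contained (no symmetrization or Khintchine), while the paper's is recursive and shorter to state; both give constants $C_p,\tilde C_p$ growing with $p$, which suffices for the application.
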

\begin{proof}
	It is enough to prove ineqialities~(\ref{eq:small-p-norm}) and~(\ref{eq:sqrt-deviation}) for all values $p$ that are powers of two. We will proceed by showing~(\ref{eq:small-p-norm}) by induction over $p$. The case $p=1$ is trivial: $\|\sum Z_i\|_1 = \sum \E Z_i \leq \sum \mu_i$. For $p>1$, let us take $Y_i := Z_i - \E Z_i$. We have
	\begin{equation}
		\|\sum Z_i\|_p \leq \sum \E Z_i + \|\sum Y_i\|_p \label{eq:pull-mean}
	\end{equation}
	Now we can use standard symmetrization argument to bound $\|\sum Y_i\|_p$. Let us take $\tilde{Y}_i$ to be independent random variables with the same distribution as $Y_i$, and $\varepsilon_i$ to be independent uniform $\pm 1$ random variables. We have
	\begin{align}
		\|\sum Y_i\|_p &= \|\sum (Y_i - \E \tilde{Y}_i)\|_p \leq \|\sum (Y_i - \tilde{Y}_i)\|_p \nonumber \\
		& = \|\sum \varepsilon_i (Y_i - \tilde{Y}_i)\|_p \leq 2 \|\sum \varepsilon_i Y_i\|_p \label{eq:symmetrization}
	\end{align}
	We can now condition on $Y_i$ and use Khintchine inequality to deduce
	\begin{align*}
		\|\sum \varepsilon Y_i\|_p = \left( \E (\sum \varepsilon_i Y_i)^p\right)^{1/p}  
		\lesssim \sqrt{p} \| \sum Y_i^2\|_{p/2}^{1/2} \leq \sqrt{p} \|\sum Z_i^2 \|_{p/2}^{1/2}
	\end{align*}
	By applying inductive hypothesis to random variables $Z_i^2$ we obtain
	\begin{equation*}
		\|\sum Y_i\|_p \leq 2 \|\sum \varepsilon Y_i\|_p \lesssim \sqrt{p} \sqrt{C_{p/2}} \sqrt{\sum \mu_i}
	\end{equation*}
	proving inequality~(\ref{eq:sqrt-deviation}). Finally, we can can compose this last inequality with inequality~(\ref{eq:pull-mean}), to deduce
	\begin{equation*}
		\|\sum Z_i\|_p \leq \sum \mu_i + K \sqrt{p} \sqrt{C_{p/2}} \sqrt{\sum \mu_i} \leq (1 + K \sqrt{p C_{p/2}})(\sum \mu_i)
	\end{equation*}
	which completes the proof of inductive hypothesis with $C_{p} = (1 + K \sqrt{p C_{p/2}})$.
\end{proof}

\begin{lemma}
	\label{lem:moments-of-subexp}
	Let $Z$ be a non-negative random variable satisfying for some $T$, that $\P(Z > \lambda T) \leq \mu \exp(-\lambda)$. Then $\E Z^p \leq p! T^p \mu$.
\end{lemma}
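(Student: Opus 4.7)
The plan is to use the layer-cake representation of $\E Z^p$, apply the hypothesized tail bound pointwise, and recognize the resulting integral as the Gamma function.

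First I would write
\begin{equation*}
    \E Z^p = \int_0^\infty p z^{p-1} \P(Z > z) \, \mathrm{d}z,
\end{equation*}
which holds for any non-negative random variable $Z$. Next I would change variables via $z = \lambda T$, so $\mathrm{d}z = T\,\mathrm{d}\lambda$, to obtain
\begin{equation*}
    \E Z^p = p T^p \int_0^\infty \lambda^{p-1}\, \P(Z > \lambda T)\, \mathrm{d}\lambda.
\end{equation*}
Then I would plug in the assumed bound $\P(Z > \lambda T) \leq \mu \exp(-\lambda)$ valid for all $\lambda \geq 0$, giving
\begin{equation*}
    \E Z^p \leq p T^p \mu \int_0^\infty \lambda^{p-1} e^{-\lambda}\, \mathrm{d}\lambda = p T^p \mu \cdot \Gamma(p) = p!\, T^p \mu,
\end{equation*}
where the last equality uses $\Gamma(p) = (p-1)!$ for integer $p$, so $p \cdot \Gamma(p) = p!$.

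There is really no obstacle here: the statement is essentially a restatement of the standard fact that integrating a subexponential tail against a polynomial weight produces a Gamma integral. The only thing one should double-check is that the hypothesized tail bound is taken to hold for all $\lambda \geq 0$ (so in particular for small $\lambda$, where $\mu e^{-\lambda}$ may exceed $1$, the bound is still valid just because $\P(\cdot) \leq 1 \leq \mu e^{-\lambda}$ whenever $\mu \geq e^{\lambda}$, and otherwise one uses the hypothesis directly); in either reading, the displayed computation above yields the claimed inequality $\E Z^p \leq p!\, T^p \mu$.
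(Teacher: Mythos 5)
Your proof is correct and follows essentially the same route as the paper: layer-cake representation for $\E Z^p$, substitute the tail bound, and evaluate the resulting Gamma integral (the paper does the Gamma integral by repeated integration by parts rather than invoking $\Gamma(p)=(p-1)!$, but that is the same computation). Incidentally, your version is slightly more careful than the paper's writeup, which drops the leading factor of $p$ in the layer-cake identity $\E Z^p = \int_0^\infty p\,t^{p-1}\P(Z>t)\,\mathrm{d}t$; carrying it correctly, as you do, is what yields the stated $p!$ rather than $(p-1)!$.
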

\begin{proof}
	We can assume without loss of generality that $T=1$. We can bound
	\begin{align*}
		\E Z^p & = \int_0^{\infty} t^{p-1} \P(Z > t) \,\d t\\
		& \leq \mu \int_0^{\infty} t^{p-1} e^{-t} \, \d t.
	\end{align*}
	Now by repeatedly applying integration by parts, we obtain
	\begin{align*}
		\int_0^{\infty} t^{p-1} e^{-t} \, \d t
		& = (p-1) \int_0^{\infty} t^{p-2} e^{-t} \, \d t \\
		& = \cdots \\
		& = (p-1)! \int_0^{\infty} e^{-t}\, \d t = (p-1)!,
	\end{align*}
	which completes the proof of the desired inequality.
\end{proof}
\begin{corollary}
	\label{cor:supber-moment-bound}
	Let $Z_1, \ldots Z_k$ be a sequence of non-negative random variables satisftying for some $T$ that $\P(Z_i > T \lambda) \leq \mu_i \exp(-\lambda)$. Then $\|\sum (Z_i - \E Z_i)\|_p \lesssim \tilde{C}_p T \sqrt{\sum \mu_i}$, where $\tilde{C}_p$ is a constant that depends only on $p$.
\end{corollary}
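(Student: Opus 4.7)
The plan is to realize this corollary as a straightforward composition of the two preceding lemmas.

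First, normalize by setting $W_i := Z_i/T$, so that $\P(W_i > \lambda) \leq \mu_i e^{-\lambda}$; the problem reduces to the case $T=1$, and the final bound will just scale back by $T$. Applying Lemma~\ref{lem:moments-of-subexp} to each $W_i$ yields the uniform moment bound $\E W_i^s \leq s!\,\mu_i \leq p!\,\mu_i$ for every $1 \leq s \leq p$. Setting $\tilde{\mu}_i := p!\,\mu_i$, this is exactly the hypothesis $\E W_i^s \leq \tilde{\mu}_i$ (for all $1\leq s \leq p$) required by Lemma~\ref{lem:supber-moment-bound}, whose deviation inequality~\eqref{eq:sqrt-deviation} then gives
\[
    \Bigl\| \textstyle\sum_i (W_i - \E W_i) \Bigr\|_p \;\leq\; \tilde{C}_p \sqrt{\textstyle\sum_i \tilde{\mu}_i} \;=\; \tilde{C}_p \sqrt{p!}\, \sqrt{\textstyle\sum_i \mu_i}.
\]
Multiplying through by $T$ and absorbing the $\sqrt{p!}$ factor into the $p$-dependent constant recovers the claimed inequality.

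The only subtle point is the side condition $\sum_i \tilde{\mu}_i \geq 1$ under which Lemma~\ref{lem:supber-moment-bound} is stated (it enters so that the induction on $p$ in its proof can absorb a $\sqrt{\sum \tilde{\mu}_i}$ term into $\sum \tilde{\mu}_i$). In the degenerate regime $\sum_i \mu_i < 1/p!$ I would handle the deviation directly: by Lemma~\ref{lem:moments-of-subexp} we have $\|W_i\|_p \leq (p!\,\mu_i)^{1/p}$, so the triangle inequality gives $\|\sum_i (W_i - \E W_i)\|_p \leq 2 \sum_i \|W_i\|_p$, and a short computation (using concavity of $x \mapsto x^{1/p}$, or Rosenthal's inequality for independent summands) converts this into a bound of the desired $\sqrt{\sum \mu_i}$ shape at the price of another $p$-dependent factor. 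An alternative is to adjoin deterministic dummy summands to push $\sum_i \tilde{\mu}_i$ above $1$ without altering the left-hand side.

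There is no real conceptual obstacle; the corollary is essentially a repackaging of Lemma~\ref{lem:supber-moment-bound} via the tail-to-moments conversion of Lemma~\ref{lem:moments-of-subexp}. The only care required is tracking how the $p$-dependent constants propagate through the rescaling and the two lemma applications, together with disposing of the small-$\sum \mu_i$ edge case as above.
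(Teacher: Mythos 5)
Your main argument is exactly the paper's proof: the paper's entire justification is ``Follows directly from Lemma~\ref{lem:moments-of-subexp} and Lemma~\ref{lem:supber-moment-bound},'' and you fill in precisely that composition (rescale by $T$, convert tails to moments with $\E W_i^s \le s!\,\mu_i \le p!\,\mu_i$, feed $\tilde{\mu}_i = p!\,\mu_i$ into \eqref{eq:sqrt-deviation}). That part is correct.

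However, the digression on the small-$\sum\mu_i$ regime is based on a false premise, and neither of your proposed fixes can work because the claimed inequality is actually false there. Take $T=1$, $k=1$, and $Z_1$ with $\P(Z_1 > \lambda) = \mu\,e^{-\lambda}$ for a small $\mu$. Then $\E|Z_1 - \E Z_1|^p \gtrsim \E Z_1^p \approx p!\,\mu$, so $\|Z_1 - \E Z_1\|_p \approx (p!\,\mu)^{1/p}$, which for $p>2$ is $\gg \sqrt{\mu}$ as $\mu \to 0$. Rosenthal's inequality (which is sharp) tells the same story: for centered independent nonnegative summands the $p$-norm is $\lesssim_p \max\bigl((\sum\mu_i)^{1/2},\,(\sum\mu_i)^{1/p}\bigr)$, and when $\sum\mu_i$ is small the second term dominates. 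So the ``short computation via concavity'' you allude to does not exist, and adjoining dummy summands to push $\sum\tilde{\mu}_i$ above $1$ inflates the right-hand side (you would conclude $\lesssim \sqrt{\sum\mu_i + 1}$, not $\lesssim\sqrt{\sum\mu_i}$). The correct reading is that the side condition $\sum\mu_i \ge 1$ from Lemma~\ref{lem:supber-moment-bound} is silently inherited by the corollary (and by Lemma~\ref{lem:supber-concentration}, whose conclusion is anyway vacuous for small $\sum\mu_i$); in the paper's application inside Theorem~\ref{thm:knw-reduction}, $\sum\mu_i$ is of order $1/\varepsilon^2$, so this is satisfied with room to spare.
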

\begin{proof}
	Follows directly from Lemma~\ref{lem:moments-of-subexp} and Lemma~\ref{lem:supber-moment-bound}.
\end{proof}

\begin{lemma}
	\LemmaName{supber-concentration}
	Let $Z_1, \ldots Z_k$ be a sequence of $p$-wise independent non-negative random variables, satisfying $\P(Z_i > \lambda) \leq \mu_i \exp(-\lambda)$. Then for some universal constant $K$, and constant $\tilde{C}_p$ depending only on $p$ we have for all $\lambda > K$ following tail bound
	\begin{equation*}
		\P(\sum Z_i > \lambda \sum \mu_i) \leq \frac{1}{\lambda^p} \left(\frac{\tilde{C}_p}{\sum \mu_i}\right)^{p/2}.
	\end{equation*}
\end{lemma}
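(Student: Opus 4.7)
The plan is to reduce the tail bound to a $p$-th moment bound on the centered sum, and then apply (a $p$-wise independent version of) Corollary~\ref{cor:supber-moment-bound}. First, note that the hypothesis $\P(Z_i > \lambda) \leq \mu_i e^{-\lambda}$ implies $\E Z_i \leq \mu_i$ (integrating the tail, or applying Lemma~\ref{lem:moments-of-subexp} with $s=1$), so for any $\lambda > 2$ we have
\begin{equation*}
	\P\Bigl(\sum_i Z_i > \lambda \sum_i \mu_i\Bigr) \leq \P\Bigl(\sum_i (Z_i - \E Z_i) > (\lambda - 1) \sum_i \mu_i\Bigr).
\end{equation*}
Applying Markov's inequality to the $p$-th power of the right-hand event, this is at most $\bigl\|\sum_i(Z_i - \E Z_i)\bigr\|_p^p \big/ ((\lambda-1) \sum \mu_i)^p$.

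The main step is to upgrade Corollary~\ref{cor:supber-moment-bound} from the fully independent setting to the $p$-wise independent setting. The key observation is that $\E\bigl(\sum_i (Z_i - \E Z_i)\bigr)^p$ is a polynomial of degree $p$ in the variables $Z_i$, so expanding it into monomials, each term depends on at most $p$ of the $Z_i$'s; by $p$-wise independence its expectation equals the expectation taken under a fully independent coupling with the same marginals. Hence Lemma~\ref{lem:supber-moment-bound} (applied to the independent coupling) together with Lemma~\ref{lem:moments-of-subexp} (to turn subexponential tails into moment bounds $\E Z_i^s \leq s! \mu_i$) gives $\bigl\|\sum_i (Z_i - \E Z_i)\bigr\|_p \leq \tilde{C}_p \sqrt{\sum \mu_i}$ even under $p$-wise independence only.

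Combining these two bounds yields
\begin{equation*}
	\P\Bigl(\sum_i Z_i > \lambda \sum_i \mu_i\Bigr) \leq \frac{\tilde{C}_p^{\,p} \bigl(\sum_i \mu_i\bigr)^{p/2}}{(\lambda - 1)^p \bigl(\sum_i \mu_i\bigr)^p} = \frac{\tilde{C}_p^{\,p}}{(\lambda - 1)^p \bigl(\sum_i \mu_i\bigr)^{p/2}},
\end{equation*}
and taking $K$ large enough (e.g.\ $K = 2$) so that $\lambda - 1 \geq \lambda/2$ absorbs the shift into the constant, we get exactly the claimed form $\frac{1}{\lambda^p}\bigl(\tilde{C}_p'/\sum \mu_i\bigr)^{p/2}$ after redefining $\tilde{C}_p$ appropriately.

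I do not anticipate a genuine obstacle beyond carefully justifying the $p$-wise independent extension: one must check that the induction inside Lemma~\ref{lem:supber-moment-bound} only ever invokes moments of polynomials of degree at most $p$ in the $Z_i$'s (the symmetrization step produces $\varepsilon_i$'s that are independent of the $Z_i$'s, and the Khintchine-driven step reduces to $\|\sum Z_i^2\|_{p/2}$, which still involves only degree-$p$ monomials in the original $Z_i$'s), so the argument goes through verbatim. Everything else is routine algebra.
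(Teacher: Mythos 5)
Your proof is correct and follows essentially the same route as the paper's: bound the centered tail via Markov applied to the $p$-th moment, then invoke Corollary~\ref{cor:supber-moment-bound} for the moment bound. The one place you go beyond the paper is in explicitly noting that Lemma~\ref{lem:supber-moment-bound} (and hence the corollary) is stated for fully independent variables, and that the step to $p$-wise independence is justified because $\E\bigl|\sum_i(Z_i-\E Z_i)\bigr|^p$ (for even $p$) is the expectation of a degree-$p$ polynomial in the $Z_i$'s, hence agrees with its value under an independent coupling --- the paper invokes the corollary directly without flagging this. That observation is exactly right and closes a small gap left implicit in the paper; the only caveat worth adding is that this degree-$p$-polynomial argument requires $p$ to be even (otherwise $|\cdot|^p$ is not a polynomial), which can be handled by rounding $p$ down to an even integer at the cost of a constant.
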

\begin{proof}
	Note that for random variables as above we have $\E Z_i \leq K_0 \mu_i$ for some universal constant $\mu_i$. Let us pick $K = 2 K_0 $, such that $\P(\sum Z_i > \lambda \sum \mu_i) \leq \P(\sum (Z_i - \E Z_i) > \frac{\lambda}{2} \sum \mu_i)$. By Chebyshev inequality we have
	\begin{equation*}
		\P(\sum (Z_i - \E Z_i) > \frac{\lambda}{2} \sum \mu_i) \leq \left(2 \frac{\|\sum (Z_i - \E Z_i)\|_p}{\lambda \sum \mu_i}\right)^{p}.
	\end{equation*}
	
	We can bound the numerator in this expression using Corollary~\ref{cor:supber-moment-bound}, i.e. $\|\sum (Z_i - \E Z_i)\|_p \leq  \tilde{C}_p \sqrt{\sum \mu_i}$, to deduce desired probability bound.
\end{proof}

\begin{lemma}
		\label{lem:azuma}
		Let $f : \Sigma^n \to \bR$ be function with bounded differences (i.e. for any $\sigma_1, \ldots \sigma_n$ and $\sigma'_j$ we have $|f(\sigma_1, \ldots \sigma_n) - f(\sigma_1, \ldots, \sigma'_j, \ldots \sigma_n)| \leq 1$), and let $X_1, \ldots X_n$ be a sequence of independent random variables.  Consider a Doob martingale $S_j := \E [ f(X_1, \ldots X_n) ~|~ X_1, \ldots X_j]$. Then for any $i, j \leq n$ we have
		\begin{equation*}
			\|S_i - S_j\|_p \lesssim \sqrt{p} \sqrt{i - j}
		\end{equation*}
\end{lemma}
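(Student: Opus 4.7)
The plan is to reduce the statement to a standard $L^p$ bound for martingales with bounded differences, which in turn follows from Azuma--Hoeffding by integrating tails.

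First, I would decompose $S_i - S_j$ as a telescoping sum of martingale differences $D_k := S_k - S_{k-1}$ for $k = j+1, \ldots, i$, and verify that these differences are almost surely bounded by $1$. Concretely, letting $X^{(k)} = (X_1, \ldots, X_{k-1}, X'_k, X_{k+1}, \ldots, X_n)$ where $X'_k$ is an independent copy of $X_k$, one has
\begin{equation*}
    D_k \;=\; \E[f(X_1,\ldots,X_n) - f(X^{(k)}) \mid X_1, \ldots, X_k],
\end{equation*}
since conditioning $X^{(k)}$ on $X_1, \ldots, X_{k-1}$ (averaging over $X'_k$) is the same as conditioning $(X_1,\ldots,X_n)$ on $X_1, \ldots, X_{k-1}$. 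By the bounded differences assumption the quantity inside the conditional expectation is pointwise at most $1$ in absolute value, hence $|D_k| \leq 1$ a.s.

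Next, I would invoke Azuma--Hoeffding for the martingale $(S_{j+k} - S_j)_{k \geq 0}$: this gives $\P(|S_i - S_j| > t) \leq 2\exp(-t^2/(2(i-j)))$. Integrating the tail yields
\begin{equation*}
    \E |S_i - S_j|^p \;=\; \int_0^\infty p t^{p-1}\, \P(|S_i - S_j| > t)\, \d t \;\lesssim\; p \int_0^\infty t^{p-1} \exp\!\left(-\tfrac{t^2}{2(i-j)}\right) \d t,
\end{equation*}
and substituting $u = t^2/(2(i-j))$ the integral evaluates to $(2(i-j))^{p/2}\,\Gamma(p/2)/2$. Taking $p$-th roots and using Stirling's estimate $\Gamma(p/2)^{1/p} \lesssim \sqrt{p}$, one obtains $\|S_i - S_j\|_p \lesssim \sqrt{p}\,\sqrt{i - j}$, as claimed.

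There is no real obstacle here: the only mild point is justifying $|D_k| \leq 1$ via the standard independent-copy rewriting of the Doob martingale increment. Alternatively, one could mimic the symmetrization plus Khintchine argument used in Lemma~\ref{lem:supber-moment-bound} applied to the martingale differences (whose square sum is deterministically bounded by $i-j$), avoiding any appeal to Azuma--Hoeffding; both routes deliver the same constant $\sqrt{p}$ dependence. Since the lemma is only used as a black box inside Lemma~\ref{lem:bounded-increments} to bound $\E(\hat S_i - \hat S_j)^4$, the tightness of constants is irrelevant.
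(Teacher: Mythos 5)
Your proof is correct and follows essentially the same route as the paper: establish the a.s.\ bound $|S_k - S_{k-1}| \leq 1$ from the bounded-differences hypothesis, apply Azuma--Hoeffding to get a subgaussian tail with variance proxy $\Oh(i-j)$, and convert the tail into the $L^p$ bound with the $\sqrt{p}$ factor; the paper's proof is just a terser statement of the same chain, and you have correctly filled in the details (the independent-copy rewriting of the Doob increment and the Gamma-function integration).
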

\begin{proof}
		Note that martingale $S_j$ has bounded increments: $S_j - S_{j-1} \leq 1$ with probability 1. By Azuma inequality random variable $S_i - S_j$ is subgaussian with variance $\Oh(|i-j|)$, and hence the moments are bounded by those of corresponding gaussian.
\end{proof}

\end{document}